\newtheorem{Definition}{Definition}
\newtheorem{Theorem}{Theorem}
\newtheorem{Lemma}{Lemma}
\theoremstyle{remark}
\theoremstyle{definition}
\newtheorem{remark}{Remark}
\newcommand\fs@spaceruled{\def\@fs@cfont{\bfseries}\let\@fs@capt\floatc@ruled
  \def\@fs@pre{\vspace{5\baselineskip}\hrule height.8pt depth0pt \kern2pt}%
  \def\@fs@post{\kern2pt\hrule\relax}%
  \def\@fs@mid{\kern2pt\hrule\kern2pt}%
  \let\@fs@iftopcapt\iftrue}
\newcommand{\xv}{\mathbf{x}}
\newcommand{\yv}{\mathbf{y}}
\newcommand{\wv}{\mathbf{w}}
\newcommand{\Wv}{\mathbf{W}}
\newcommand{\Hv}{\mathbf{H}}
\newcommand{\ov}{\mathbf{0}}
\newcommand{\Fv}{\mathbb{F}}
\newcommand{\Rb}{\mathbb{R}}
\newcommand{\Eb}{\mathbb{E}}
\newcommand{\Zb}{\mathbb{Z}}
\newcommand{\Nb}{{\mathbb N}}
\newcommand{\Rc}{\mathcal{R}}
\title{Precoding and Scheduling for AoI Minimization in MIMO Broadcast Channels}
\author{Songtao~Feng, and Jing~Yang~\IEEEmembership{Member,~IEEE}%
\thanks{Songtao~Feng and Jing~Yang are with the School of Electrical Engineering and Computer Science, The Pennsylvania State University, University Park, PA, 16802, USA. Email: \{sxf302, yangjing\}@psu.edu. This work was presented in part in the 2020 IEEE International Symposium on Information Theory~\cite{Feng:ISIT:2020}, and was supported in part by the US National Science Foundation (NSF) under Grants CNS-1956276 and CNS-2114542.\newline
Copyright (c) 2017 IEEE. Personal use of this material is permitted.  However, permission to use this material for any other purposes must be obtained from the IEEE by sending a request to pubs-permissions@ieee.org.}
}
\begin{document}
\IEEEoverridecommandlockouts
\maketitle
\thispagestyle{empty}

\begin{abstract}
In this paper, we consider a status updating system where updates are generated at a constant rate at $K$ sources and sent to the corresponding recipients through a noise-free broadcast channel. We assume that perfect channel state information (CSI) is available at the transmitter before each transmission, and the transmitter is able to utilize the CSI information to precode the updates. Our object is to design optimal precoding schemes to minimize the summed average \emph{age of information} (AoI) at the recipients. Under various assumptions on the size of each update $B$, the number of transmit antennas $M$, and the number of receive antennas $N$ at each user, this paper identifies the corresponding age-optimal precoding and transmission scheduling strategies. Specifically, for the case when $N=1$, a round-robin based updating scheme is shown to be optimal. For the two-user systems with $N>B$ or $M\notin[N:2N]$, framed updating schemes are proven to be optimal. For other cases in the two-user systems, a framed alternating updating scheme is proven to be $2$-optimal.
\end{abstract}

\begin{IEEEkeywords}
Age of Information (AoI), MIMO broadcast channel, precoding, scheduling.
\end{IEEEkeywords}

\section{introduction}
Motivated by a variety of network applications requiring timely information, the notion of Age of Information (AoI) is introduced recently \cite{infocom/KaulYG12}. AoI characterizes the freshness of information from the destination's perspective. Specifically, at time $t$, the AoI is defined as the time that has elasped since the latest received update was generated.

A great amount of work focuses on AoI analysis for different queueing models, in which  updates are generated randomly at the source and transmitted to the destination with a random ``service time" through a noiseless channel based on the queueing management model. For single-server systems, the correspondign AoI has been analyzed in the single-source single-server queues~\cite{infocom/KaulYG12}, the $M/M/1$ Last-Come First-Served (LCFS) queue with preemption in service~\cite{ciss/KaulYG12}, the $M/M/1$ queues with multiple sources~\cite{YatesK16,Pappas:2015:ICC,Moltafet:AoIinMM1:2019}, the $M/G/1$ queues~\cite{Yates:AoIinMG1:ISIT:2019,Moltafet:AoI:ITW:2019,Moltafet:AoI:6GSummit:2020}, the $G/G/1$ queues~\cite{Soysal:AoIinGG11:2019,Champati:AoIinGG1:2019}, the LCFS queue with gamma-distributed service time and Poisson update symbol arrivals~\cite{isit/NajmN16}, etc. As for the multiple-server queue, the AoI analysis has been studied in~\cite{isit/KamKE13, isit/KamKE14,tit/KamKNE16,Javani:AoISHS:2019}. For multi-hop networks, the optimality properties of a preemptive Last Generated First Served (LGFS) service discipline are established in~\cite{isit/BedewySS16}, and explicit age distributions based on a stochastic hybrid system approach are derived in~\cite{isit/Yates15}.
For LCFS queue with preemptive service and $G/G/ \infty$ queue, a heavy tailed service time distribution resulting in the worst case symbol delay or variance of symbol delay has been shown to minimize the AoI in~\cite{Talak:AoI:ISIT:2019}. 
With the knowledge of the server state, the AoI optimization has been studied in single-user systems~\cite{infocom/SunUYKS16,Sun:ISIT:2017,Wang:JCN:2019}. It is shown in~\cite{infocom/SunUYKS16} that the zero-wait policy does not always minimize the AoI, while reference~\cite{Sun:ISIT:2017} shows that the age-optimal policy has a threshold structure.

In systems where specific communication channels instead of abstract ``servers" are considered, AoI optimization has also been extensively studied in~\cite{He:2017,Kaul:2017:MAC,Qiao:AoI:WCSP:2019,Najm:AoIinErasure:2019,Javani:AoIinErasure:2019}.
The minimum AoI scheduling problem with interfering links is studied in~\cite{He:2017}. 
The AoI over multiple-access channels has been analyzed for both scheduled access with feedback and slotted ALOHA-like random access mechanisms~\cite{Kaul:2017:MAC}.
Reference \cite{Qiao:AoI:WCSP:2019} investigates the minimization of the average AoI in status update systems with packet based transmissions over fading channels. 
The optimal achievable average AoI over an erasure channel has been studied in~\cite{Najm:AoIinErasure:2019} for the cases when the source and channel-input alphabets have equal or different sizes.
The optimal error toleration policy for AoI minimization during transmission of an update in an erasure channel with feedback has been investigated in~\cite{Javani:AoIinErasure:2019}.

This work investigates broadcast channels similar to those studied in \cite{Modiano:2018:BC,Hsu:2018:ISIT,Sun:AoI_BC:2019,Chen:CodeBC:2019,Feng:Globecom:2019}. Reference~\cite{Modiano:2018:BC} studies the expected weighted sum AoI minimization problem of the single-hop broadcast network with minimum throughput constraints.
It considers a system where the updates for users are generated periodically, and the transmission between the transmitter and each user can be erased with a constant probability. It shows that in a symmetric network, greedily updating the user with the highest instantaneous AoI is optimal. For general setups, it develops low-complexity scheduling policies with performance guarantees.
In \cite{Hsu:2018:ISIT}, it considers stochastic update arrivals while assuming no-buffer transmitter and reliable links between the transmitter and the users. It derives the Whittle's index in a closed-form and proposes a scheduling algorithm based on it. 
Reference~\cite{Sun:AoI_BC:2019} extends results in~\cite{Modiano:2018:BC,Hsu:2018:ISIT} by jointly considering both unreliable links and stochastic update arrivals, and examines Whittle’s index based scheduling policies. A common assumption in \cite{Modiano:2018:BC,Hsu:2018:ISIT,Sun:AoI_BC:2019} is that {\it only one} user can be updated each time. 
Thus, the ``broadcast" nature of wireless medium is not really exploited in those works. 

Recently, a few works have taken some initial steps to explore the benefit of broadcasting on information freshness by relaxing the assumption that only one user can be updated each time~\cite{Chen:CodeBC:2019,Feng:Globecom:2019}. 
In \cite{Chen:CodeBC:2019}, it considers a two-user broadcast {\it symbol} erasure channel with feedback, where a transmitted update can be successfully received by each of the users with certain probability. Based the instantaneous symbol delivery feedback, the transmitter is able to adaptively code the updates and improve the AoI performance of the corresponding uncoded policies. In \cite{Feng:Globecom:2019}, we consider a two-user broadcast {\it symbol} erasure channel, and propose an adaptive coding policy. We show that compared with a greedy transmission policy without coding, the AoI at the weak user can be improved by orders of magnitude without affecting that at the strong user. Both works in ~\cite{Chen:CodeBC:2019,Feng:Globecom:2019} show the benefit of coding on AoI in those broadcast channels.

In this work, we consider a status monitoring system with $K$ sources, each generating updates intended for one of the $K$ recipients. The updates are transmitted to the monitors through a broadcast channel. Different from the models studied in existing works, we consider block fading over the links between the transmitter and receivers, each receiver equipped with $N$ antennas. Therefore, all receivers are able to receive an attenuated version of the transmitted signal. Then, under the assumption that the noise level is negligible in the channel, and the instantaneous channel state information (CSI) is available to the transmitter at the beginning of each time slot, our objective is to investigate the optimal coding and transmission scheduling schemes for the minimization of the summed time-average AoI over the receivers. 

Our main contributions are summarized as follows.

First, we investigate a novel MIMO broadcast setting where optimizing AoI through precoding and transmission scheduling is the focal point. While precoding strategies for throughput optimization for such channel has been investigated extensively in the literature~\cite{Lee:Precoding:MIMO,Caire:TP_BC:TIT,Weingarten:capacity_BC:ISIT}, maximizing information freshness is a very different aspect and requires unconventional treatment. On the other hand, existing study on AoI in broadcast channels rarely considers the impact of multiplexing gain on information freshness. The problem studied in this work bridges the gap between existing studies on MIMO broadcast channel and AoI, rendering novel precoding and transmission scheduling solutions.

Second, we explicitly identify the optimal updating strategies for the MIMO broadcast channel under different setups. Our result indicates that the size of updates plays a critical role on the design of the optimal updating schemes: When updates are of size one, the optimal schemes exhibit a round-robin structure. When updates are of size $B$, $B\geq 2$, round-robin updating may not be optimal. Rather, the transmitter may waste some transmission opportunities in order to deliver fresher updates. This is in contrast to conventional throughput-optimal transmission schemes in the literature. For the two-user case, we show that framed updating schemes are optimal.

Finally, the techniques we develop to show the optimality of the proposed updating schemes are novel. Due to combinatorial nature of the scheduling problem, establishing the optimality of an updating scheme is not straightforward. Toward that, we strive to obtain lower bounds that match with the summed long-term average AoI achieved under the proposed schemes. For the case when each user is equipped with one receiving antenna, we focus on consecutive time frames consisting of $B$ time slots and identify a lower bound on the summed AoI over each frame based on a newly defined notion of Degree of Freedom (DoF). Such DoF characterizes the transmission and decoding capabilities of the system and determines the minimum possible AoI of the users. 
For the two-user case, we first investigate an updating scheme that always update users in an alternating fashion for the timely delivery of each update to the intended user. Such alternating updating schemes naturally partition the time axis into concatenating segments with different updating patterns. We then examine the updating patterns on those segments individually and obtain a lower bound on the corresponding AoI. Finally, we show that the lower bound on the summed long-term average AoI for the class of alternating policies remains valid for any policy.
We believe those techniques are new in the study of AoI, and may be applicable for other problems in the area as well.

\emph{Notation:} Throughout the paper, we use boldface lower case to indicate vectors and boldface upper case to denote matrices. $\Nb$ denotes natural numbers, $\Zb_{\geq n}$ represents integers starting from $n$, and $\Zb_+$ represents integers starting from zero. Besides, we use $[m:n]$ to denote the subset of integers ranging from $m$ to $n$.

\section{Problem Formulation}
Consider a status updating system where there are $K$ independent sources intended for $K$ users. At the beginning of each time slot, an update of $B$ symbols are generated at each source. The symbols are transmitted to the users through a MIMO broadcast channel, as illustrated in Fig.~\ref{fig:model}. We assume the transmitter is equipped with $M$ transmitting antennas, and each receiver is equipped with $N$ receiving antennas. Each user tracks the status of the source of interest based on the symbols it receives. We use $(K,M,N,B)$ to denote the status monitoring system with $K$ source-user pairs, $M$ transmitting antennas, $N$ antennas at each user, and update size $B$.

\begin{figure}[t]
	\centering
	\includegraphics[width=1.8in]{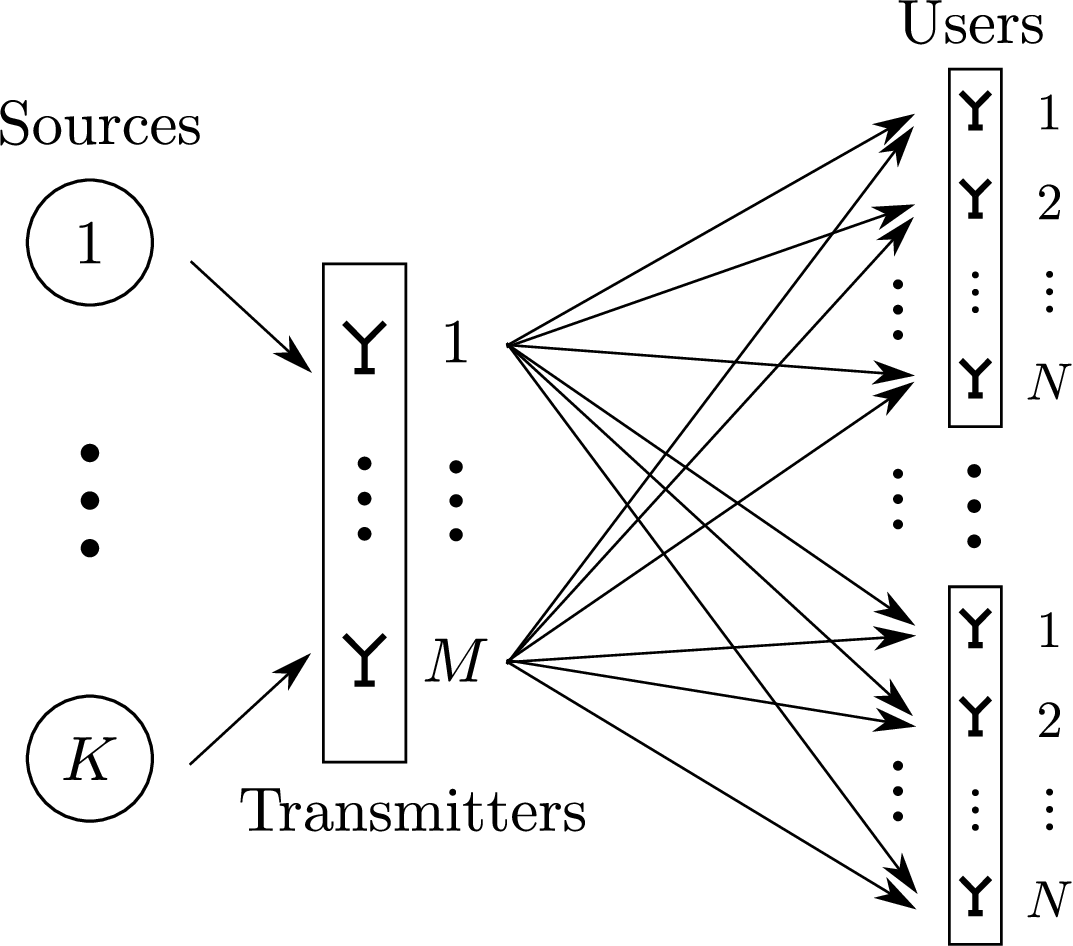}
	\captionof{figure}{System model.} 
	\label{fig:model}
\end{figure}

We refer the $K$ updates generated at time slot $t$ as $\Wv_t:=(\wv_t^{(1)},\ldots,\wv^{(K)}_t)^\mathsf{T}$ where $\wv^{(k)}_t:=(w^{(k)}_t[1],\ldots,w^{(k)}_t[B])^\mathsf{T}$ is the update intended for user $k$. We assume $w_t^{(k)}[b]$ is drawn from a finite field $\Fv_q$, and use $\Wv_t[:,b]$ to denote the $b$-th column of $\Wv_t$. 
We assume at each time slot, the transmitter is able to transmit a symbol on each of its antennas and it takes one time slot to deliver the symbol. Let $\xv_t:=(x_t[1],\ldots,x_t[M])^\mathsf{T}$ be the symbols transmitted at time slot $n$. Throughout this paper, we restrict to linear precoding schemes and assume $\xv_t$ is a {\em linear} function of the previously generated symbols of updates $\{\Wv_\tau\}_{\tau=1}^t$. 

Let $\Hv^{(k)}_t\in(\Fv_q)^{N\times M}$, $k\in[1:K]$, be the channel state between the transmitter and user $k$, and denote $\Hv_t:=((\Hv_t^{(1)})^\mathsf{T},\ldots,(\Hv_t^{(K)})^\mathsf{T})^\mathsf{T}\in(\Fv_q)^{KN\times M}$. The channel output at user $k$, denoted as $\yv^{(k)}_t:=(y^{(k)}_t[1],\ldots,y^{(k)}_t[N])^\mathsf{T}$, is modeled as
\begin{align}
\yv_t^{(k)}=\Hv_t^{(k)}\xv_t,
\end{align}
where we assume the additive noise in the channel is negligible compared with the transmit signal and leave it out for ease of exposition. 

We assume any submatrix of $\Hv_t$ is full rank almost surely, and $\Hv_t$ is available to the transmitter and the users at the beginning of each time slot. Then, the transmitter is able to design $\xv_t$ based on the instantaneous channel state information (CSI) $\Hv_t$, the symbols in $\{\Wv_\tau\}_{\tau=1}^t$ and all previously transmitted symbols $\{\xv_\tau\}_{\tau=1}^{t-1}$. Once $\yv_t$ is received, each individual user $k$ will try to recover updates from the corresponding source $k$ based on received symbols $\{\yv^{(k)}_\tau\}_{\tau=1}^{t}$ and historical CSI $\{\Hv^{(k)}_\tau\}_{\tau=1}^t$.

We adopt the metric \emph{age of information} (AoI) to measure the freshness of the information at the users. Formally, the AoI at user $k$ is the duration since the decoded freshest update was generated at the associated source $k$. Once an intended update is decoded at user $k$, its AoI is reset to the age of the update if it is fresher. If multiple updates from source $k$ are decoded at the same time, the AoI is reset to the age of the freshest one. Let $\delta^{(k)}_t$ be the AoI of the $k$-th user at the end of time slot $n$. Then, the average AoI of user $k$ is defined as
\begin{align}
\Delta^{(k)}=\limsup_{T\rightarrow\infty}\frac{1}{T}\Eb \left[\sum_{t=1}^{T}\delta^{(k)}_t\right],
\end{align}
and the summed average AoI of $K$ users is defined as $\Delta=\sum_{k=1}^{K}\Delta^{(k)}$. 
Our objective is to obtain an optimal precoding policy to determine $\{\xv_t\}_t$, such that the summed average AoI $\Delta$ is minimized.

\section{Main Results}
Due to the combinatorial nature of the precoding and scheduling schemes in the MIMO broadcast channel, searching for the age-optimal updating policy is extremely complicated in general. In order to gain some insights to this general problem, in this paper, we focus on two special scenario. In the first scenario, we restrict to the case when each user is equipped with one receiving antenna, while for the second scenario, we focus on systems with two users only.  
Our main results are summarized as follows.
\begin{Theorem}\label{result1}
For $(K,M,1,B)$ systems, the following results hold:
\begin{itemize}
\item[(i)] If $K\leq M$, the minimum summed average AoI equals $\frac{1}{2}K(3B-1)$;
\item[(ii)] If $K=pM+q$, where $p\in\Nb$ and $q\in[0:M-1]$, the minimum summed average AoI equals $\frac{1}{2}pM(pB+2B-1)+\frac{1}{2}q(2pB+3B-1)$.
\end{itemize}
\end{Theorem}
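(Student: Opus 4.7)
The plan is to prove each part of Theorem~\ref{result1} by pairing an explicit achievability scheme with a matching converse lower bound.

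For part (i), $K\leq M$: I would partition time into frames of $B$ consecutive slots. Since $K\leq M$ and every submatrix of $\Hv_t$ is full rank, the transmitter can apply zero-forcing linear precoding that routes one dedicated symbol per slot to each of the $K$ users simultaneously. Over the $B$ slots of a frame, every user accumulates $B$ symbols of a single update (the one generated at the start of the frame) and decodes it with age exactly $B$. Each user's AoI then evolves as the cycle $B, B+1, \ldots, 2B-1$, giving per-user time-average $\frac{3B-1}{2}$ and summed average $\frac{K(3B-1)}{2}$. For the matching converse, the argument is per user: because $N=1$, each user receives at most one scalar per slot, so delivering a $B$-symbol update needs at least $B$ slots of active service, and hence the gap between any two consecutive decodings at a single user is at least $B$. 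An elementary analysis of the resulting sawtooth AoI then shows that each user's average AoI is at least $\frac{3B-1}{2}$, and summation over $K$ users gives the claim.

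Part (ii), $K = pM + q$ with $p\geq 1$ and $q \in [0:M-1]$, is more involved. For achievability, the plan is to construct a periodic round-robin-based schedule that treats the $p$ full groups of $M$ users and the residual group of $q$ users in a coordinated way, so that the long-run effective inter-decoding gap for each of the $pM$ users equals $pB$ (producing the $pB + 2B - 1$ factor) and the effective gap for each of the $q$ users equals the larger value implied by the $2pB + 3B - 1$ factor; summing the per-user sawtooth averages $B + \frac{g-1}{2}$ over this schedule reproduces the claimed value. Zero-forcing on the $M$ users served in each slot is again enabled by the full-rank assumption.

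The main obstacle is the converse, for which I plan to follow the time-window technique foreshadowed in the introduction. Fix a window $[t+1, t+W]$ of length $W$ equal to a multiple of the super-cycle of the proposed schedule, upper bound the total number of decodings inside the window using the antenna-slot budget $WM$ together with the $B$-slots-per-decoding constraint, and lower bound the in-window summed AoI via a convexity inequality applied to $\sum_i g_{k,i}^2$ subject to $\sum_i g_{k,i} = W$ for each user $k$. The delicate step is to show that the antenna-slot budget must be split across users exactly as the formula's $pM$-versus-$q$ decomposition prescribes; I expect to need an exchange or Lagrangian-type argument to rule out alternative partitions, since the split $\frac{1}{2}pM(pB + 2B - 1) + \frac{1}{2}q(2pB + 3B - 1)$ does not emerge from naively spreading the budget uniformly across users. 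Letting $W\to\infty$ then transfers the per-window bound to the long-term average and completes the converse.
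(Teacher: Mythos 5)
Part (i) of your plan is sound. The achievability is the paper's synchronized zero-forcing scheme, and your converse is a legitimate \emph{per-user} argument (reset age $\geq B$ because a $B$-symbol update needs $B$ slots through a single receive antenna, inter-decoding gap $\geq B$ for the same reason, hence sawtooth average $\geq B+\frac{B-1}{2}$). The paper instead proves the converse by lower-bounding the summed AoI over each length-$B$ frame (its Lemma~5 and Theorem~4); your route is more elementary for this case and buys nothing less.

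Part (ii) is where the genuine gaps are, and there are two. First, the achievability description is based on a decomposition that no feasible schedule realizes: a user with constant gap $pB$ and reset age $B$ consumes $B$ equations every $pB$ slots, i.e.\ $1/p$ of the per-slot DoF, so $pM$ such users already exhaust the entire budget of $M$ equations per slot and the remaining $q$ users could never be served. Moreover, a per-user average of $\frac{1}{2}(2pB+3B-1)$ corresponds to a constant gap of $(2p+1)B$, not $(p+1)B$. The formula's split into $pM$ and $q$ terms is an algebraic artifact of the round-robin analysis (in the paper, each of the $M$ per-frame ``ranks'' recurs after either $p$ or $p+1$ frames, and every user cycles through all ranks), not a partition of the users into two classes with constant gaps. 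Second, and more seriously, your converse relaxation provably undershoots the target. Take $(K,M,N,B)=(3,2,1,1)$, so $p=q=1$ and the claimed optimum is $4$. Your budget constraint gives at most $2W$ decodings in a window of length $W$, and the convexity bound $\sum_i g_{k,i}^2\geq W^2/n_k$ together with $\sum_k 1/n_k\geq K^2B/(WM)$ yields only $\Delta\geq K(B-\tfrac12)+\tfrac{K^2B}{2M}=\tfrac{15}{4}<4$. The slack is exactly the combinatorial constraint you drop: in any window of $B$ consecutive slots at most $M$ updates can be delivered \emph{and each to a distinct user} (each user has a single antenna, so it cannot absorb two updates' worth of equations in one frame). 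The ``exchange or Lagrangian-type argument'' you defer to is the entire difficulty; the paper closes it by showing (Theorem~5) that the $K$ most recent decoding times before a given frame must be packed at most $M$ per preceding frame, forcing $q$ of them back to frame $m-p-1$, which is precisely where the $\frac{1}{2}q(2pB+3B-1)$ term comes from. Without an argument of this kind your window bound does not match the achievable value.
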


\begin{Theorem}\label{result2_new}
For $(2,M,N,B)$ systems, the following results hold:
\begin{itemize}
\item[(i)] If $N\geq B$ and $\frac{M}{B}\geq 2$, the minimum summed average AoI equals 2;
\item[(ii)] If $N\geq B$ and $1\leq \frac{M}{B}< 2$, the minimum summed average AoI equals 3;
\item[(iii)] If $N\geq M$, $0<\frac{M}{B}<1$, let $i=\lceil \frac{B}{M}\rceil -1$ and $j=\lfloor \frac{1}{B/M-i}\rfloor$. Then, $\frac{j}{ij+1}\leq \frac{M}{B}< \frac{j+1}{(j+1)i+1}$, and the minimum summed average AoI equals $4i+1+\frac{2i+1}{ij+1}$ if $j\geq 2$, and equals $4i+3$ if $j=1$.
\item[(iv)] If $N\leq \frac{M}{2}$, $0<\frac{N}{B}<1$, let $i=\lceil \frac{B}{N}\rceil -1$ and $j=\lfloor \frac{1}{B/N-i}\rfloor$. Then, $\frac{j}{ij+1}\leq \frac{N}{B}<\frac{j+1}{(j+1)i+1}$, and the minimum summed average AoI equals $3i+1+\frac{i+1}{ij+1}$.
\end{itemize}
\end{Theorem}

We note that Theorem~\ref{result2_new} explicitly characterizes the optimal AoI in all $(2,M,N,B)$ systems except for the case when $N<B$ and $N<M<2N$. Although explicit identification of the optimal AoI for this case is extremely challenging and intractable, we are able to provide a lower bound on the summed average AoI, and obtain performance guarantee for a transmission policy as follows.

\begin{Theorem}\label{result3_new}
For $(2,M,N,B)$ systems, if $B=iN+j$, $i\in\Nb$, $j=B\pmod N$, and $N<M<2N$, the minimum summed average AoI is lower bounded by $\Delta_{\mathsf{LB}}=2i+\lceil\frac{2j}{N}\rceil$. Moreover, there exists a $2$-optimal policy under which the summed average AoI is upper bounded by $2\Delta_{\mathsf{LB}}$.
\end{Theorem}

In the following, we first present updating schemes in Section~\ref{sec:achieve-thm1} and Section~\ref{sec:achievale-thm2} that achieve the summed time-average AoI in Theorem~\ref{result1} and Theorem~\ref{result2_new}, and then provide the matching lower bounds in Section~\ref{sec:converse-thm1}  and Section~\ref{sec:converse-thm2}. In Section~\ref{sec:discuss}, for $(2,M,N,B)$ systems with $N<B$ and $N<M<2N$, we investigate the lower bound and propose a $2$-optimal policy. We conclude the paper in Section~\ref{sec:conclusion} and defer some of the proofs to the Appendix.

\section{Achievable Schemes for Theorem~\ref{result1}}\label{sec:achieve-thm1}
In this section, we explicitly describe the optimal updating schemes that render the minimum summed average AoI stated in Theorem~\ref{result1}.

\subsection{Achievable Scheme for the $(K,M,1,B)$ System with $M\geq K$}\label{sec:achievale-thm1-case1}
First, we consider the case when $N=1$, $M\geq K$. This corresponds to the case when each user is equipped with a single antenna, and the number of antennas at the transmitter is greater than the number of users. Since $N=1$, each user can receive at most one linear combination of the transmitted symbols, implying that a $B$-symbol update takes at least $B$ time slots to deliver. On the other hand, since $M\geq K$, the transmitter is able to send $K$ independent symbols in each time slot. This motivates us to propose a simple synchronized updating scheme as follows:

\begin{Definition}[Synchronized updating]
Partition the time axis into frames of length $B$ starting at the beginning of time slot 1. Then, at the beginning of time slot $t=mB+b$, $m\in\Zb_+$, $b\in[1:B]$. The transmitter sends
\begin{align}
\xv_t=
\begin{pmatrix} 
\Tilde{\Hv}_t^{-1}\Wv_{mB+1}[:,b]   \\
\mathbf{0}_{(M-K)\times 1} 
\end{pmatrix},  \label{eqn:achieve-thm1-case1:1}
\end{align}
where $\Tilde{\Hv}_t\in(\Fv_q)^{K\times K}$ is $\Hv_t$ knocked off the last $M-K$ columns.
\end{Definition}

We note that under the synchronized updating scheme, the $b$-th symbol of updates generated at the beginning of a time frame, i.e., $\{\Wv_{mB+1}^{(k)}\}_k$, is transmitted in the $b$-th time slot in the corresponding time frame simultaneously. By precoding the symbols according to (\ref{eqn:achieve-thm1-case1:1}), each user is able to cancel off the interference from other unintended updates and decode the designated update at the end of the time frame, i.e., at the end of time slot $(m+1)B$. The synchronized updating scheme for the $(3,4,1,2)$ system is shown in Fig.~\ref{fig:thm1-case1}.

\begin{figure}[t]
	\centering
	\includegraphics[height=1.5in]{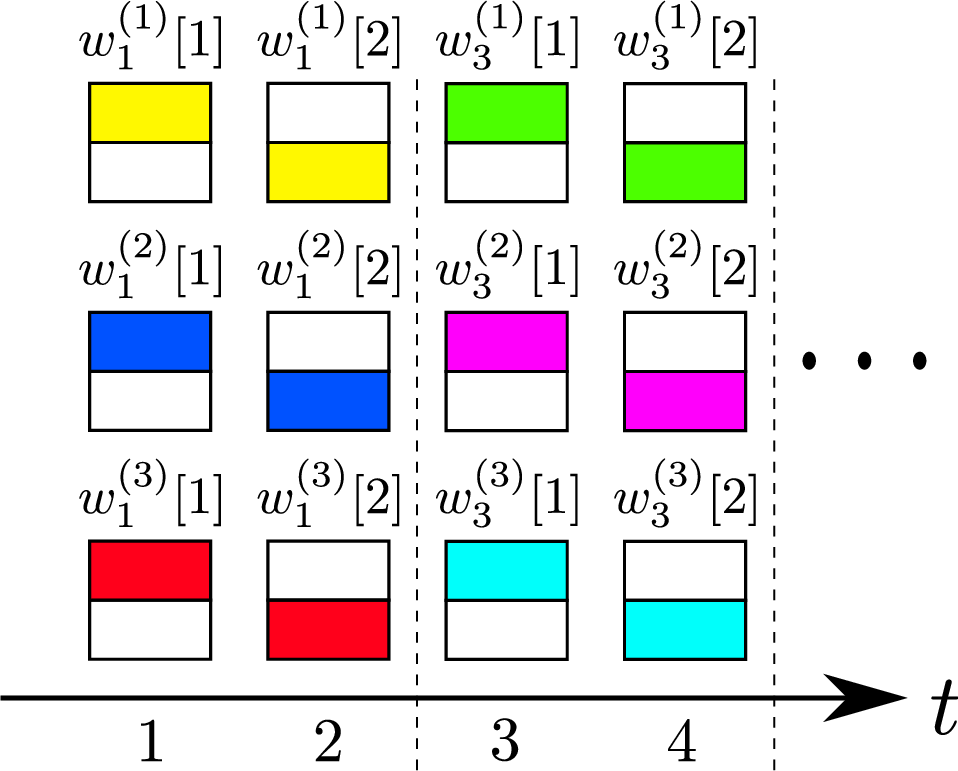}
	\captionof{figure}{Synchronized updating scheme for the $(3,4,1,2)$ system, where updates $\Wv_1,\Wv_3,\ldots$ are delivered at the end of time slots $2,4,\ldots$. }
	\label{fig:thm1-case1}
\end{figure}

Tracking the AoI of each user $\delta^{(k)}_t$ in time frame consisting of time slots $[mB+1:(m+1)B]$, $m\in\Nb$, we note that for general $B>1$, it increases monotonically from $B+1$ to $2B-1$ until being reset to $B$ at the end of the time frame. When $B=1$, the AoI resets to 1 at the end of each time slot. Denote $\delta^{(k)}_{m:n}=\sum_{t=m}^n \delta^{(k)}_t$.  
Assume the initial AoI at time 0 is bounded for every user. Then, 
\begin{align}
\Delta&=\lim_{T\rightarrow\infty}\frac{K}{TB}\sum_{t=1}^{TB}\delta_t^{(k)}  \nonumber\\
&=\lim_{T\rightarrow\infty}\frac{K}{TB}\left(\delta^{(k)}_{1:B}+\sum_{m=1}^{T-1}\delta^{(k)}_{mB+1:(m+1)B}\right)    \nonumber \\
&=\frac{1}{2}K(3B-1). \label{eqn:achieve-thm1-case1:6}
\end{align}
We note that the synchronized updating scheme is not the only updating scheme that achieves the AoI depicted in Theorem~\ref{result1} for the $M\geq K$ case. Actually, instead of starting the transmission of new updates to all users synchronously, the transmitter can continuously update the users in an asynchronous way by introducing an offset $n_k\in[0:B-1]$ to the time when the transmitter starts transmitting a new update to user $k$. Such an offset will only affect the updating time points of a user without changing the AoI evolution pattern between two updates. Thus, the long-term average AoI stays the same.

\subsection{Achievable Scheme for the $(K,M,1,B)$ System with $M< K$}\label{sec:achievale-thm1-case2}
Next, we consider the case when $N=1$, $M< K$. Compared with the scenario discussed in Sec.~\ref{sec:achievale-thm1-case1}, we note that the number of transmitting antennas is now less than the number of users, which implies that not all users can be updated in a synchronized fashion. How to schedule the updating of each user to minimize the total AoI thus becomes non-trivial. We propose the following intuitive updating scheme and prove it is optimal afterwards.

\begin{Definition}[Round-robin synchronized updating] Partition the time axis into frames of length $B$ starting at the beginning of time slot 1.  Then, the transmitter selects $M$ users to update in each frame in a round-robin fashion. Specifically, in the frame consisting of time slots $[mB+1:(m+1)B]$, $m\in\Zb_+$, the selected users to update are the $M$ users $(mM+i-1) \pmod K+1$, $i=[1:M]$. At the beginning of time slot $t=mB+b$, $b\in[1:B]$, the transmitter sends
$\xv_t=\Tilde{\Hv}_t^{-1}\Tilde{\Wv}_{mB+1}[:,b]$,
where $\Tilde{\Hv}_t\in(\Fv_q)^{M\times M}$ and $\Tilde{\Wv}_{mB+1}\in(\Fv_q)^{M\times 1}$ are $\Hv_t$ and  ${\Wv}_{mB+1}$ knocked off the rows associated with the unselected $K-M$ users, respectively.
\end{Definition}

Under the precoding and transmission scheme, the selected $M$ users are able to decode the intended updates at the end of each time frame. 
The transmission strategy for the $(3,2,1,2)$ system is given in Fig.~\ref{fig:thm1-case2} as an example. 

\begin{figure}[t]
	\centering
	\includegraphics[height=1.5in]{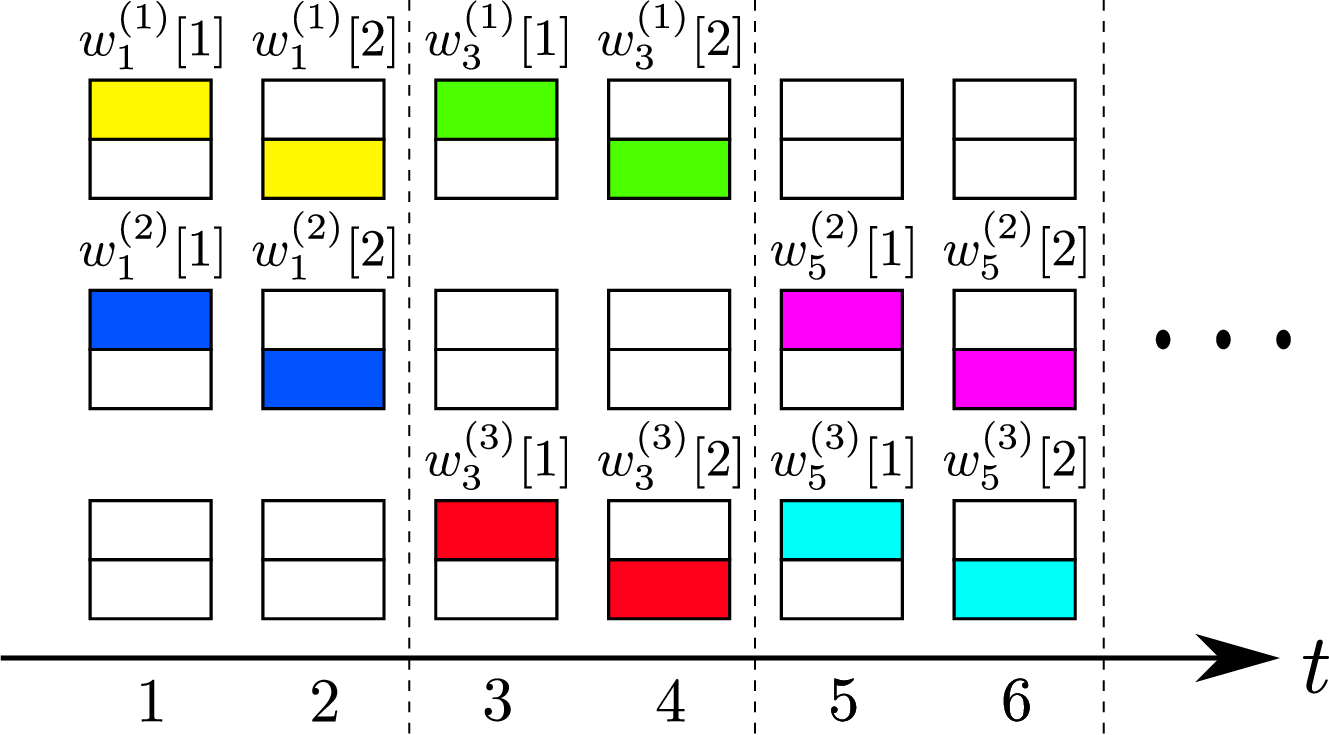}
	\captionof{figure}{Round-robin synchronized updating for the $(3,2,1,2)$ system.} 
	\label{fig:thm1-case2}
\end{figure}

In the following, we explicitly identify the summed time-average AoI under the round-robin updating scheme. 

\begin{Lemma}\label{lemma:rank}
Let $K=pM+q$, where $q=K \pmod M$. Let $k$ be the index of the $i$-th ranked user in the frame consisting of time slots $[mB+1:(m+1)B]$, $m\in\Zb_+$, i.e., $k:=(mM+i-1)\pmod K +1$. Denote $L_i$ as the frame index difference between the current frame and the next frame during which user $k$ will be updated. Then, $L_i= p$ if $i+q\leq M$, and $L_i=p+1$ if $i+q> M$. Besides, user $k$ will be the $(\overline{i+q})$-th ranked user in the frame starting at $(m+L_i)B+1$, where $\overline{x}:=(x-1)\pmod M+1$. 
\end{Lemma}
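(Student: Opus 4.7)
}

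My plan is to reduce the statement to a short modular arithmetic calculation based directly on the round-robin rule, and then to handle minimality by a case split on whether $i+q\le M$ or $i+q>M$.

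First I would translate the indexing rule. By definition, the $i$-th ranked user in frame $m$ has index $k=(mM+i-1)\bmod K+1$. For user $k$ to appear again as the $j$-th ranked user of some later frame $m+L$, I need
\begin{equation*}
(m+L)M+j-1\equiv mM+i-1\pmod K,\qquad \text{i.e.,}\qquad LM+(j-i)\equiv 0\pmod K.
\end{equation*}
Plugging in $K=pM+q$ and rearranging, any integer solution $(L,j)$ with $j\in[1:M]$ must satisfy $j-i=cK-LM=(cp-L)M+cq$ for some integer $c$. Since $i,j\in[1:M]$ force $j-i\in[1-M,M-1]$, only $c=1$ gives a feasible range, which yields the two sub-cases $L=p$ (giving $j=i+q$) and $L=p+1$ (giving $j=i+q-M$).

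Next I would argue minimality of $L_i$. The case $c=0$ requires $L=0$, which is excluded because I want the next updating frame (so $L\ge 1$). Any $c\ge 2$ forces $L\ge 2p+1$ and hence a strictly larger $L$ than in the $c=1$ candidates. Thus the smallest positive $L$ is obtained from $c=1$, and between the two candidates I choose whichever produces $j\in[1:M]$. Concretely, $L_i=p$ works iff $j=i+q\in[1:M]$, i.e., iff $i+q\le M$; otherwise $L_i=p+1$ and $j=i+q-M$. This is exactly the dichotomy in the lemma.

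Finally I would observe that both cases can be written uniformly as $j=\overline{i+q}=(i+q-1)\bmod M+1$: when $i+q\le M$ the reduction is trivial, and when $i+q>M$ we have $i+q-1\in[M,2M-2]$ so $(i+q-1)\bmod M=i+q-1-M$, giving $\overline{i+q}=i+q-M=j$. This establishes the claimed rank in the next updating frame. I do not anticipate any real obstacle here; the only subtle point is being careful that $L_i$ must be the \emph{smallest} positive integer with the modular property, which is why I explicitly rule out the $c=0$ and $c\ge 2$ branches rather than just exhibiting the $c=1$ solutions.
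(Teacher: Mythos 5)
Your proposal is correct and reaches the same dichotomy and the same rank formula, but by a genuinely different route. The paper's proof never solves a congruence: it invokes the round-robin property that every one of the other $K-1$ users is updated exactly once between two consecutive updates of user $k$, so the next update of user $k$ sits at updating-slot position $i+K$ counted from the start of frame $m$, which is sandwiched as $L_iM<i+K\leq(L_i+1)M$; substituting $K=pM+q$ gives the dichotomy and the rank $(i+K-1)\bmod M+1=\overline{i+q}$ in one line, with minimality obtained for free because the next occurrence is by construction at position $i+K$. You instead work directly from the index formula $(mM+i-1)\bmod K+1$, solve $LM+(j-i)\equiv 0\pmod K$, and argue minimality by enumerating the integer $c$ in $j-i=cK-LM$. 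Your version is more self-contained (it does not presuppose the ``each user exactly once per $K$ slots'' property, which itself would need to be read off the formula), at the cost of the extra minimality bookkeeping. One small slip there: $c\geq 2$ forces $L\geq cp\geq 2p$, not $L\geq 2p+1$ (take $cp-L=0$, $j-i=cq$), so the claim of a \emph{strictly} larger $L$ fails in the borderline case $p=1$ with the $c=1$ candidate $L=p+1=2p$. This is harmless --- $2p\geq p+1$ already rules out any smaller $L$ from $c\geq 2$, and for a fixed $L$ the rank $j\in[1:M]$ is unique modulo $K>M$, so no conflicting rank can arise --- but the bound as written should be corrected.
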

\begin{proof}
Under the round-robin updating policy, all the rest $K-1$ users should be updated exactly once between two consecutive updates of user $k$. Therefore, we must have $L_i M<i+K\leq (L_i+1) M$. Since $K=pM+q$, the inequality becomes $L_i M<i+pM+q\leq (L_i+1) M$. Therefore, if $i+q\leq M$, we must have $L_i=p$; Otherwise, $L_i=p+1$. Meanwhile, we note that the ranking of user $k$ will be $(i+K-1)\pmod M+1$ in frame $m+L_i$, i.e., $\overline{i+q}$.
\end{proof}

We point out that under the round-robin synchronized updating scheme, each update takes exactly $B$ time slots to transmit. Thus, the AoI at user $k$ after each update is $B$, and it monotonically increases until the next updating time point.

\begin{Lemma}\label{lemma:group}
Let $d:=\gcd(q,M)$, $k:=(mM+i-1)\pmod K +1$. Then, after the frame starting at $mB+1$, the ranking of user $k$ in the next $M/d$ time frames during which user $k$ is updated must be a permutation of $\Rc_i:=\left\{\overline{i+d},\overline{i+2d},\ldots, \overline{i+\frac{M}{d}d}\right\}$. 
\end{Lemma}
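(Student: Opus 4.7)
The plan is to iterate Lemma~\ref{lemma:rank} to track user $k$'s ranking across successive frames in which it is updated, and then invoke an elementary fact about cyclic subgroups of $\mathbb{Z}/M\mathbb{Z}$.

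First, I would establish by induction on $\ell \in [1 : M/d]$ that if user $k$ is ranked $i$-th in the frame starting at time slot $mB+1$, then in the $\ell$-th subsequent frame during which user $k$ is updated, its ranking is exactly $\overline{i + \ell q}$. The base case $\ell = 1$ is precisely the second conclusion of Lemma~\ref{lemma:rank}. For the inductive step, Lemma~\ref{lemma:rank} applies again with $\overline{i + (\ell-1)q}$ playing the role of $i$, so the new ranking is $\overline{\,\overline{i + (\ell-1)q} + q\,}$; this simplifies to $\overline{i + \ell q}$ because $x \mapsto \overline{x}$ is reduction of $x-1$ modulo $M$ followed by adding $1$, and therefore commutes with further additions taken modulo $M$.

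Next, I would show that
\[
\left\{\overline{i + \ell q} : \ell \in [1 : M/d]\right\} = \mathcal{R}_i.
\]
Since $\overline{i + x}$ depends only on $x \bmod M$, it suffices to prove the following equality of subsets of $\mathbb{Z}/M\mathbb{Z}$:
\[
\left\{\ell q \bmod M : \ell \in [1 : M/d]\right\} = \left\{\ell d \bmod M : \ell \in [1 : M/d]\right\}.
\]
Both sides are the cyclic subgroup generated by $q$ (respectively by $d$) in $\mathbb{Z}/M\mathbb{Z}$. By the definition $d = \gcd(q,M)$, these two subgroups coincide and have exactly $M/d$ elements, so the $M/d$ consecutive multiples of $q$ (resp.\ $d$) starting at $\ell = 1$ enumerate the whole subgroup without repetition. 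Consequently the two sets are equal, and each is a permutation of the other.

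The main obstacle is largely bookkeeping rather than a deep argument: verifying that ``the $\ell$-th subsequent frame during which user $k$ is updated'' is well-defined relies on Lemma~\ref{lemma:rank} being applied repeatedly to identify the next updating frame at offset $L_{\overline{i + (\ell-1)q}}$, and verifying that the $M/d$ rankings are pairwise distinct (so that the collection is truly a permutation of $\mathcal{R}_i$, not just a multiset equal to it) is precisely the statement that the cyclic subgroup $\langle q \rangle \le \mathbb{Z}/M\mathbb{Z}$ has order exactly $M/d$.
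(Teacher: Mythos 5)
Your proposal is correct and follows essentially the same route as the paper: iterate Lemma~\ref{lemma:rank} to get that the successive rankings are $\overline{i+\ell q}$, then use $d=\gcd(q,M)$ to identify $\{\ell q \bmod M : \ell\in[1:M/d]\}$ with $\{\ell d \bmod M : \ell\in[1:M/d]\}$. The only difference is presentational — you invoke the standard fact that $\langle q\rangle=\langle d\rangle$ in $\mathbb{Z}/M\mathbb{Z}$ with order $M/d$, whereas the paper proves the same thing by hand via containment plus a pairwise-distinctness argument by contradiction.
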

\begin{proof}
First, we note that $\overline{i+\frac{M}{d}d}=i$. Thus, for any $\ell \in \Nb$, $\overline{i+\ell d}$ must belong to $\Rc_i$.

Next, for any $\ell \in\Nb$, since $d=\gcd(q,M)$, $\frac{\ell q}{d}\in\Nb$, we must have $\overline{i+\ell q}=\overline{i+\frac{\ell q}{d}d}$. Thus, $\overline{i+\ell q}$ belong to $\Rc_i$, too. 

Besides, for any $1\leq \ell_1<\ell_2\leq M/d$, $\ell_1,\ell_2\in\Nb$, we can show that  $\overline{i+\ell_1 q}\neq \overline{i+\ell_2 q}$ through contradiction as follows: if $\overline{i+\ell_1 q}= \overline{i+\ell_2 q}$, we must have $(\ell_2-\ell_1)q$ be an integer multiple of $M$, i.e., $(\ell_2-\ell_1)q/d$ must be an integer multiple of $M/d$. Since $d=\gcd(q,M)$, it implies $\ell_2-\ell_1$ must be an integer multiple of $M/d$, which contradicts with the assumption that $1\leq \ell_1<\ell_2\leq M/d$.

Therefore, for $\overline{i+\ell q}$, $\ell=1,\ldots, M/d$, they must equal $M/d$ different values, which implies that the ranking of user $k$ in $M/d$ consecutive updating frames must be a permutation of $\Rc_i$.
\end{proof}

\begin{remark}\label{remark:lemma:group}
We note that for two users $k_1,k_2$, $k_1\neq k_2$, if $\bar{k}_1=\bar{k}_2$, they share the same set of rankings when updated. In total, there exist $M$ different set of rankings $\{\Rc_i\}_{i=1}^{M}$.
\end{remark}

Consider $K/d$ consecutive frames. Since $M$ users are updated in each frame, and the updating is performed in a round-robin fashion, each user is updated exactly $M/d$ times. Therefore,  the AoI evolution is periodic every $K/d$ frames after the first update for each user. The long-term average AoI of any user is thus equal to the average AoI during any $K/d$ consecutive frames after its first update. 

Consider the AoI evolution of user $k$ after its first update. We note that under the round-robin synchronized updating scheme, the ranking of user $k$ when it is updated for the first time is $(k-1)\pmod{M}+1$, i.e.,  $\overline{k}$. Consider the $K/d$ consecutive frames starting at time $t=\frac{K}{d}B+1$. According to Lemma~\ref{lemma:group}, we have
\begin{align}
 \Delta^{(k)} = \frac{d}{KB}\sum_{t=\frac{K}{d}B+1}^{\frac{2K}{d}B}\delta^{(k)}_t =\frac{d}{KB} \sum_{j\in \Rc_{\overline{k}}} f(L_j),
\end{align}
where $f(L_j):=\frac{1}{2}[B+B(1+L_j)-1]BL_j$ is the total AoI experienced by user $i$ between two consecutive updates.

Thus,
\begin{align}
\Delta&= \sum_{k=1}^K\Delta^{(k)} =\frac{d}{KB}\sum_{k=1}^K \sum_{j\in \Rc_{\overline{k}}} f(L_j).  \label{eqn:group}
\end{align}
We note that $\{\Rc_{\bar{k}}\}_{k=1}^{K}$ actually corresponds to the rankings of the $K$ users during any consecutive $\frac{K}{d}$ frames when they are updated. Since there are always $M$ users selected in each frame, $\{\Rc_{\bar{k}}\}_{k=1}^{K}$ must contain $M$ different elements from $1$ to $M$, and each element appears exactly $\frac{K}{d}$ times. Applying this observation on Eqn.~(\ref{eqn:group}), we have
\begin{align}
\Delta&=\frac{d}{KB}\sum_{k=1}^K \sum_{j\in \Rc_{\overline{k}}} f(L_j)
=\frac{d}{KB} \sum_{i =1}^M \frac{K}{d} f(L_i)\\
&= \frac{1}{B} \left[(M-q) f(p) + q  f(p+1)\right]\label{eqn:decompose}\\
&=\frac{Mp}{2}(Bp+2B-1)+\frac{q}{2}(2Bp+3B-1).
\end{align}

\section{Achievable Schemes for Theorem~\ref{result2_new}}\label{sec:achievale-thm2}
In this section, we investigate achievable schemes matching the minimum summed average AoI in Theorem~\ref{result2_new}. For those cases, we first focus on the $(2,M,B,B)$, $(2,M,M,B)$ and $(2,2N,N,B)$ systems, respectively, and then show that the corresponding schemes can be applied to systems with general parameter setups.

\subsection{Achievable Scheme for $(2,M,B,B)$ Systems with $M/B\geq 2$}\label{sec:achievale-thm2-case1}
Since $M/B\geq 2$, the transmitter is able to send at least $2B$ linear combinations of update symbols in each time slot. Therefore, at each time slot $t$, the transmitter chooses to transmit all $2B$ symbols of the newly generated updates $\wv_t^{(1)}$ and $\wv_t^{(2)}$. The precoding procedure is as follows: We knock off the last $M-2B$ columns of $\Hv_t^{(1)}, \Hv_t^{(2)}\in(\Fv_q)^{B\times M}$ and let the remaining matrices be $\Tilde{\Hv}_t^{(1)}, \Tilde{\Hv}_t^{(2)}\in(\Fv_q)^{B\times 2B}$. Denote $\tilde{\Hv}_t:=((\Tilde{\Hv}_t^{(1)})^\mathsf{T},(\Tilde{\Hv}_t^{(2)})^\mathsf{T})^\mathsf{T}\in(\Fv_q)^{2B\times 2B}$. At the beginning of time slot $t$, the transmitter selects
\begin{align}
\xv_t=
\begin{pmatrix} 
\Tilde{\Hv}_t^{-1}
\begin{pmatrix}
\wv_t^{(1)} \\
\wv_t^{(2)}
\end{pmatrix}\\
\mathbf{0}_{(M-2B)\times 1}
\end{pmatrix}.
\end{align}

Both users are able decode the intended update at the end of each time slot $t$, resetting the AoI to $1$. Thus, the summed average AoI at the end of each time slot is $2$.

\subsection{Achievable Scheme for $(2,M,B,B)$ Systems with $1\leq M/B< 2$}\label{sec:achievale-thm2-case2}
Next, we consider the scenario when $1\leq M/B< 2$. Since $M<2B$, the two newly generated updates can not be delivered in the same time slot simultaneously. On the other hand, since $M\geq B$, it indicates that at least one update can be delivered in each time slot. Thus, the question becomes whether the transmitter should utilize the remaining transmission capability to transmit another update partially. It turns out that a scheme that updates the two users alternately, one in each time slot, is optimal.

Specifically, at time slot $t$, the transmitter sends
\begin{align}
\xv_{t}=
\begin{pmatrix}
(\Tilde{\Hv}^{(k)}_t)^{-1}\wv^{(k)}_t \\
\ov_{(M-B)\times 1}
\end{pmatrix},
\end{align}
where $k=1$ if $t$ is odd, and $k=2$ if $t$ is even, and $\Tilde{\Hv}_t\in(\Fv_q)^{B\times B}$ is $\Hv^{(k)}_t$ knocked off the last $M-B$ columns.

Then, at the end of time slot $t$, the transmitted update is decoded at the corresponding user. Since the AoI of each user resets to 1 every two time slots, the summed time-average AoI is $3$.

\subsection{Achievable Scheme for $(2,M,M,B)$ Systems with $\frac{j}{ij+1}\leq \frac{M}{B}< \frac{j+1}{(j+1)i+1}$, $i,j\in\Nb$}\label{sec:achievale-thm2-case3}

Consider the case when $\frac{M}{B}<1$. We partition the range $(0,1)$ into intervals in the form of $[\frac{j}{ij+1},\frac{j+1}{(j+1)i+1})$ for $i=\lceil \frac{B}{M}\rceil -1$ and $j=\lfloor \frac{1}{B/M-i}\rfloor$, and construct an achievable scheme for each possible interval that $\frac{M}{B}$ may lie in.

\begin{Definition}[Framed alternating updating]  Partition the time axis into frames of length $ij+1$ starting at time slot 1. Then, the transmitter exhausts its transmission capability to update the two users alternatively until the end of the frame. Specifically, in the frame starting at $m(ij+1)+1$, let $t_n:=m(ij+1)+ni+1$, $n\in[0:j-1]$, i.e., the time slot during which a new update will be transmitted, and $k_n:=(mj+n)\pmod{2}+1$, i.e., the user that the new update is intended to. Then, when $t=t_n$, $n\in[1:j-1]$, 
\begin{align}
\xv_t=
\Tilde{\Hv}_t^{-1}
\begin{pmatrix}
\wv_{{t_{n-1}}}^{(k_{n-1})}[niM-(n-1)B+1:B]\\
\wv_{t_n}^{(k_n)}[1:(ni+1)M-nB]
\end{pmatrix},
\end{align}
where $\Tilde{\Hv}_t\in(\Fv_q)^{M\times M}$ is the channel matrix between the $M$ transmitting antennas and subsets of antennas at users $k_{n-1}$ and $k_n$, respectively. When $t=t_{n}+l$, $n\in[0:j-1]$, $l\in[1:i-1]$, and $t=t_0$, $\xv_t=\tilde{\Hv}_t^{-1} \wv_{t_{n}}^{(k_{n})}[(ni+l)M-nB+1:(ni+l+1)M-nB]$.
\end{Definition}

An example of the framed alternating updating policy for the $(2,7,12,12)$ system is shown in Fig.~\ref{fig:thm2-case5}.

Next, we track the AoI evolution under the framed alternating updating scheme. First, we note that under the proposed scheme, in each frame, the transmitter sends $M$ symbols in each time slot until $j$ updates are delivered to the two users alternately. Besides, 
since $\frac{nB}{M}\in[(i+\frac{1}{j+1})n,(i+\frac{1}{j})n)$, when $n\in[1:j]$, the $n$-th updating time in the frame starting at time slot $(ij+1)m+1$ must be $t_n=(ij+1)m+ni+1$. Moreover, since $nB<M(i+1)n$, there must be some transmission capability left after delivering the update in time slot $t_n$ for $n<j$, which will be used to transmit a new one. Therefore, once the update is delivered at time $t_{n+1}=(ij+1)m+i(n+1)+1$, the corresponding AoI is reset to $i+1$. 

\begin{figure}[t]
	\centering
	\includegraphics[width=3.4in]{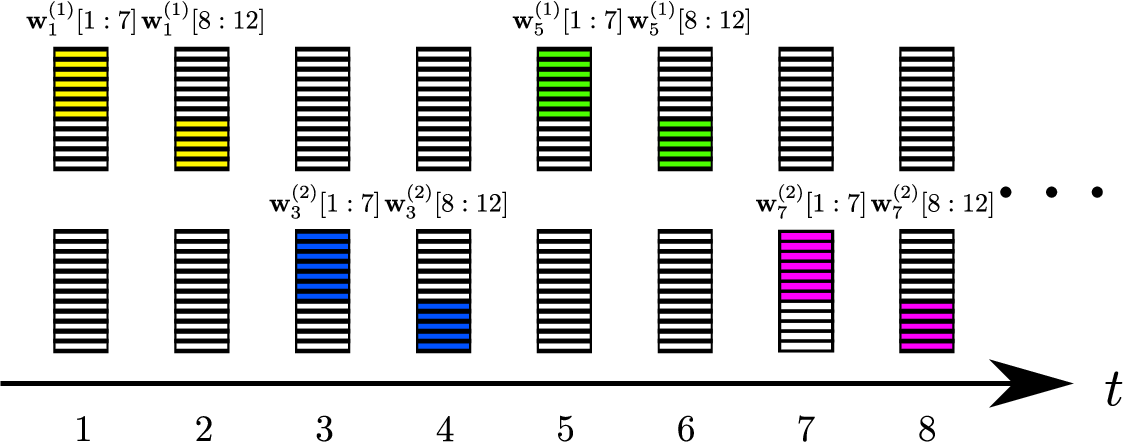}
	\captionof{figure}{Framed alternating updating scheme for the $(2,7,12,12)$ system. Since $\frac{M}{B}=\frac{7}{12}\in[\frac{1}{2},\frac{2}{3})$, we have $i=1,j=1$, and the frame length is 2. Although the transmitter is able to deliver 7 linearly independent symbols in each time slot, it chooses to deliver 5 instead of 7 symbols in the last time slot of each frame.}
	\label{fig:thm2-case5}
\end{figure}

We track the AoI of both users under the updating scheme, and have the following observations.

{\it 1) $j$ is even.} For this case, the AoI evolution is periodic with period $ij+1$. Within each period,  $\delta_{t}^{(1)}$ begins with $2i+2$ and resets to $i+1$ when $t=t_1,\ldots,t_{(j-2)/2}$ and monotonically increases in between; while $\delta_{t}^{(2)}$ begins with $i+2$ and resets to $i+1$ when $t=t_2,\ldots,t_{j/2}$ and monotonically increases in between. The summed long-term average AoI equals the summed AoI over any frame after the first one. 
Therefore,
\begin{align}
\Delta^{(1)}&=\frac{1}{ij+1}\left(\sum_{\ell=0}^{t_1-t_0}(2i+1+\ell) \right.\nonumber\\
&\qquad\qquad \left.+\left(\frac{j}{2}-1\right)\sum_{\ell=1}^{2i}(i+\ell)+\sum_{\ell=1}^{i}(i+\ell)\right)  \\
&=\frac{1}{2}\left(4i+1+\frac{2i+1}{ij+1}\right).
\end{align}
Similarly, we can obtain $\Delta^{(2)}$, which equals $\Delta^{(1)}$. Combining them together, we have $\Delta=4i+1+\frac{2i+1}{ij+1}$.

{\it 2) $j$ is odd.} For this case, under the framed alternating updating policy, one user will be updated $\frac{j-1}{2}$ times, while the other one will be updated $\frac{j+1}{2}$ times within each frame. Thus, the first user to update in next frame will be switched correspondingly. The AoI evolution is periodic with period $2(ij+1)$. Following similar analysis as for the previous case, we can show that $\Delta=4i+1+\frac{2i+1}{ij+1}$ if $j\geq 3$. 

If $j=1$, 
\begin{align}
\Delta^{(1)}=\Delta^{(2)}=\frac{1}{2(i+1)}\sum_{\ell=0}^{2i+1}(i+1+\ell)=\frac{4i+3}{2}.
\end{align}
Thus, $\Delta=4i+3$ if $j=1$.

\subsection{Achievable Scheme for $(2,2N,N,B)$ Systems with $\frac{j}{ij+1}\leq \frac{N}{B}<\frac{j+1}{(j+1)i+1}$, $i,j\in\Nb$}\label{sec:achievale-thm2-case4}

Finally, we consider the case when $\frac{N}{B}<1$ for $(2,2N,N,B)$ Systems. For $\frac{N}{B}\in[\frac{j}{ij+1}, \frac{j+1}{(j+1)i+1})$, an achievable scheme can be constructed similar to that in Section~\ref{sec:achievale-thm2-case3} as follows.

\begin{Definition}[Framed synchronous updating]  Partition the time axis into frames of length {$ij+1$} starting at time slot 1. Then, the transmitter exhausts its transmission capability to update the two users simultaneously until the end of the frame. Specifically, in the frame starting at $m(ij+1)+1$, let $t_n:=m(ij+1)+ni+1$, $n\in[0:j-1]$, i.e., the time slot during which a new update will be transmitted. Then, when $t=t_n$, $n\in[1:j-1]$, 
\begin{align}
\xv_t=
\Tilde{\Hv}_t^{-1}
\begin{pmatrix}
\wv_{{t_{n-1}}}^{(1)}[niN-(n-1)B+1:B]\\
\wv_{{t_{n-1}}}^{(1)}[1:(ni+1)N-nB]\\
\wv_{{t_{n}}}^{(2)}[niN-(n-1)B+1:B]\\
\wv_{{t_{n}}}^{(2)}[1:(ni+1)N-nB] 
\end{pmatrix},
\end{align}
where $\Tilde{\Hv}_t\in(\Fv_q)^{2N\times 2N}$ is the channel matrix between the $M$ transmitting antennas and subsets of antennas at users $k_{n-1}$ and $k_n$, respectively. When $t=t_{n}+l$, $n\in[0:j-1]$, $l\in[1:i-1]$, and $t=t_0$, 
\begin{align}
\xv_t=
\Tilde{\Hv}_t^{-1}
\begin{pmatrix}
\wv_{t_{n}}^{(1)}[(ni+l)N-nB+1:(ni+l+1)N-nB]\\
\wv_{t_{n}}^{(2)}[(ni+l)N-nB+1:(ni+l+1)N-nB]
\end{pmatrix}.
\end{align}

\end{Definition}
An example of the framed synchronous updating policy for the $(2,14,7,12)$ system is shown in Fig.~\ref{fig:thm2-case++}.

\begin{figure}[t]
	\centering
	\includegraphics[width=3.4in]{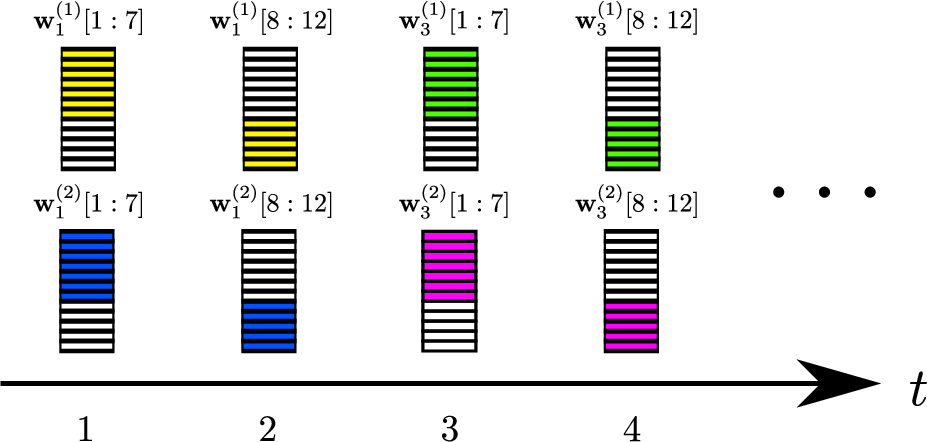}
	\captionof{figure}{Framed synchronous updating scheme for the $(2,14,7,12)$ system. Since $\frac{N}{B}=\frac{7}{12}\in[\frac{1}{2},\frac{2}{3})$, we have $i=1,j=1$, and the frame length is 2. Although the transmitter is able to deliver 7 linearly independent symbols in each time slot, it chooses to deliver 5 instead of 7 symbols in the last time slot of each frame.}
	\label{fig:thm2-case++}
\end{figure}

Different from the framed alternating updating scheme in Section~\ref{sec:achievale-thm2-case3}, the AoI evolutions of two users are always the same under the framed synchronous updating scheme. Following a similar argument as in Section~\ref{sec:achievale-thm2-case3}, we can show that the $n$-th updating time of each user in the frame starting at time slot $(ij+1)m+1$ must be $t_n=(ij+1)m+ni+1$ and the AoI of the delivered update at time $t_{n+1}=(ij+1)m+(n+1)i+1$ must be $i+1$. 

Tracking the AoI of both users, the AoI evolution of each user is periodic with period $ij+1$. Within each period, both $\delta_t^{{1}}$ and $\delta_t^{(2)}$ begins with $2i+2$ and resets to $i+1$ when $t=t_1,t_2,\ldots,t_{j-1}$ and monotonically increases in between. The summed long-term average AoI equals the summed AoI over any frame after the first one. Therefore,
\begin{align}
&\Delta^{(1)}=\Delta^{(2)} \nonumber \\
&=\frac{1}{ij+1}\left(\sum_{\ell=0}^{t_1-t_0}(i+1+\ell)+(j-1)\sum_{\ell=1}^{i}(i+\ell)\right) \\
&=\frac{3i+1}{2}+\frac{i+1}{2(ij+1)}.
\end{align}
Thus, $\Delta=3i+1+\frac{i+1}{ij+1}$.

\subsection{Generalization to $(2,M,N,B)$ Systems}

For $(2,M,N,B)$ systems with $N\geq B$, we note that all updating schemes described for the $(2,M,B,B)$ systems in Sections~\ref{sec:achievale-thm2-case1} and \ref{sec:achievale-thm2-case2} are still applicable. This is equivalent to virtually removing $N-B$ antennas at each receiver, and the corresponding AoI evolution remains the same. Therefore, those updating schemes achieve the corresponding optimal summed average AoI specified in Theorem~\ref{result2_new} (i)-(ii).

Likewise, updating schemes for $(2,M,M,B)$ systems described in Section~\ref{sec:achievale-thm2-case3} can be applied to $(2,M,N,B)$ systems with either $N\geq B>M$ or $B>N\geq M$, while the updating schemes for $(2,2N,N,B)$ systems proposed in Section~\ref{sec:achievale-thm2-case4} can be applied to $(2,M,N,B)$ system with $M\geq 2N$. Therefore, the proposed updating schemes achieve the optimal summed average AoI specified in Theorem~\ref{result2_new} (iii) and (iv), respectively.

\section{Converse of Theorem~\ref{result1}}\label{sec:converse-thm1}
In this section, we prove the converse of Theorem~\ref{result1}, i.e., we will show that the summed long-term average AoI under any updating scheme cannot be lower than that specified in Theorem~\ref{result1}. Towards that, we first introduce the concept of degree of freedom (DoF) in this context, and then define a subset of schemes where the optimal scheme must lie in.

\begin{Definition}[Degree of Freedom (DoF)]\label{defn:DoF}
In a time slot, the degree of freedom (DoF) for a $(K,M,N,B)$ system is the number of linearly independent equations that are delivered to users in the time slot, while the DoF allocated to a user is the number of linearly independent equations that the user receives in the time slot.
\end{Definition}
The definition of DoF characterizes the transmission capability of the system: The total number of symbols decoded by a user cannot exceed the maximum number of linearly independent equations it can receive, as it needs $n$ linearly independent equations to solve for the $n$ unknown variables (symbols).  
For a $(K,M,N,B)$ system, the maximum DoF for each user in any time slot is $\min\{M,N\}$, while the maximum DoF for the whole system is $\min\{M,NK\}$.

\begin{Lemma}\label{lemma:raw}
For any updating scheme, there always exists an equivalent updating scheme under which the precoding matrix is designed in such a way that the receivers' antennas receive raw symbols of the intended updates only and the DoF allocation remains the same. 
\end{Lemma}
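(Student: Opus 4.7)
The plan is to convert any precoding scheme into an equivalent one that delivers only raw update symbols to receivers' antennas while preserving the per-user DoF allocation $\{d_t^{(k)}\}$. The key algebraic enabler is the standing assumption that every submatrix of $\Hv_t$ is full rank almost surely, which lets the transmitter simultaneously steer prescribed raw symbols into prescribed antennas at each user and, where needed, zero-force interference at unintended receivers.

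I would start by fixing an arbitrary scheme and reading off its DoF allocation, noting $d_t^{(k)} \le \min\{M,N\}$ and $D_t := \sum_k d_t^{(k)} \le \min\{M, NK\}$ from Definition~\ref{defn:DoF}. For each pair $(t,k)$, I would then commit to a list of $d_t^{(k)}$ specific raw symbols of user $k$'s pending updates to deliver in slot $t$, chosen causally so that the cumulative raw symbols received by user $k$ through time $t$ coincide with a basis of the portion of user $k$'s update span that the original scheme has rendered decodable by time $t$. Such a list exists by induction on $t$: the decodable span grows by exactly $d_t^{(k)}$ dimensions per slot by the definition of DoF, and each new dimension can be represented by a fresh raw symbol, giving the $d_t^{(k)}$ symbols to transmit.

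The new precoder is then built slot by slot by a single linear-algebra step. Pick $d_t^{(k)}$ antennas of user $k$ for each $k$, stack the corresponding rows of $\Hv_t^{(k)}$ into a block matrix $\bar{\Hv}_t \in (\Fv_q)^{D_t \times M}$, and let $\bar{\sv}_t \in (\Fv_q)^{D_t}$ concatenate the targeted raw symbols. Because $\bar{\Hv}_t$ is a submatrix of $\Hv_t$ with $D_t \le M$, the full-rank-submatrix hypothesis gives $\mathrm{rank}(\bar{\Hv}_t) = D_t$, so the system $\bar{\Hv}_t \xv_t = \bar{\sv}_t$ is consistent and admits a solution $\xv_t$ that is linear in the past updates and CSI. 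The designated $d_t^{(k)}$ antennas of user $k$ then receive exactly the intended raw symbols. To further ensure that the remaining $N - d_t^{(k)}$ antennas of user $k$ carry only raw symbols (and not interference from other users' targeted symbols), one appends zero-forcing constraints requiring $\Hv_t^{(k')} \xv_t$ to lie in the subspace spanned by user $k'$'s own targeted symbols for each $k' \ne k$; by the same full-rank hypothesis these augmented constraints remain feasible whenever $D_t$ is within the system DoF.

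The main obstacle is the symbol-ordering step, because in the original scheme user $k$'s decodable span may be filled by intricate mixtures of symbols from several pending updates, and one must align a raw-symbol delivery list with this evolution consistently across all slots and users. This is closed by the induction sketched above, which exploits the definitional fact that DoF counts linearly independent equations about user $k$'s own undelivered symbols, so every DoF increment corresponds to exactly one fresh symbol becoming solvable in principle. Once this alignment is in place, the DoF allocation is preserved by construction, the decoding time of every update is preserved, and consequently the two schemes induce identical AoI trajectories, which is the sense of equivalence claimed.
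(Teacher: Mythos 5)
Your construction is essentially the paper's own proof: slot by slot, stack the CSI rows of the $d_t^{(k)}$ designated antennas of each user into a full-rank $D_t\times M$ submatrix, invert it to deliver raw symbols matching the original DoF allocation, and conclude that every update is decoded no later than under the original scheme. The only divergence is your added zero-forcing of the non-designated antennas, which the paper omits (their observations can simply be discarded) and whose feasibility you assert rather than establish, but nothing in the lemma's use downstream requires that step.
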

\begin{proof}
Without loss of generality, we assume the transmitter starts to update the users at time slot 1. Let $n_k$
be the DoF allocated to user $k$ under the original scheme at time slot 1. Then, we must have $n_k\leq N$, $\tilde{N}:=\sum_{k=1}^{K}n_k\leq M$. Let $\yv^{(k)}_1=(w_1^{(k)}[1],\ldots,w_1^{(k)}[n_k])^\mathsf{T}$ be the symbols received by user $k$ at time 1 under the equivalent updating scheme. We can always design a precoding matrix in the form of
\begin{align}
\tilde{\Hv}_1^{-1}:=
\begin{pmatrix}
\tilde{H}_1^{(1)} \\
\tilde{H}_1^{(2)} \\
\vdots \\
\tilde{H}_1^{(K)}
\end{pmatrix}^{-1},
\end{align}
where $\tilde{H}_1^{(k)}\in(\Fv_q)^{n_k\times \tilde{N}}$ is a submatrix of $H_1^{(k)}$ corresponding to the CSI between the first $\tilde{N}$ transmitting antennas and the first $n_k$ receiving antennas at user $k$. Under the assumption that any submatrix of $\Hv_t$ is full-rank almost surely, $\yv^{(k)}_1$ will be delivered to user $k$ in time slot $1$. We note that the new updating scheme maintains the same DoF allocating under the original scheme. We then continue this process in time slot $2$, during which raw symbols not included in $\{\yv^{(k)}_1\}_{k=1}^{K}$ will be delivered. Since we always keep the DoF allocation the same under both schemes, under the newly constructed updating scheme, the intended updates will be decoded no later than that under the original scheme, rendering an equivalent or even better AoI performance. 
\end{proof}

\begin{Definition}[Set of efficient updating schemes $\Pi_0$]\label{defn:Pi_0}
For the $(K,M,N,B)$ system with any given initial state, denote $\Pi_0$ as a set of deterministic updating schemes that deliver raw packets to users only while satisfying the following properties:
\begin{itemize}
    \item[i)] All transmitted updates will be decoded at the intended user and reset the corresponding AoI.
    \item[ii)] Any delivered update $\wv^{(k)}_t$ is transmitted starting from its generation time $t$.
    \item[iii)] The transmitter will utilize the maximum DoF during the transmission of any update unless in the time slot when the update is delivered. 
    \item[iv)] Among the symbols delivered to the same user, symbols generated earlier are delivered no later than symbols generated later. 
\end{itemize}
\end{Definition}

\begin{Theorem}\label{thm:deterministic_policy}
The updating scheme that achieves the minimum summed long-term average AoI lies in $\Pi_0$.
\end{Theorem}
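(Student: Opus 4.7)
The plan is to prove the theorem by an exchange argument: given any updating scheme $\pi$, I would construct a scheme $\pi' \in \Pi_0$ whose summed long-term average AoI is no larger than that of $\pi$. Two preliminary reductions simplify the argument. First, by a standard time-sharing/convexity observation the infimum of $\Delta$ is attained by a deterministic policy, so I may restrict attention to deterministic schemes. Second, Lemma~\ref{lemma:raw} lets me further restrict to schemes that deliver only raw symbols of the intended updates with the same DoF allocation, without degrading the AoI.

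I would then enforce Properties (i)--(iv) of Definition~\ref{defn:Pi_0} in sequence. For (i), any symbol belonging to an update that is never fully decoded contributes nothing to an AoI reset, so its transmission can be dropped or repurposed to carry raw symbols of updates that are decoded. For (ii), if a delivered update $\wv^{(k)}_t$ has its first transmitted symbol at time $t' > t$, then along its transmission schedule $[t',\sigma]$ I would replace the payload by the fresher packet $\wv^{(k)}_{t'}$; since all $B$ transmitted symbols lie in $[t',\sigma]$ and $\wv^{(k)}_{t'}$ is generated at $t'$, the replacement is admissible and the AoI reset value at $\sigma$ strictly decreases from $\sigma - t$ to $\sigma - t'$. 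For (iii), any non-delivery time slot that leaves transmit antennas idle can be filled with raw symbols of an ongoing or new update, which either accelerates a decoding event or is absorbed as harmless additional progress. For (iv), if $\wv^{(k)}_{t_1}$ and $\wv^{(k)}_{t_2}$ with $t_1 < t_2$ are delivered to user $k$ in non-FIFO order (say $\wv^{(k)}_{t_2}$ at $\sigma_2 < \sigma_1$), then the delivery of $\wv^{(k)}_{t_1}$ is wasted (it triggers no reset because $t_1 < t_2$), and I would replace its transmissions by those of a packet freshly generated after $\sigma_2$, converting a wasted delivery into a useful reset.

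The step I expect to be the main obstacle is ensuring that these four exchanges are mutually consistent: applying one naively can undo another. For instance, the replacement step for (iv) introduces a packet whose first transmitted symbol may precede its newly assigned generation time, apparently violating (ii). My plan to address this is to sweep the time axis from left to right, treating already-normalized portions of the schedule as frozen, and to fuse (ii) and (iv) into a single ``label-refresh'' step that assigns each decoded packet the latest generation time compatible with its own transmission window. Since the AoI is monotone in these generation-time labels and the DoF allocation per slot is finite, the sweep terminates in finitely many modifications and yields a deterministic scheme in $\Pi_0$ whose summed long-term average AoI is no larger than that of $\pi$, establishing the theorem.
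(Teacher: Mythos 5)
Your overall route is the same as the paper's: reduce to deterministic policies, invoke Lemma~\ref{lemma:raw} to restrict to raw-symbol schemes with unchanged DoF allocation, and then enforce properties (i)--(iv) of Definition~\ref{defn:Pi_0} one at a time by exchange arguments. Your treatments of (i), (ii) and (iii) match the paper's in substance, and your worry about the mutual consistency of the exchanges (handled by your left-to-right sweep) is a legitimate refinement of a point the paper passes over silently.

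There is, however, a genuine gap in your treatment of property (iv). That property is a statement about \emph{symbols}: every symbol of an earlier-generated update must be delivered no later than every symbol of a later-generated update to the same user. The case you analyze --- complete updates delivered in reversed order, $\sigma_2<\sigma_1$ with $t_1<t_2$ --- is already excluded by property (i), since the delivery of $\wv^{(k)}_{t_1}$ at $\sigma_1$ resets nothing. The case that actually carries the content of (iv), and the one the paper addresses, is when the two updates are delivered in order ($d_1<d_2$) and both reset the AoI, but their symbol transmissions are interleaved: because property (ii) forces $\wv^{(k)}_{t_2}$ to start transmitting at its generation time $t_2$, a symbol of $\wv^{(k)}_{t_2}$ reaches user $k$ at time $t_2<d_1$, before $\wv^{(k)}_{t_1}$ is complete. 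The paper removes this interleaving by swapping, at time $t_2$, one symbol of $\wv^{(k)}_{t_2}$ with a symbol of $\wv^{(k)}_{t_1}$ that was originally delivered at $d_1$, which can only shorten $d_1$ without delaying $d_2$. Your argument never rules out this interleaved configuration, and it is precisely the configuration that must be excluded for the downstream use of $\Pi_0$ (e.g., the claim in Appendix~\ref{app:1} that at any time there is at most one partially transmitted update per user). You would need to add a symbol-swap step of this kind to your label-refresh sweep to close the argument.
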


\begin{proof}
First, we note that due to the deterministic system model, the optimal policy should be deterministic, as we can always execute the sample path that renders the minimum summed long-term average AoI under any randomized policy to outperform the original randomized policy.

Next, due to the deterministic setting, the system can foresee the AoI evolution under any deterministic updating scheme; Thus, it is unnecessary to transmit symbols that will not help to improve the AoI. 

Property ii) can be shown by noticing that starting to transmit an older update instead of the newly generated one at time $t$ leads to higher AoI when the update is delivered.

Property iii) is based on the following observation: Assume the DoF is not fully utilized during the transmission of an update $\wv^{(k)}_t$ before it is delivered in time $d$, i.e., in a time slot $t'$, $t\leq t'<d$, the DoF allocated to user $k$ is less than $N$, and the total DoF allocated to all users is less than $M$. Then we can allocate at least one more DoF to the user $k$ in time slot $t'$ without affecting the DoF allocation to other users. Thus, one more symbol from $\wv^{(k)}_t$ can be delivered in time $t'$, potentially reducing the time used to deliver $\wv^{(k)}_t$. Since an earlier delivery will strictly improve the AoI, the new policy performs better or at least the same as the original policy.

Property iv) can be proved through contradiction: assume under the optimal policy two updates $\wv^{(k)}_{t_1}$ and $\wv^{(k)}_{t_2}$, $t_1<t_2$, are delivered to user $k$ at time $d_1$ and $d_2$, $d_1<d_2$, respectively. Assume $d_1> t_2$. Since $\wv^{(k)}_{t_2}$ must be transmitted at time $t_2$, at least one of its symbols is delivered to user $k$ at time $t_2$. Meanwhile, since $\wv^{(k)}_{t_1}$ is not delivered until $d_1$, we can always switch the transmission of one symbol from $\wv^{(k)}_{t_2}$ with another symbol from $\wv^{(k)}_{t_1}$ that is delivered at $d_1$ under the original scheme. This potentially shortens the delivery time for $\wv^{(k)}_{t_1}$ without affecting the delivery time of $\wv^{(k)}_{t_2}$, which improves the AoI. 
\end{proof}

In the following, we will restrict to updating schemes in $\Pi_0$ only. Instead of considering the long-term average AoI, in the remaining of this section, we partition the time-axis into frames of length $B$, and investigate the minimum summed AoI in any frame. Since the summed long-term average AoI must be greater than the minimum time-average summed AoI in any frame, the latter serves as a lower bound for the former.

\begin{Lemma}\label{lemma:N=1}
For the $(K,M,1,B)$ system, under any policy $\pi\in\Pi_0$, during any consecutive $B$ time slots, at most $\min\{K,M\}$ updates are delivered, each to a different user.
\end{Lemma}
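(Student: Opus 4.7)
The plan is to establish the lemma in two parts: (a) any two deliveries in the window $[w_1, w_1+B-1]$ target distinct users, and (b) the total count $p$ of deliveries in the window satisfies $p \leq \min\{K,M\}$.

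For part (a), I argue by contradiction. If $\wv^{(k)}_{t_1}$ and $\wv^{(k)}_{t_2}$ with $t_1 < t_2$ are both delivered to user $k$ at times $d_1 < d_2$ in the window, then Property iv of $\Pi_0$ forces every symbol of $\wv^{(k)}_{t_1}$ to arrive at user $k$ strictly before any symbol of $\wv^{(k)}_{t_2}$, while $N=1$ allows only one symbol per slot. Hence the $B$ symbols of $\wv^{(k)}_{t_1}$ occupy $B$ distinct slots ending at $d_1$, and the $B$ symbols of $\wv^{(k)}_{t_2}$ occupy $B$ distinct slots in $(d_1, d_2]$, forcing $d_2 - d_1 \geq B$ in contradiction with $d_2 - d_1 \leq B - 1$.

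For part (b), the inequality $p \leq K$ is immediate from (a). To obtain $p \leq M$, I count useful degrees of freedom at slot $w_1$. For each delivered update $i$ with $d_i \in [w_1, w_1+B-1]$, the $B$-symbol requirement together with $N=1$ yields $t_i \leq d_i - B + 1 \leq w_1$, so update $i$ is already being transmitted at $w_1$. Invoking Lemma~\ref{lemma:raw} to assume a raw-symbol scheme, together with Properties ii and iii, I will argue that $[t_i, d_i]$ is forced to equal the contiguous block $[d_i-B+1, d_i]$ with one raw symbol of $i$ delivered to user $k_i$ in every slot of this block. Consequently, at slot $w_1$ each of the $p$ delivered updates contributes one raw symbol to its (distinct, by (a)) recipient, so the system delivers at least $p$ linearly independent equations at $w_1$. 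Definition~\ref{defn:DoF} caps the system DoF per slot at $\min\{M, NK\} = \min\{M, K\}$, yielding $p \leq \min\{M, K\}$; combining with $p \leq K$ gives the claim.

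The main obstacle is the rigorous elimination of transmission gaps. Property iii alone does not forbid a gap in the transmission of update $i$ when the total DoF at the gap slot already saturates at $M$, which can happen in the regime $K > M$ where multiple pending updates compete for the $M$ transmit DoFs. My plan is to combine Lemma~\ref{lemma:raw} with a local exchange argument: at any gap slot $t^\ast \in (t_i, d_i)$ where user $k_i$ receives no raw symbol of $i$, I swap the raw symbol given to some other delivered update at $t^\ast$ with a later raw symbol of $i$. Iterating this collapses $[t_i, d_i]$ to length $B$ without increasing any $d_j$, and I would need to verify that the exchange respects Property iv for every user involved so that the modified scheme remains in $\Pi_0$.
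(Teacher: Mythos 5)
Your part (a) is correct and is in substance the paper's second observation: with $N=1$ a user can absorb at most one linearly independent equation per slot, so it cannot collect the $2B$ equations needed for two decoded updates inside $B$ consecutive slots; routing this through Property iv of Definition~\ref{defn:Pi_0} is fine. The genuine gap is in part (b). Your bound $p\leq M$ rests on the claim that every update delivered in the window is transmitted contiguously over exactly the $B$ slots $[d_i-B+1,d_i]$, so that all $p$ of them each occupy one DoF at slot $w_1$. As you yourself note, Property iii does not deliver this when $K>M$: as proved in the paper, it only forbids an idle allocation when \emph{both} the user's DoF is below $N$ \emph{and} the system total is below $M$, so an update's transmission can be interrupted for several slots while all $M$ transmit DoFs serve other users, and then the update need not be ``on the air'' at $w_1$ at all. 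The proposed repair --- an exchange argument collapsing $[t_i,d_i]$ to length $B$ --- cannot close this gap for a structural reason: the lemma quantifies over \emph{every} $\pi\in\Pi_0$, so you must bound the delivery count of the given policy, not of a modified one. Moreover your exchanges may strictly decrease delivery times, so a delivery at $d_i=w_1$ can migrate to a slot before the window; the modified scheme may then have strictly fewer deliveries inside $[w_1,w_1+B-1]$ than the original, and a bound on the former says nothing about the latter.

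The paper takes a much shorter route that needs neither contiguity nor any localization to the first slot: by Definition~\ref{defn:DoF} the system delivers at most $\min\{M,NK\}=\min\{K,M\}$ linearly independent equations per slot, hence at most $B\min\{K,M\}$ over any $B$ consecutive slots, while each decoded update costs $B$ of them; dividing gives at most $\min\{K,M\}$ decoded updates, and the per-user budget of $B$ equations gives distinctness. I will add that the subtlety you uncovered is real and not entirely dispelled by the paper either: the aggregate count charges each decoded update $B$ equations \emph{inside} the window, which presumes its equations do not straddle the window boundary. So you identified a genuine weak point, but the fix is to patch the counting (e.g., by combining with the per-user argument), not to establish contiguity of every transmission interval --- the latter is not a consequence of $\Pi_0$ as defined.
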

\begin{proof}
First, we note that the maximum DoF for the $(K,M,1,B)$ system in any time slot is $\min\{K,M\}$. Thus, the maximum number of linearly independent equations delivered in each frame is $B\min\{K,M\}$, which implies that at most $\min\{K,M\}$ updates can be decoded in any frame.
Next, we note that the maximum DoF for each user is 1 since it only has one receiving antenna. Thus, at most one update can be decoded for each user in any frame. 
Therefore, in any time frame, at most $\min\{K,M\}$ updates are delivered, each for a different user.
\end{proof}

\begin{Theorem}\label{coro:thm1-case1}
For the $(K,M,1,B)$ system with $K\leq M$, the summed AoI in frame consisting of time slots $[mB+1:(m+1)B]$, $m\in\Zb_{\geq 2}$, is lower bounded by $\frac{1}{2}K(3B-1)$.
\end{Theorem}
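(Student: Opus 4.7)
The plan is to prove a per-user bound $\sum_{t=mB+1}^{(m+1)B}\delta_t^{(k)} \geq B(3B-1)/2$ for every $k\in[1:K]$; summing over the $K$ users gives a frame total of at least $KB(3B-1)/2$, so the time-averaged summed AoI across the frame's $B$ slots is at least $\tfrac{1}{2}K(3B-1)$, which is the claimed bound.

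First I would invoke Theorem~\ref{thm:deterministic_policy} to restrict to schemes in $\Pi_0$, and then extract two structural consequences of $N=1$ with $B$-symbol updates: (i) at any reset time $\tau_k$, the reset value $v_k := \delta_{\tau_k}^{(k)}$ satisfies $v_k \geq B$, since decoding requires $B$ linearly independent symbols and a single-antenna user receives at most one symbol per slot; and (ii) two consecutive resets of the same user are separated by at least $B$ slots, by the same counting reason. In particular each user is reset at most once within a frame of length $B$.

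Next I would carry out a per-user case analysis. Denote $a_k := \delta_{mB}^{(k)}$ and, when a reset occurs in the frame, $p_k := \tau_k - mB \in [1:B]$. If no reset occurs in the frame, then $\delta_t^{(k)} = a_k + (t-mB)$ and the frame sum equals $Ba_k + B(B+1)/2 \geq B(3B+1)/2 > B(3B-1)/2$, because $a_k \geq B$ (either from a previous reset contributing $v^{\mathrm{prev}}\geq B$, or, if the user has never been reset, from $m\geq 2$ forcing $a_k \geq mB \geq 2B$). If a reset does occur at $\tau_k$, then combining the inter-reset gap $\tau_k - \tau_k^{\mathrm{prev}} \geq B$ with $a_k = mB - \tau_k^{\mathrm{prev}} + v_k^{\mathrm{prev}}$ and $v_k^{\mathrm{prev}}\geq B$ yields the sharper bound $a_k \geq 2B - p_k$; the same bound holds in the no-prior-reset edge case because $m\geq 2$ forces $a_k \geq 2B + 1 \geq 2B - p_k$.

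The main technical step is then to substitute $a_k \geq 2B - p_k$ and $v_k \geq B$ into the closed form
\begin{align*}
\sum_{t=mB+1}^{(m+1)B}\delta_t^{(k)} = (p_k-1)\, a_k + v_k(B-p_k+1) + \tfrac{1}{2}\bigl[p_k(p_k-1)+(B-p_k)(B-p_k+1)\bigr],
\end{align*}
and verify by direct algebra that all $p_k$-dependence cancels, leaving exactly $B(3B-1)/2$. This cancellation is what I expect to be the main obstacle: it is the nontrivial algebraic identity that makes the per-user lower bound tight, and it simultaneously explains why both the fully synchronized scheme (all $p_k=B$) and staggered variants with different $p_k$ across users can achieve the same frame total. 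Summing the resulting per-user inequality over $k \in [1:K]$ then delivers the claimed bound.
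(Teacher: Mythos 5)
Your proposal is correct and follows essentially the same route as the paper: a per-user, per-frame bound built from the two facts that any reset value is at least $B$ and consecutive resets of a single-antenna user are at least $B$ slots apart, with the final cancellation reducing the frame sum to $\sum_{\ell=B}^{2B-1}\ell=\tfrac{1}{2}B(3B-1)$ independent of the reset position. The only difference is presentational: the paper argues via an exchange/extremal-configuration argument (``the minimizer must have $t_k\geq(m-1)B+1$, reset value $B$, and next update at $t_k+B$''), whereas you substitute the explicit inequalities $v_k\geq B$ and $a_k\geq 2B-p_k$ into a closed form, which is a slightly more careful rendering of the same computation.
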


\begin{proof}
We consider the updating scheme that minimizes the summed AoI in the given time frame and ignore the AoI evolution outside the time window. The summed AoI in the frame is determined by the last update before time $mB+1$ and the update within the frame for each user. Denote the last updating time for the $K$ users prior to time $mB+1$ as $t_1\leq t_2\leq \ldots t_K\leq mB$. Then, we have the following observations.

First, if $t_1<(m-1)B+1$, we can always construct an alternative updating scheme under which another update is delivered to the same user at time $t_1+B$ without violating Lemma~\ref{lemma:N=1} and reduce its summed AoI in the frame considered. Thus, to obtain a lower bound on the summed AoI in the frame, we restrict to the scenario $t_1\geq (m-1)B+1$.

Next, we note that the summed AoI in the frame is minimum when the reset AoI at $t_1,t_2,\ldots,t_K$ are equal to $B$, as each update takes at least $B$ time slots to deliver.

Finally, we point out that the summed AoI in the frame can be minimized if the next updating happens exactly $B$ time slots after the previous updating for each user, i.e., at time $t_1+B$, $t_2+B$, $\ldots$, $t_K+B$.  

Calculating the cumulative AoI of each user during frame $m$, we have
\begin{align}
\delta_{mB+1:(m+1)B}^{(k)} &\geq \sum_{\ell=(m-1)B+1}^{t_k+B-1}(\ell-t_k+B)+\sum_{\ell=t_k+B}^{(m+1)B}(\ell-t_k) \nonumber \\
&=\sum_{\ell=B}^{2B-1}\ell
=\frac{1}{2}B(3B-1),  \label{eqn:thm1-case1:1}
\end{align}
and the summed AoI in the frame is lower bounded by $\frac{1}{2}KB(3B-1)$.
\end{proof}

\begin{Theorem}\label{coro:thm1-case2}
For the $(K,M,1,B)$ system with $K=pM+q$, where $p\in\Nb$, $q\in[0:M-1]$, the summed AoI in the frame consisting of time slots $[mB+1:(m+1)B]$, $m\in\Zb_{\geq p+1}$, is lower bounded by $\frac{1}{2}pM(pM+2B-1)+\frac{1}{2}q(2pB+3B-1)$.
\end{Theorem}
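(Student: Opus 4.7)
The plan is to extend the windowed argument of Theorem~\ref{coro:thm1-case1} to the regime $K>M$, where not all users can be served in a single frame. We restrict to $\Pi_0$ via Theorem~\ref{thm:deterministic_policy}, fix a frame $m\ge p+1$, and denote by $t_k\le (m+1)B$ the most recent update time of user $k$ under the scheme.

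First I would invoke Lemma~\ref{lemma:N=1} to bound by $M$ the number of $\{t_k\}_{k=1}^{K}$ lying in any single frame. An improvement argument analogous to the one used in Theorem~\ref{coro:thm1-case1}---namely that whenever a user's last update can be shifted to a strictly later slot without violating Lemma~\ref{lemma:N=1} we strictly reduce the frame-$m$ summed AoI---then shows it suffices to consider the ``packed'' configuration: exactly $M$ users have $t_k$ in each of frames $m-p+1,\ldots,m$, and the remaining $q$ users have $t_k$ in frame $m-p$. The hypothesis $m\ge p+1$ guarantees that the window $[(m-p)B+1,(m+1)B]$ lies beyond the transient, so none of these slots are cut off by the initial state.

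Next I would derive a lower bound on each user's contribution to $\sum_{\ell=mB+1}^{(m+1)B}\delta_\ell^{(k)}$. For a user with $t_k$ at the last slot of frame $m-j$, $1\le j\le p$, and no further update within frame $m$, the reset AoI is at least $B$, so the contribution is at least $\sum_{s=1}^{B}(s+jB)=\frac{B(B+1)}{2}+jB^2$. For a user whose $t_k$ lies within frame $m$, the contribution also depends on the previous update $t_k^{\mathrm{prev}}$. Here lies the main obstacle: the $M$ users with $t_k$ in frame $m$ cannot all place $t_k^{\mathrm{prev}}$ in frame $m-1$, since that frame is already saturated by $M$ distinct users whose latest update is there. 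By propagating this bipartite-matching constraint through frames $m-1,\ldots,m-p$, one argues that exactly $M-q$ of the users in frame $m$ can have $t_k^{\mathrm{prev}}$ in frame $m-p$ (effective cycle $pB$), while the other $q$ must have $t_k^{\mathrm{prev}}$ in frame $m-p-1$ or earlier (effective cycle at least $(p+1)B$). This is precisely the $p(M-q)$-versus-$q(p+1)$ decomposition of cycle lengths used in the achievability.

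Finally, aggregating per-user contributions yields the bound. A user with effective cycle $pB$ contributes on average $\tfrac{B(pB+2B-1)}{2}$ per frame in steady state, and a user with cycle $(p+1)B$ contributes $\tfrac{B((p+1)B+2B-1)}{2}$; with $p(M-q)$ users of the first kind and $q(p+1)$ of the second, applying the identity $p(M-q)(pB+2B-1)+q(p+1)((p+1)B+2B-1)=pM(pB+2B-1)+q(2pB+3B-1)$ (already verified in Eqn.~(\ref{eqn:decompose})) produces the stated per-frame lower bound. The hard part throughout is the bipartite matching step establishing that the $M$ users updated in frame $m$ must split into the $(M-q,q)$ cycle-length pattern; once this structural fact is in place the remaining algebra is mechanical and identical to the achievability computation.
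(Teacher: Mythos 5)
Your overall route is the paper's: restrict to $\Pi_0$, use Lemma~\ref{lemma:N=1} to argue that the last-update times can be taken packed, with $M$ users in each of the $p$ frames preceding frame $m$ and $q$ users one frame further back, take every reset AoI equal to $B$, and sum per-user contributions over frame $m$. Your formula $\sum_{s=1}^{B}(s+jB)$ for the non-updated users and the $(M-q,q)$ split of the frame-$m$ recipients between previous-update frames $m-p$ and $m-p-1$ are the right structural facts and agree with the paper's configuration.

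The gap is in the final aggregation. The theorem bounds the summed AoI in \emph{one fixed frame}, but you aggregate by assigning each user its \emph{steady-state per-frame average} over a cycle of length $pB$ or $(p+1)B$ and multiplying by the cycle-class counts $p(M-q)$ and $q(p+1)$. An individual user's contribution to frame $m$ is not its cycle average --- it is below average in the frame where it is updated and above average just before --- so this step does not yield a per-frame lower bound as written. The numbers come out right only because, in the packed configuration, the $K$ users at frame $m$ occupy every phase of every cycle slot exactly once, so the single-frame sum equals $(M-q)f(p)+qf(p+1)$; you must either invoke that exchangeability explicitly or, as the paper does, directly sum the $K$ frame-$m$ contributions (a user $\ell$ frames past its last update contributes $\ell B^2+\tfrac{B(B+1)}{2}$ if not updated in frame $m$, and $\ell B^2-\ell B+\tfrac{B(B+1)}{2}$ if re-updated at the frame's end). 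A second omission: you never constrain \emph{when inside frame $m$} the $M$ deliveries occur. The frame-$m$ recipients were last served $p$ or $p+1$ frames earlier, so the per-user spacing constraint alone would allow them to be re-updated in the very first slot of frame $m$, lowering the sum below the claimed bound; the paper closes this with a sliding-window application of Lemma~\ref{lemma:N=1} (the $M$ deliveries already sitting in frame $m-1$ force the $\ell$-th delivery of frame $m$ to occur no earlier than $t_{K-M+\ell}+B$), and your argument needs the same step.
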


\begin{proof}
Similar to the $K\leq M$ case, the summed AoI in the frame is determined by the last update before time $mB+1$ and the update in the frame for each user. Denote the last updating time for the $K$ users prior to time $mB+1$ as $t_1\leq t_2\leq \ldots t_K\leq mB$. Then, we have the following observations.

Since $K>M$, according to Lemma~\ref{lemma:N=1}, at most $M$ users can be updated in each frame. Then, to obtain a lower bound on the summed AoI in the frame, we assume $t_{K-\ell M+1},\ldots,t_{K-(\ell-1)M}$ lie in the frame starting at $(m-\ell)B+1$, $1\leq \ell \leq p$, and $t_1,\ldots t_{q}$ lie in the frame starting at $(m-p-1)B+1$. This is because if the update times are not in the corresponding frames, we can always reschedule the transmission of updates without violating Lemma~\ref{lemma:N=1} and reduce the corresponding AoI contribution from those updates in the frame starting at $mB+1$.

Then, following the same argument as for the $K\leq M$ case, 
the summed AoI in the frame starting at $mB+1$ is minimum when the reset AoI at $t_1, t_2,\ldots, t_K$ are equal to $B$.
 
Besides, to minimize the summed AoI in the frame, the transmitter should update the $M$ users with the highest AoIs during the frame. Due to the constraints imposed by Lemma~\ref{lemma:N=1}, the updates should happen at time $t_{K-M+1}+B$, $t_{K-M+2}+B$, $\ldots$, $t_K+B$.  

Calculating the summed AoI of all users during the frame, we have the lower bound hold.
\end{proof}

\section{Converse of Theorem~\ref{result2_new}}\label{sec:converse-thm2}
In the following, we let $\bar{\delta}_t$ be the vector consisting of $\{\delta_t^{(k)}\}_k$ arranged in the increasing order, and name it the {\it AoI pattern} at time $t$. We note that the summed AoI in any time slot can be determined by the AoI pattern without considering the specific AoI at each user. We name the AoI pattern that renders the minimum summed AoI in any time slot as the {\it minimum AoI pattern}. We note that the summed long-term AoI is lower bounded by the sum of the AoIs in the minimum AoI pattern. 

For the first two cases in Theorem~\ref{result2_new}, we can obtain lower bounds as follows. 

For the case when $N\geq B$ and $\frac{M}{B}\geq 2$, the AoI at each user is lower bounded by one due to the transmission delay, i.e., the minimum AoI pattern at any time slot is $(1,1)$. Therefore, the summed average AoI is lower bounded by $2$. 

For the case when $N\geq B$ and $1\leq \frac{M}{B}<2$, at most one update generated at the beginning of time slot $t$ can be delivered. Thus, at the end of time slot $t$, at most one user can be updated with AoI reset as 1, while the other user is either not updated, or updated with AoI reset as 2. The minimum AoI pattern is thus $(1,2)$ and the summed AoI is lower bounded by 3. Those two lower bounds match with the AoI obtained under the updating schemes described in Section~\ref{sec:achievale-thm2-case1} and Section~\ref{sec:achievale-thm2-case2}, indicating the optimality of the updating schemes. 

In the following, we provide a matching lower bound for case (iii) in Theorem~\ref{result2_new}, i.e., when $N\geq M$ and $\frac{j}{ij+1}\leq \frac{M}{B}<\frac{j+1}{(j+1)i+1}$, for some $i,j\in\Nb$. The lower bound for Theorem~\ref{result2_new} (iv) can be derived similarly and deferred to Appendix~\ref{app:result2_iv}.

For a $(2,M,N,B)$ system with $N\geq M$, we note that the maximum DoF of the whole system is $M$, while the maximum DoF allocated to individual users is also $M$. The transmitter needs to decide how to split its DoF between the two users in each time slot. 

In the following, we first obtain a lower bound for a subset of policies named as alternating updating schemes, and then show that the lower bound applies to any policy lying in $\Pi_0$.

\begin{Definition}[Set of alternating updating schemes $\Pi_1$]\label{defn:alternating_updating}
Under an alternating updating scheme $\pi\in\Pi_1\subset\Pi_0$, in each time slot, the transmitter utilizes all of its DoF on a single user unless an update is decoded. Besides, the two users are updated alternately. 
\end{Definition}

\begin{remark}
We note that under the alternating policy, the user to be updated next is always the user with higher AoI.
\end{remark}

For any policy $\pi \in \Pi_1$, it can be represented as a sequence of blocks, where each block $B_{u,v}$ consists of $v$ idling time slots followed by $l_u:=\lceil uB/M\rceil$ time slots, during which the transmitter exhausts its DoF to send $u$  updates to the two users alternately. When $v=0$, we simply express $B_{u,0}$ as $B_u$. An updating scheme for the $(2,2,3,3)$ system is shown in Fig.~\ref{fig:thm2-block}, which can be represented by $(B_{3},B_{2,1},B_{1},\cdots)$ as illustrated.

\begin{figure}[t]
	\centering
	\includegraphics[width=3.4in]{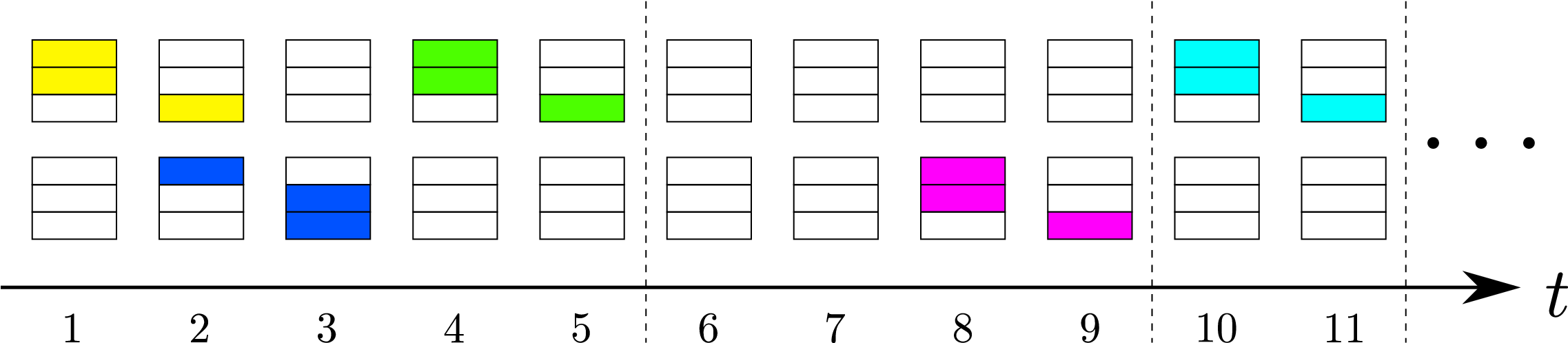}
	\captionof{figure}{An alternating updating scheme for the $(2,2,3,3)$ system, which can be equivalently represented as $(B_{3},B_{2,1},B_{1},\cdots)$.}
	\label{fig:thm2-block}
\end{figure}

\begin{Lemma}\label{lemma:thm2-case5:minAoI}
For the {$(2,M,N,B)$ system with $N\geq M$ and $\frac{j}{ij+1}\leq \frac{M}{B}<\frac{j+1}{(j+1)i+1}$}, $i,j\in\Nb$, the minimum AoI pattern in any time slot is $(i+1,2i+\lceil \frac{2}{j}\rceil)$.
\end{Lemma}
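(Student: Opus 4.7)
The plan is to establish the minimum AoI pattern through two complementary lower bounds derived from degree-of-freedom (DoF) counting, and then invoke the framed alternating scheme of Section~\ref{sec:achievale-thm2-case5} for matching achievability. Both lower bounds rely only on structural properties of $\Pi_0$, so the result holds throughout $\Pi_1$.

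First, I would show that for every user $k$ and every time slot $t$, $\delta_t^{(k)}\geq i+1$. By Lemma~\ref{lemma:raw} it suffices to consider raw-symbol schemes, where user $k$ must accumulate $B$ linearly independent symbols of its most recent delivered update. Since the transmitter has $M$ antennas, the user absorbs at most $\min\{M,N\}=M$ such symbols per slot, so any delivery occupies at least $\lceil B/M\rceil$ consecutive slots. From the hypothesis $B/M\in(i+\tfrac{1}{j+1},\,i+\tfrac{1}{j}]$ one reads off $\lceil B/M\rceil=i+1$, forcing the AoI at the delivery time---and hence at any later slot before the next delivery---to be at least $i+1$.

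Next, I would couple the two users via a joint DoF argument to obtain the bound on the larger AoI. Fix $t$ and let $\tau_1,\tau_2$ be the generation times of the most recently delivered updates to users $1$ and $2$. Assume without loss of generality $\tau_1\leq\tau_2$, so $\delta_t^{(1)}=t-\tau_1+1$ is the larger entry of the pattern. By Property (ii) of Definition~\ref{defn:Pi_0}, the transmissions of $\wv^{(1)}_{\tau_1}$ and $\wv^{(2)}_{\tau_2}$ both lie inside the window $[\tau_1,t]$; each requires $B$ raw receive symbols at its designated user, and since the two users are distinct these symbols consume disjoint portions of the transmitter's DoF. With a per-slot budget of $M$ receive DoF, the total DoF available in the window is $M(t-\tau_1+1)$, which must therefore be at least $2B$. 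Because $\delta_t^{(1)}$ is an integer, this yields $\delta_t^{(1)}\geq\lceil 2B/M\rceil$; and substituting $2B/M\in(2i+\tfrac{2}{j+1},\,2i+\tfrac{2}{j}]$ and splitting into $j=1$ (ceiling $=2i+2$) and $j\geq 2$ (ceiling $=2i+1$) collapses the bound to $\lceil 2B/M\rceil=2i+\lceil 2/j\rceil$ in a unified way.

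Combining the two bounds, every achievable ordered pattern $(\delta_{(1)}(t),\delta_{(2)}(t))$ satisfies $\delta_{(1)}(t)\geq i+1$ and $\delta_{(2)}(t)\geq 2i+\lceil 2/j\rceil$, so the summed AoI is at least $3i+1+\lceil 2/j\rceil$. To see the bound is tight, I would substitute $t_n=m(ij+1)+ni+1$ into the AoI evolution under the framed alternating scheme and check that, immediately after an intra-frame delivery when $j\geq 2$, or at the end of a frame when $j=1$, the two users' AoIs simultaneously attain $i+1$ and $2i+\lceil 2/j\rceil$, so the pattern $(i+1,2i+\lceil 2/j\rceil)$ is indeed realized and is therefore the minimum AoI pattern. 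The main delicate point in writing this out carefully is justifying DoF-disjointness across $[\tau_1,t]$ at the shared decoding slots---where Property (iii) permits splitting the $M$ DoF between the two users---since this is precisely the regime where the integer ceiling separates $j=1$ from $j\geq 2$.
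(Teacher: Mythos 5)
Your proof is correct and takes essentially the same approach as the paper: the paper's argument is exactly the two DoF counts, namely that one update needs $\lceil B/M\rceil=i+1$ slots and that updating both users needs $2B$ linearly independent equations and hence $\lceil 2B/M\rceil=2i+\lceil 2/j\rceil$ slots. Your write-up merely makes the second count precise via the window $[\tau_1,t]$ and adds an achievability check that the paper omits, but the substance is identical.
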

\begin{proof}
In order to decode an update, the transmitter needs to deliver at least $B$ linearly independent equations to the user. Due to the DoF constraint, it requires at least $\lceil B/M\rceil$ time slots. Since $i+\frac{1}{j} \geq \frac{B}{M}>i+\frac{1}{j+1}$, $\lceil \frac{B}{M}\rceil=i+1$ for any $i,j\in\Nb$. Thus, the minimum AoI for any user in any time slot is $i+1$. 

In order to update both users, it requires to deliver at least $2B$ linearly independent equations, which needs $2i+\lceil\frac{2}{j}\rceil$ time slots. Thus, the minimum AoI pattern is $(i+1,2i+\lceil\frac{2}{j}\rceil)$. 
\end{proof}
 
\begin{remark}\label{remark:min}
According to Lemma~\ref{lemma:thm2-case5:minAoI}, we can see that if $j=1$, the minimum AoI pattern is $(i+1,2i+2)$; if $j\geq 2$, the minimum AoI pattern becomes $(i+1,2i+1)$.
\end{remark}

In the following, we will first study a work-conserving updating scheme $\pi_1\in\Pi_1$, under which the transmitter exhausts its DoF at each time slot and update the two users continuously. By establishing the relationship between policy $\pi_1$ and block $B_u$, we will identify a lower bound on the summed average AoI over a block $B_u$, based on which we are able to obtain a lower bound on the summed average AoI for any policy in $\Pi_1$.

Without loss of generality, under policy $\pi_1$, we assume the initial AoI pattern at time 0 is the minimum AoI pattern $(i+1,2i+\lceil\frac{2}{j}\rceil)$.

\begin{Lemma}\label{lemma:thm2-duration}
For the {$(2,M,N,B)$ system with $N\geq M$ and $\frac{j}{ij+1}\leq \frac{M}{B}<\frac{j+1}{(j+1)i+1}$}, $i,j\in\Nb$, under policy $\pi_1$, the duration between two consecutive delivered updates is either $i$ or $i+1$ time slots. 
\end{Lemma}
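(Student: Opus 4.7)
The plan is to track how the transmitter's degrees of freedom are distributed between two consecutive updates under the work-conserving alternating policy $\pi_1$. Let $t_n$ denote the time slot in which the $n$-th delivered update is decoded, and let $u_n \in (0,M]$ denote the number of DoF that the transmitter allocates to this update inside its own delivery slot $t_n$ (so that exactly $B - u_n$ DoF were used for it in the preceding slots). Because $\pi_1$ is work-conserving and alternates between the two users, the remaining $M - u_n$ DoF in slot $t_n$ are immediately reassigned to the other user, i.e., they are used to begin transmitting the $(n+1)$-st update.

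First I would write a DoF-counting identity for update $n+1$. Its delivery consumes $(M - u_n)$ DoF in slot $t_n$, a full $M$ DoF in each of the $t_{n+1} - t_n - 1$ intermediate slots, and $u_{n+1}$ DoF in the delivery slot $t_{n+1}$. Setting the total equal to $B$ gives
\begin{align}
(M - u_n) + (t_{n+1} - t_n - 1)M + u_{n+1} = B,
\end{align}
which simplifies to the recursion
\begin{align}
(t_{n+1} - t_n)\, M = B + u_n - u_{n+1}.
\end{align}

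Next I would plug in the structural consequence of the hypothesis $\frac{j}{ij+1} \leq \frac{M}{B} < \frac{j+1}{(j+1)i+1}$. Writing $B = iM + \alpha$ with $\alpha := B - iM$, this range translates to $\alpha \in (M/(j+1), M/j] \subseteq (0, M]$. Substituting,
\begin{align}
t_{n+1} - t_n = i + \frac{\alpha + u_n - u_{n+1}}{M}.
\end{align}
Since $u_n, u_{n+1}, \alpha$ all lie in $(0,M]$, the numerator $\alpha + u_n - u_{n+1}$ lies in $(-M, 2M)$; because $t_{n+1} - t_n$ must be a positive integer, the only admissible values are $i$ or $i+1$, which is what the lemma asserts.

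To close the argument I would verify that both values actually occur and that the dynamics are consistent, via the natural case split: if $u_n + \alpha \leq M$, forcing $u_{n+1} = u_n + \alpha \in (0, M]$, then the duration equals $i$; otherwise $u_n + \alpha > M$, forcing $u_{n+1} = u_n + \alpha - M \in (0, M]$, and the duration equals $i+1$. The main obstacle is merely the bookkeeping at the boundaries: one must fix the convention $u_n \in (0,M]$ (rather than $[0,M)$) so that the ``delivery slot'' is well-defined even when an update ends exactly at a slot boundary, and one must be careful that the initial update, which starts from the idealized minimum AoI pattern with no leftover DoF, produces $u_1 = \alpha \in (0,M]$ so that the recursion kicks in cleanly from $n = 1$ onward.
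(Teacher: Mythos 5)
Your proof is correct and follows essentially the same route as the paper: both arguments count the $B$ DoF consumed by the next update starting from the leftover $M-u_n$ (the paper's $M-D_m$) DoF in the previous delivery slot, and conclude that the delivery takes $\lceil (B-M+D_m)/M\rceil \in \{i,i+1\}$ further slots; your conservation identity $(t_{n+1}-t_n)M = B + u_n - u_{n+1}$ is just a unified restatement of the paper's two-case ceiling bound. The explicit recursion $u_{n+1} = u_n + \alpha \pmod{M}$ you extract is a harmless bonus that the paper effectively re-derives later when locating the indices $U_m$.
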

\begin{proof}
Assume under policy $\pi_1$, the $m$-th update is delivered at time slot $t_m$. We will show that the $(m+1)$-th update is delivered either at time $t_m+i$ or at time $t_m+i+1$.

1) If the $m$-th update takes up all DoFs in time slot $t_m$, then, the $(m+1)$-th update is generated at time slot $t_{m}+1$, which must be delivered at time $t_m+i+1$ as it takes exactly $i+1$ time slots to deliver.

2) If the $m$-th update takes $D_m$ DoFs of time slot $t_m$ where $ D_m\in[1:M-1]$, then, the $(m+1)$-th update will be generated at time $t_m$ under policy $\pi_1$ and take up the remaining $M-D_m$ DoFs. It will be delivered at time $t_m+\lceil \frac{B-M+D_m}{M}\rceil$. Since $\lceil \frac{B}{M}\rceil=i+1$, we have
\begin{align}
&\left\lceil \frac{B-M+D_m}{M}\right\rceil\geq \left\lceil \frac{B+1}{M}\right\rceil-1\geq \left\lceil \frac{B}{M}\right\rceil-1=i, \\
&\left\lceil \frac{B-M+D_m}{M}\right\rceil \leq \left\lceil \frac{B-1}{M} \right\rceil \leq \left\lceil \frac{B}{M} \right\rceil =i+1.
\end{align}
Hence, it will be delivered either at time $t_m+i$ or at time $t_m+i+1$.
\end{proof}

Label the delivered updates starting at time $1$ in the order of their delivery time. 
Let $U_m$ be the index of the $m$-th update whose delivery time is $i+1$ time slots after the previous delivered update. 
Since the first update generated at time slot $1$ is delivered at the end of time slot $i+1$, we have $U_1=1$. We can see that the delivery times of the following $j-1$ delivered updates are exactly $i$ time slots after the previous delivery time while the $(j+1)$-th updating time is $i+1$ time slots after the previous update, hence $U_2=j+1$.
In general, for $m\in\Nb$, the duration between the delivery times of updates $U_m$ and $U_m-1$ equals $i+1$ and the durations between any other two consecutive updates are $i$. Thus, the delivery time for $U_m-1$ is at the end of time slot $(U_m-1)i+m-1$. 

By the DoF constraint, we have
\begin{align}
[i(U_m-1)+m-1]{M}\geq (U_m-1){B}, \label{eqn:thm2-case5:2-1}
\end{align}
i.e., the maximum DoF over $[1:(U_m-1)i+m-1]$ must be greater than the DoF required to deliver $U_m-1$ updates. 

Similarly, update $U_m$ is delivered at time $U_m i+m$. Thus,
\begin{align}
[i U_m+m-1]{M}< U_m B, \label{eqn:thm2-case5:2-2}
\end{align}
i.e., the maximum DoF over $[1:U_m i+m-1]$ must be less than the DoF required to deliver $U_m$ updates.

Eqn.~(\ref{eqn:thm2-case5:2-1}) and Eqn.~(\ref{eqn:thm2-case5:2-2}) imply that
\begin{align}
\frac{(m-1)M/B}{1-iM/B}<U_m\leq \frac{(m-1)M/B}{1-iM/B}+1. \label{eqn:thm2-case5:3}
\end{align}
Since $\frac{j}{ij+1}\leq \frac{M}{B}<\frac{j+1}{(j+1)i+1}$, we have
\begin{align}
j\hspace{-0.03in}-\hspace{-0.03in}1\hspace{-0.03in}<\hspace{-0.03in}\frac{M/B}{1\hspace{-0.03in}-\hspace{-0.03in}iM/B}\hspace{-0.03in}-\hspace{-0.03in}1\hspace{-0.03in}<\hspace{-0.03in}U_{m+1}\hspace{-0.03in}-\hspace{-0.03in}U_m\hspace{-0.03in}<\hspace{-0.03in}\frac{M/B}{1\hspace{-0.03in}-\hspace{-0.03in}iM/B}\hspace{-0.03in}+\hspace{-0.03in}1\hspace{-0.03in}<\hspace{-0.03in}j\hspace{-0.03in}+\hspace{-0.03in}2. \label{eqn:thm2-case5:4}
\end{align}
Under the constraint that $U_m$ and $U_{m+1}$ are integers, we must have
\begin{align}
U_{m+1}-U_m=j\quad \mbox{or} \quad j+1.   \label{eqn:thm2-case5:5}
\end{align}

We now partition the time axis into segments by the delivery time of updates $\{U_m-1\}_{m=2}^{\infty}$, i.e., $[1:ij+1]$, $\ldots$, $[(U_m-1)i+m:(U_{m+1}-1)i+m]$, $\ldots$.
According to Eqn.~(\ref{eqn:thm2-case5:5}), the segment length is either $ij+1$ or $(j+1)i+1$. An example of the definition of $U_m$ and the segments for the $(2,7,12,12)$ system is illustrated in Fig.~\ref{fig:segment_ex}.

\begin{figure}[t]
	\centering
	\includegraphics[width=3.4in]{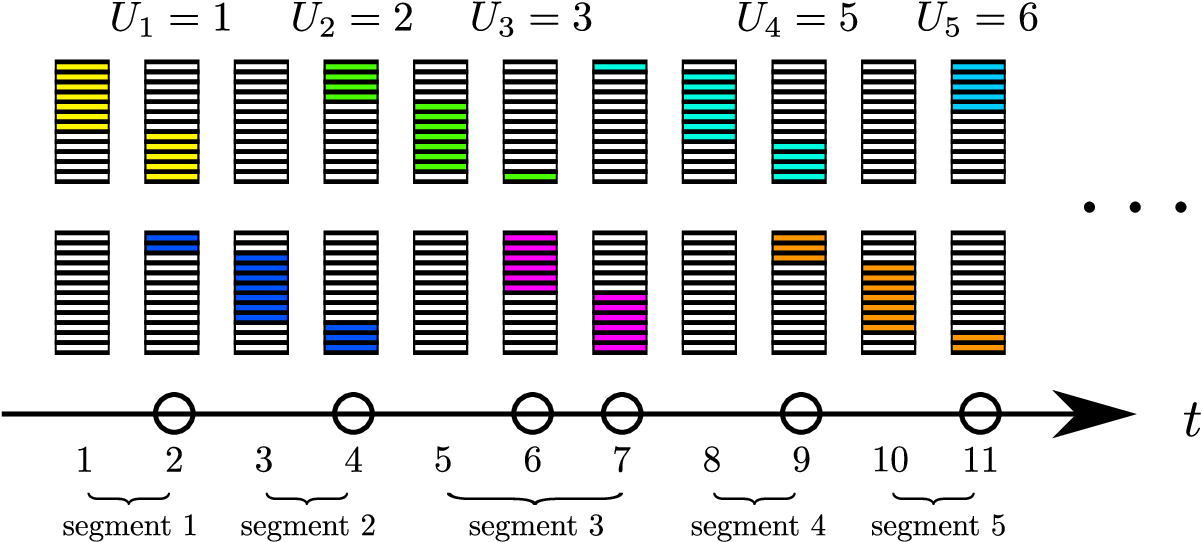}
	\captionof{figure}{Updating pattern in the $(2,7,12,12)$ system under $\pi_1$. We have $i=1$, $j=1$. Circles represent delivery times of updates. Since $i=1$, the segments can be obtained by tracking the updates whose delivery time is 2 time slots after the previous delivery time. We note that the length of each segment is either 2 (i.e., $ij+1$) or 3 (i.e., $(j+1)i+1$). }
	\label{fig:segment_ex}
\end{figure}

\begin{table*}
\caption{Minimum AoI pattern for segments of length $ij+1$, $j\geq 2$, $\ell\in[2:j-1]$. An update is delivered at the end of the time slots in the last column.}
\centering
{\small
\begin{tabular}{|c|c|c|c|c|}\hline
Time slot  & $(U_m-1)i+m$   & $\cdots$ & $U_m i+m-1$ & $U_m i+m$     \\ \hline
Minimum AoI pattern & $(i+2,2i+2)$ & $\cdots$ & $(2i+1,3i+1)$ & $(i+1,2i+2)$ \\  \hline\hline
Time slot   & $U_m i+m+1$  & $\cdots$ & $(U_m+1)i+m-1$   & $(U_m+1)i+m$    \\ \hline
Minimum AoI pattern  & $(i+2,2i+3)$ & $\cdots$ & $(2i,3i+1)$ & $(i+1,2i+1)$  \\  \hline\hline
Time slot    & $(U_m+\ell-1)i+m+1$  & $\cdots$ & $(U_m+\ell)i+m-1$   &  $(U_{m}+\ell)i+m$   \\ \hline
Minimum AoI pattern & $(i+2,2i+2)$ & $\cdots$ & $(2i,3i)$ & $(i+1,2i+1)$  \\ \hline
\end{tabular}
}
\label{tab:thm2:1}
\end{table*}

\begin{table*}
\caption{Minimum AoI pattern for segments of length $(j+1)i+1$, $j\geq 2$, $\ell\in[3:j]$. An update is delivered at the end of the time slots in the last column.}
\centering
{\small
\begin{tabular}{|c|c|c|c|c|}\hline
Time slot  & $(U_m-1)i+m$    & $\cdots$ & $U_m i+m-1$  & $U_m i+m$    \\ \hline
Minimum AoI pattern & $(i+2,2i+2)$  & $\cdots$ & $(2i+1,3i+1)$ & $(i+2,2i+2)$ \\  \hline\hline
Time slot   & $U_m i+m+1$  & $\cdots$ & $(U_m+1)i+m-1$   & $(U_m+1)i+m$    \\ \hline
Minimum AoI pattern  & $(i+3,2i+3)$ & $\cdots$ & $(2i+1,3i+1)$ & $(i+1,2i+2)$ \\  \hline\hline
Time slot    & $(U_m+1)i+m+1$  & $\cdots$ & $(U_m+2)i+m-1$   & $(U_m+2)i+m$   \\ \hline
Minimum AoI pattern  & $(i+2,2i+3)$ & $\cdots$ & $(2i,3i+1)$  & $(i+1,2i+1)$  \\  \hline\hline
Time slot    & $(U_m+\ell-1)i+m+1$  & $\cdots$ & $(U_m+\ell)i+m-1$ & $(U_{m}+\ell)i+m$       \\ \hline
Minimum AoI pattern  & $(i+2,2i+2)$ & $\cdots$ & $(2i,3i)$ & $(i+1,2i+1)$  \\  \hline
\end{tabular}
}
\label{tab:thm2:2}
\end{table*}

\begin{Lemma}\label{lemma:thm2-case3:seg1}
For the {$(2,M,N,B)$ system with $N\geq B$ and $\frac{j}{ij+1}\leq \frac{M}{B}<\frac{j+1}{(j+1)i+1}$}, $i,j\in\Nb$, under policy $\pi_1$, the summed average AoI over segment $[(U_m-1)i+m:(U_{m+1}-1)i+m]$ is lower bounded by $\Delta_{\min}=4i+1+\frac{2i+1}{ij+1}$ if $j\in\Zb_{\geq 2}$, and by $\Delta_{\min}=4i+3$ if $j=1$.
\end{Lemma}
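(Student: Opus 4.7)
My plan is to compute the summed AoI over the segment under $\pi_1$ in closed form and then verify it is at least $\Delta_{\min}\cdot\ell$, where $\ell = (U_{m+1}-U_m)i+1$ is the segment length. Using the structural analysis of $\pi_1$ developed earlier in this subsection, the segment $[t_0, t_0+ui]$ with $t_0:=(U_m-1)i+m$ and $u:=U_{m+1}-U_m\in\{j,j+1\}$ contains exactly $u$ delivery events at $t_0+i, t_0+2i,\ldots,t_0+ui$, alternating between the two users. For each user the AoI trajectory is piecewise linear with unit slope and resets at each delivery, so the cumulative AoI decomposes into arithmetic progressions of the form $\sum_{s=0}^{g-1}(r+s)=gr+g(g-1)/2$. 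Combining the two users' contributions gives a closed-form expression for the summed AoI, linear in the boundary AoIs $\alpha,\beta$ at $t_0$ (for the user updated at $t_0-1$ and the other user, respectively) and in the reset values $r_1,\ldots,r_u$ at the $u$ deliveries inside the segment.

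Next I would bound each component from below. Lemma~\ref{lemma:thm2-case5:minAoI} gives $r_k\geq i+1$, and propagating the minimum AoI pattern $(i+1,2i+\lceil 2/j\rceil)$ forward by one slot yields $\alpha\geq i+2$ and $\beta\geq 2i+\lceil 2/j\rceil+1$. In the short-segment case $u=j$, substituting these minimum values into the closed-form expression and simplifying reduces exactly to $\Delta_{\min}\cdot\ell$, establishing the bound. In the long-segment case $u=j+1$, the naive substitution falls short by $\Theta(i)$, so I would close the gap with a sharper bound on the first reset $r_1$. The first delivery in the segment is update $U_m$, whose gap $i+1$ from the prior delivery means, by Lemma~\ref{lemma:thm2-duration}, that $r_1=i+1$ (Case~1) only when $D_{U_m-1}=M$ and $r_1=i+2$ otherwise (Case~2). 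Using the DoF recursion $D_r\equiv D_{r-1}+\gamma'\pmod M$ with $\gamma':=B-iM$ and $\gamma:=(i+1)M-B$, I would show that $D_{U_m-1}=M$ forces $u_m=j$: indeed then $D_{U_m}=\gamma'$, and iterating the recursion shows the next $r$ with $D_{r-1}>\gamma$ occurs precisely at $r=U_m+j$, forcing $U_{m+1}=U_m+j$. Contrapositively, whenever $u_m=j+1$ we must have $r_1\geq i+2$, and substituting this tighter bound into the closed form closes the gap.

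The main obstacle is the algebraic verification that both substitutions simplify to exactly $\Delta_{\min}\cdot\ell$ across the $j\geq 2$ regime (where $\Delta_{\min}=4i+1+(2i+1)/(ij+1)$) and the $j=1$ regime (where $\Delta_{\min}=4i+3$). Care is also needed for the boundary regime $M\mid B$, which arises only at $j=1$, $B=(i+1)M$: there every $U_m$ is in Case~1, every segment has $u=1=j$, and the short-segment argument with $\beta\geq 2i+3$ applies uniformly, yielding the bound with equality. A further subtlety is that the bounds on $\alpha$ and $\beta$ rely on the alternating structure extending across segment boundaries, which must be justified by observing that $\pi_1$'s work-conserving schedule updates the two users strictly alternately throughout, so the user with smaller AoI at $t_0-1$ is the one just updated and the other user's most recent update lies at least $i+1$ slots earlier. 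Combining these pieces yields $\sum_{t\in[\,t_0,\,t_0+ui]}(\delta^{(1)}_t+\delta^{(2)}_t)\geq\Delta_{\min}\cdot\ell$ for every segment under $\pi_1$.
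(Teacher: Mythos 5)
Your proposal is correct and follows essentially the same route as the paper: split into the two segment lengths, lower-bound the trajectory via the minimum AoI pattern of Lemma~\ref{lemma:thm2-case5:minAoI}, and for the long segment close the gap by showing the first reset value must be $i+2$ rather than $i+1$. Your DoF-recursion argument that $D_{U_m-1}=M$ forces $U_{m+1}-U_m=j$ is just a more explicit rendering of the paper's contradiction (a ``fresh'' generation of $U_m$ replicates the DoF allocation of the first segment, whose length is $ij+1$), and your closed-form arithmetic-progression sums replace the paper's slot-by-slot tables.
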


\begin{proof}
{\it 1) $j\geq 2$.} 
We start with the case when the segment length is $ij+1$, i.e., $U_{m+1}-U_m=j$. According to Remark~\ref{remark:min}, the minimum AoI pattern at the end of time slot $(U_m-1)i+m-1$ is $(i+1, 2i+1)$. Hence, the minimum AoI pattern at the first time slot of the segment starting at $(U_m-1)i+m$ is $(i+2,2i+2)$. We note that under $\pi_1$, update $U_m$ is delivered at time $U_m i+m$, with minimum age $i+1$. This would happen if $U_m$ is generated at time $(U_m-1)i+m$. After that, updates $U_m+1,\ldots,U_m+j-1$ are delivered sequentially after $i$ time slots since the previous delivery. Thus, the minimum age of those updates when delivered is $i$. Due to the alternating updating structure, the user with higher AoI will always be updated next under $\pi_1$. Thus,  the minimum AoI pattern over the segment can thus be specified (cf. Table~\ref{tab:thm2:1}), and the corresponding minimum summed average AoI over the duration is $4i+1+\frac{2i+1}{ij+1}$.

Next, we consider the case when $U_{m+1}-U_m=j+1$ and the corresponding segment length is $(j+1)i+1$. 
We will show that the minimum AoI pattern when $U_m$ is delivered, i.e., at the end of time slot $U_m i+m$, is $(i+2,2i+2)$ instead of $(i+1,2i+2)$, i.e., update $U_m$ must be generated at time slot $(U_m-1)i+m-1$ instead of $(U_m-1)i+m$. We prove it by contradiction. 

Assume update $U_m$ is generated at time slot $(U_m-1)i+m$. Since update $U_m-1$ is delivered at time $(U_m-1)i+m-1$, update $U_m$ would consume all DoF at time slot $(U_m-1)i+m$ under policy $\pi_1$. Thus, under policy $\pi_1$, the DoF allocation for updates $U_m,\ldots,U_{m+1}-1$ would be the same as that for updates $U_1,\ldots,U_2-1$. Therefore, the length of segment $[(U_m-1)i+m:(U_{m+1}-1)i+m]$ would be identical to that of segment $[1,ij+1]$, i.e., $ij+1$. This contradicts with the assumption that the segment is of length $(j+1)i+1$, which indicates that update $U_m$ must be generated at time slot $(U_m-1)i+m-1$, and reset the AoI of the corresponding user to $i+2$ instead of $i+1$ when delivered. 

With the minimum AoI pattern at the end of time slot $U_m i+m$ by $(i+2,2i+2)$, the minimum AoI pattern can be identified (cf. Table~\ref{tab:thm2:2}). The minimum summed average AoI over the segment can thus be calculated, which is equal to $4i+1+\frac{4i+1}{(j+1)i+1}$.

Combining those two cases, we can see that the summed average AoI over any segment is lower bounded by $4i+1+\frac{2i+1}{ij+1}$.

{\it 2) $j=1$.} For this case, the segment length is either $i+1$ or $2i+1$. According to Remark~\ref{remark:min}, the minimum AoI pattern is $(i+1,2i+2)$. If the segment length is $i+1$, there is only one update at the end of the segment, which resets the AoI to $i+1$. The corresponding summed average AoI over the segment can be calculated, which is equal to $4i+3$.

When the segment length is equal to $2i+1$, two updates are delivered over the segment, one is at time $i$ and the other is at time $2i+1$. With the minimum AoI pattern $(i+1,2i+2)$, we can show that the summed average AoI is still lower bounded by $4i+3$.
\end{proof}

\begin{remark}\label{remark:decrease}
We note that for all $i,j\in\Nb$, the minimum summed average AoI over the first $\ell i+1$ time slots in each segment $[(U_m-1)i+m:(U_{m+1}-1)i+m]$ is monotonically decreasing in $\ell$ for $\ell\in[1:U_{m+1}-U_m]$. 
\end{remark}

Next, we relate the AoI pattern under $\pi_1$ with block $B_u$ under any alternative updating policy in $\Pi_1$. Recall that $l_u$ is the number of time slots required to deliver $u$ updates in a block $B_u$. Then, the updating scheme $\pi_1$ over $[1,l_u]$ is identical to a block $B_u$ except that some DoF at time slot $l_u$ under $B_u$ may not be exhausted.

We note that $B_u$ can be partitioned into segments $[1:ij+1]$, $\ldots$, $[(U_{m_u-1}-1)i+m_u-1:(U_{m_u}-1)i+m_u-1]$ and a residue $[(U_{m_u}-1)i+m_u:l_u]$, where $m_u=\max\{m:U_m< l_u\}$. According to Remark~\ref{remark:decrease}, the summed average AoI of the residue is lower bound by $\Delta_{\min}$.

\begin{Lemma}\label{lemma:ineq}
If $x,y,z,w,t\in\Rb_{>0}$ satisfy inequalities $\frac{x}{y}\geq t$ and $\frac{z}{w}\geq t$, then $\frac{x+z}{y+w}\geq t$.
\end{Lemma}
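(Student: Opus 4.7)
The plan is to give a short direct proof using the positivity of the denominators to clear fractions, then add and re-divide. This is the classical mediant/weighted-average inequality, so no heavy machinery is needed.

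First, since $y>0$, multiplying both sides of $\frac{x}{y}\geq t$ by $y$ preserves the inequality and yields $x\geq ty$. Likewise, since $w>0$, the inequality $\frac{z}{w}\geq t$ gives $z\geq tw$. Adding these two inequalities, which is valid because they point in the same direction, produces $x+z\geq t(y+w)$.

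Finally, since $y,w\in\Rb_{>0}$ we have $y+w>0$, so dividing both sides by $y+w$ preserves the inequality and delivers the desired conclusion $\frac{x+z}{y+w}\geq t$. I do not foresee any obstacle in this argument; the only subtlety worth flagging is that all four of $x,y,z,w$ being strictly positive is what ensures the denominators are nonzero and that multiplication/division by them does not flip the inequality. The hypothesis $t\in\Rb_{>0}$ is actually not used in the proof and could be relaxed to $t\in\Rb$.
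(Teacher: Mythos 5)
Your proof is correct. The paper states this lemma without any proof (treating the mediant inequality as self-evident), and your argument---clearing denominators using $y,w>0$, adding, and dividing by $y+w>0$---is the standard and essentially only natural way to establish it; your side remark that the hypothesis $t>0$ is not actually needed is also accurate.
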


Since the summed average AoI over each segment is lower bounded by the quantity in Lemma~\ref{lemma:thm2-case3:seg1}, then, based on Lemma~\ref{lemma:ineq}, the summed average AoI over any block $B_u$ is lower bounded by the quantity as well.

Next, we will show that the lower bound for block $B_u$ is also a valid lower bound for blocks $B_{u,v}$, $\forall v>0$.

\begin{Lemma}\label{lemma:idling}
For the {$(2,M,N,B)$ system with $N\geq M$ and $\frac{j}{ij+1}\leq \frac{M}{B}<\frac{j+1}{(j+1)i+1}$}, $i,j\in\Nb$, the summed average AoI over $B_{u,v}$ is lower bounded by $\Delta_{\min}=4i+1+\frac{2i+1}{ij+1}$ if $j\in\Zb_{\geq 2}$, and by $\Delta_{\min}=4i+3$ if $j=1$. 
\end{Lemma}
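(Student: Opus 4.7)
The plan is to decompose $B_{u,v}$ into its $v$-slot idle prefix followed by its $l_u$-slot working portion, lower bound the summed AoI contribution of each piece separately, and combine them. The subtlety is that neither piece alone averages $\Delta_{\min}$ in general: the deficit is made up partly by the quadratic-in-$v$ growth of AoI during idling and partly by a persistent elevation of the starting AoI pattern that propagates into the working portion until each user is updated.

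First, I would invoke Lemma~\ref{lemma:thm2-case5:minAoI} to assert that the AoI pattern $(a,b)$ at the first slot of $B_{u,v}$ dominates the minimum pattern componentwise, so $a+b \geq S_{\min}$ with $S_{\min}=3i+1+\lceil 2/j\rceil$. Since no update is delivered during the $v$ idle slots, both users' AoIs grow by one per slot, so the summed-AoI contribution of the idle prefix equals $v(a+b)+v(v-1)$, which is at least $v S_{\min} + v(v-1)$.

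Second, I would revisit the segment-by-segment analysis of Lemma~\ref{lemma:thm2-case3:seg1}, but now starting from the elevated pattern $(a+v, b+v)$ at the first working slot. Under the alternating rule, the user with higher AoI is updated first and its AoI resets to the minimum-reset value $i+1$ upon delivery, while the other user carries its $v$-unit elevation until its own first delivery. Consequently, the summed AoI over the working portion is at least $l_u \Delta_{\min}$ plus an elevation surplus of $2v t_1 + v(t_2 - t_1) = v(t_1+t_2)$ when $u \geq 2$ (or $2v\, l_u$ when $u=1$), where $t_1 \geq i+1$ and $t_2 \geq 2i+1$ denote the first two delivery times within the working portion. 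Past the second delivery the elevation is fully absorbed and the remaining segments inherit the per-segment bound of Lemma~\ref{lemma:thm2-case3:seg1}, glued together via Lemma~\ref{lemma:ineq} together with the residue bound from Remark~\ref{remark:decrease}.

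Third, I would combine the two contributions: the total summed AoI over $B_{u,v}$ is at least $v S_{\min} + v(v-1) + l_u \Delta_{\min} + v(t_1+t_2)$, which must exceed $(v+l_u)\Delta_{\min}$. The check reduces to verifying $v(v-1) + v(t_1+t_2) \geq v(\Delta_{\min} - S_{\min})$; since $t_1+t_2 \geq 3i+2$ while $\Delta_{\min}-S_{\min} = i-1+\frac{2i+1}{ij+1}$ for $j\geq 2$ and $= i$ for $j=1$, this inequality holds with comfortable slack.

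The main obstacle is bookkeeping: the $v$-unit elevation in each user's AoI persists until that user is next updated, and since the two users are updated at staggered times, the per-slot excess during the working portion is a piecewise-linear function that depends on how the deliveries align with the segment structure of $B_u$. The three cases $j=1$, even $j\geq 2$, and odd $j\geq 3$ must be checked individually (mirroring the case split of Lemma~\ref{lemma:thm2-case3:seg1}), but each reduces to the same final inequality above.
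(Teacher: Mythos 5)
Your proposal is correct and follows essentially the same route as the paper: the paper likewise splits the summed AoI of $B_{u,v}$ into the idling-induced surplus $A_1^{(k)}=\sum_{\ell=1}^{v}(\delta_0^{(k)}+\ell)+v\,t_k$ (your $v(a+b)+O(v^2)$ term plus the elevation $v(t_1+t_2)$ carried into the working portion until each user's first delivery) and the idle-free block contribution $A_2^{(k)}$ bounded by $l_u\Delta_{\min}$, then closes by checking that the surplus averaged over $v$ slots exceeds $\Delta_{\min}$ via Lemma~\ref{lemma:thm2-case5:minAoI} and combining with Lemma~\ref{lemma:ineq}. Your final inequality $(v-1)+t_1+t_2\geq \Delta_{\min}-S_{\min}$ is the same verification the paper performs (in the equivalent form $2S_{\min}+v+1>\Delta_{\min}$), and the case split you worry about at the end is not actually needed since the bound is uniform in $j$.
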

\begin{proof}
Recall that in a block $B_{u,v}$, there are $v$ idle time slots before $B_u$. Let $\delta_0^{(1)}$, $\delta_0^{(2)}$ be the AoI at time zero. Without loss of generality, we assume $B_{u,v}$ starts at time slot $1$. Since there is no updating over the first $v$ time slots, the AoI of user $k$, $k=1,2$, will monotonically increase until the first successful update at time $v+t_k$. Thus, the existence of idling time slots affects the AoI evolution until the first update for each user occurs. Let $v+l_u$ be the end of block $B_{u,v}$.

Let $A_1^{(k)}$ be the AoI increment induced by the idling time slots at user $k$, as illustrated by the shaded area in Fig.~\ref{fig:idling}. Meanwhile, denote $A_2^{(k)}$ as the summed AoI over $B_u$ when no idling time slot is present, corresponding to the unshaded area in Fig.~\ref{fig:idling}.

Then, we have
\begin{align}
A_1^{(k)}=\sum_{\ell=1}^{v}(\delta_0^{(k)}+\ell)+v\cdot t_k, \quad k=1,2.
\end{align}
Let $\Delta_{B_{u,v}}$ be the summed average AoI over $B_{u,v}$. Then, $\Delta_{B_{u,v}}=\frac{\sum_{k=1}^{2}(A_1^{(k)}+A_2^{(k)})}{v+l_u}$. We note that
\begin{align}
\frac{\sum_{k=1}^{2}A_1^{(k)}}{v}&=\delta_0^{(1)}+\delta_0^{(2)}+(1+v)+t_1+t_2  \nonumber \\
&\geq 2\left(3i+1+\left\lceil\frac{2}{j}\right\rceil\right) + (1+v),
\end{align}
where the last inequality follows from Lemma~\ref{lemma:thm2-case5:minAoI}. 

Note that $6i+2\lceil \frac{2}{j}\rceil +v+1>4i+1+\frac{2i+1}{ij+1}$ if $j>2$ and $6i+2\lceil \frac{2}{j}\rceil +v+1>4i+3$ if $j=1$. By applying Lemma~\ref{lemma:ineq}, we have the lower bounds $\Delta_{\min}$ hold for $B_{u,v}$ as well.
\end{proof}

\begin{figure}[t]
	\centering
	\includegraphics[width=2.4in]{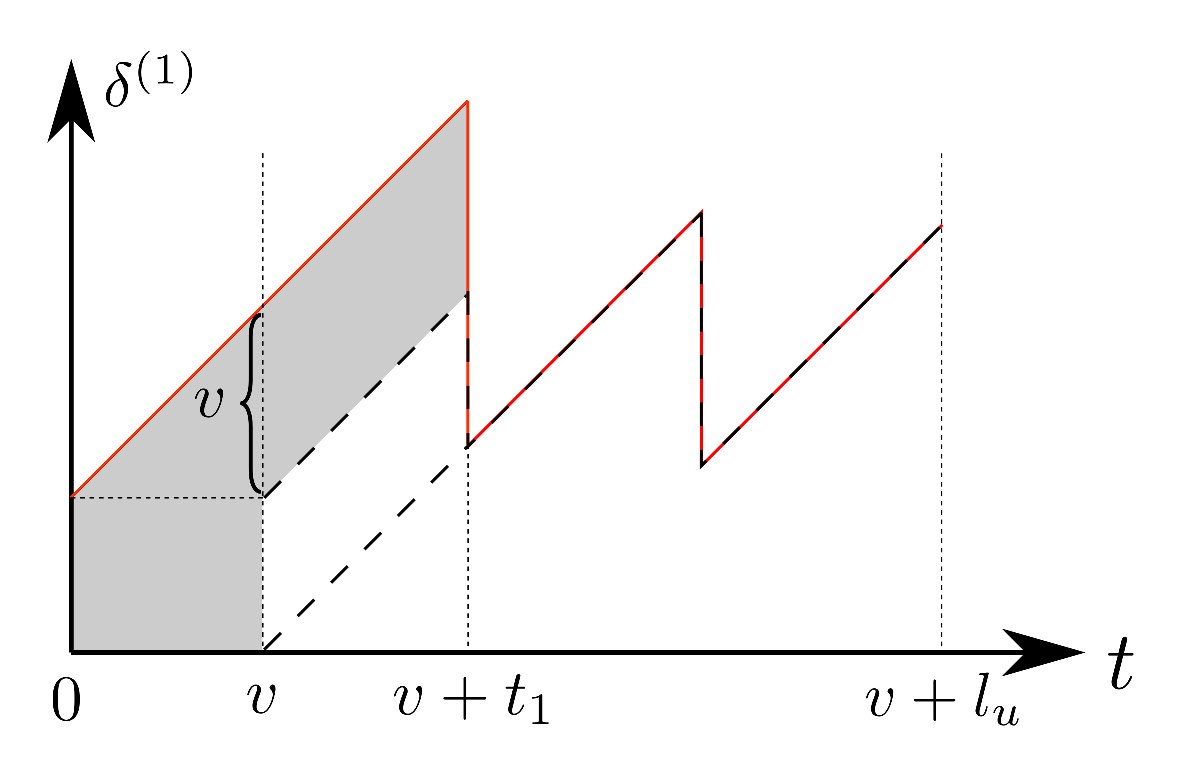}
	\captionof{figure}{AoI evolution of user 1 over an extended block. The dashed line is the AoI evolution if idling period does not exist.} 
	\label{fig:idling}
\end{figure}

Since every policy in $\Pi_1$ can be decomposed as a sequence of blocks in the form of $B_{u,v}$, the lower bound on each block applies to the long-term average. Thus, $\Delta_{\text{min}}$ is a lower bound on the summed average AoI for all policies in $\Pi_1$. To prove that the lower bound applies to all policies, it suffices to show that no other policy in $\Pi_0$ can achieve AoI lower than $\Delta_{\text{min}}$.

\begin{Theorem}\label{thm:deterministic_policy_1}
For the $(2,M,N,B)$ system with $N\geq M$ and $M/B\in [\frac{j}{ij+1},\frac{j+1}{(j+1)i+1})$, $i,j\in\Nb$, the summed average AoI under any policy $\pi\in\Pi_0$ is lower bounded by $\Delta_{\min}=4i+1+\frac{2i+1}{ij+1}$ if $j\in\Zb_{\geq 2}$, and by $\Delta_{\min}=4i+3$ if $j=1$. 
\end{Theorem}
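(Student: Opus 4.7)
The plan is to reduce every $\pi \in \Pi_0$ to a policy in $\Pi_1$ without increasing the summed long-term average AoI. Since Lemma~\ref{lemma:idling}, combined with the observation that any policy in $\Pi_1$ can be decomposed into a sequence of blocks of the form $B_{u,v}$, already yields the bound $\Delta_{\min}$ uniformly over $\Pi_1$, such a reduction closes the gap. Two structural features separate $\Pi_0$ from $\Pi_1$: (i) the DoFs at a given time slot may be split between two ongoing updates targeted at different users, and (ii) consecutive delivered updates need not alternate between the two users. I would handle these one at a time via local exchange arguments on the sample-path evolution of any candidate $\pi \in \Pi_0$.

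For (i), consider the earliest time slot $t_*$ at which $\pi$ splits its DoF across both users' in-progress updates $\wv^{(1)}_{s_1}$ and $\wv^{(2)}_{s_2}$, with positive allocations $d_1,d_2$ and $d_1+d_2\leq M$. Assume without loss of generality that user~1's update would be delivered no earlier than user~2's under $\pi$. I propose to construct $\pi'$ that, from $t_*$ onwards, concentrates the entire DoF budget on $\wv^{(2)}_{s_2}$ until it is delivered, then shifts to $\wv^{(1)}_{s_1}$, borrowing the DoFs that $\pi$ spent on user~2 after $t_*$ for user~1's symbols instead. Because $\pi \in \Pi_0$ transmits raw symbols only (Lemma~\ref{lemma:raw}), this amounts to a permutation of raw-symbol slot assignments and respects the generation-time constraint of Definition~\ref{defn:Pi_0}(ii). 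A slot-by-slot coupling then shows that under $\pi'$, user~2's delivery time is strictly earlier, while user~1's delivery time is weakly earlier; the AoI integrals of both users over the affected interval therefore weakly decrease. Iterating this swap from earliest offending slot onwards pushes out every DoF split.

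For (ii), once DoF concentration is enforced, consider any two consecutive deliveries at times $\tau_1<\tau_2$ that violate the alternating order, i.e., target the same user or target the wrong user relative to the AoI ordering. I would swap the identities of the two delivered updates, keeping the delivery times $\tau_1,\tau_2$ fixed but reassigning which user is served in each of the intervening slots; this is again feasible since all symbols are raw and only the channel precoding needs to be recomputed from the CSI at each slot. Because at time $\tau_1$ the swap targets the user with the larger AoI, the reduction of AoI is applied to the steeper ascending curve, which strictly lowers the summed AoI over $[\tau_1,\tau_2]$ and leaves the AoI trajectories outside this interval unchanged.

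The main obstacle will be Step~(i): when the split updates have different generation times $s_1 \neq s_2$ and different residual symbol counts, the reallocation must not violate Definition~\ref{defn:Pi_0}(ii) (no update transmitted before its generation) or Definition~\ref{defn:Pi_0}(iii) (DoFs fully used during any active transmission). The cleanest way to handle this is to always prioritize, inside the swap window, the update whose remaining number of symbols is smallest at each slot, which guarantees both constraints while monotonically advancing at least one delivery time and never postponing the other. After (i) and (ii) are both enforced, the resulting policy satisfies all defining properties of $\Pi_1$, so Lemma~\ref{lemma:idling} applies and yields the claimed bound $\Delta_{\min}$ for the original $\pi$, completing the converse.
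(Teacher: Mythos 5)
Your high-level plan --- reduce any $\pi\in\Pi_0$ to an alternating policy in $\Pi_1$ by local exchange arguments and then invoke the block/segment lower bound --- is the same as the paper's, but the exchanges as you state them do not go through. First, in your step (ii) the claim that swapping the identities of the two delivered updates ``leaves the AoI trajectories outside this interval unchanged'' is false: after $\tau_2$ the two users' ages are interchanged relative to the original sample path, so every subsequent delivery now resets the ``wrong'' user. The paper's Lemma~\ref{lemma:update_higerAoI} avoids this by swapping the intended recipients of \emph{every} update delivered after $t_0$, so that the post-$d_2$ evolution is a mirror image with identical summed AoI; with only a local swap the comparison beyond $\tau_2$ is uncontrolled and iterating your exchange need not produce a monotone improvement. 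Second, your proposed fix for the ``main obstacle'' --- prioritizing the update with the fewest remaining symbols --- can contradict the ordering that optimality actually forces: Lemma~\ref{lemma:update_higerAoI} shows the first-delivered update must go to the user with the \emph{higher AoI}, regardless of residual symbol counts, and a shortest-remaining-first rule can serve the lower-AoI user first, which is strictly suboptimal.

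The deeper gap is that the reduction to $\Pi_1$ is not established by a slot-by-slot coupling alone. The problematic configuration is the nested pattern $t_2<t_1<d_1\le d_2$ of Lemma~\ref{lemma:prioritize_next_delivery} (the paper's Type-2 resource chunk), in which an older update destined for the higher-AoI user is preempted by a fresher one and delivered second; your step (i), which only looks for slots where the DoF is split, does not even detect this structure, since the preemption uses the full DoF in every slot except the delivery slot. Re-ordering such a chunk into the alternating order changes the ages at which the two users are reset, and whether this helps hinges on a quantitative comparison: Lemma~\ref{lemma:thm2-chunk_i+1} reduces the difference to $\alpha(\delta_0^{(2)}-\beta+1)$ and needs the minimum-AoI-pattern bound $\delta_0^{(2)}\ge 2i+1+\lceil 2/j\rceil$ from Lemma~\ref{lemma:thm2-case5:minAoI} to conclude nonnegativity --- and even that argument covers only the case where the re-allocated first update takes $i+1$ slots. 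When it takes $i+2$ slots (the head of a segment of length $(j+1)i+1$), the local exchange no longer suffices and the paper must bound the summed AoI of the entire segment directly via the explicit AoI-pattern tables of Lemma~\ref{lemma:thm2-chunk-lowerbound}. Your proposal neither identifies this exceptional case nor supplies the computations; the assertion that ``the AoI integrals of both users over the affected interval therefore weakly decrease'' is precisely the step that consumes the bulk of the paper's appendix.
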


The proof of Theorem~\ref{thm:deterministic_policy_1} is provided in Appendix~\ref{app:1}.

\section{Proof of Theorem~\ref{result3_new}}\label{sec:discuss}
For the two-user system with $N<B$ and $N<M<2N$, it is extremely challenging to identify the exact minimum average AoI due to the combinatorial nature of the problem. In this section, we strive to obtain a lower bound on the summed average AoI and propose a $2$-optimal policy.

\subsection{Lower Bound}\label{sec:discuss:lb}
We provide a lower bound on the summed average AoI, as summarized in the following lemma.
\begin{Lemma}\label{lemma:ineq-minAoI}
For the $(2,M,N,B)$ system with $N<M<2N$, $B=iN+j$, $i\in\Nb$, $j\in[0:N-1]$, the summed average AoI is lower bounded by $\Delta_{\mathsf{LB}}:= 2i+\lceil\frac{2j}{N}\rceil$.
\end{Lemma}
\begin{proof}
Denote $d_1$ and $d_2$ are the AoI at user 1 and user 2 at an time slot $k$, respectively. Without loss of generality, assume $d_1\leq d_2$. Let $k-d_1$ and $k-d_2$ as the generation times of the freshest update received at user 1 and user 2, respectively, and $u_1$ and $u_2$ as the corresponding updates. Define $x_\ell$ and $y_\ell$ as the DoFs allocated for the transmission of updates $u_1$ and $u_2$ at time slot $k-\ell$, respectively. Then, we must have the following conditions satisfied:
\begin{eqnarray}
& &\sum_{\ell=1}^{d_1} x_\ell =B, \quad \sum_{\ell=1}^{d_2} y_\ell =B, \label{eqn:++:4} \\
& &0\leq x_\ell+y_\ell\leq M,\quad  \ell=1,\ldots, d_2, \label{eqn:++:5} \\
& &0\leq x_\ell, y_\ell\leq N,\quad  \ell=1,\ldots, d_2.  \label{eqn:++:6}
\end{eqnarray}

Thus,
\begin{align}
2B=\sum_{\ell=1}^{d_1}(x_\ell+y_\ell)+\sum_{\ell=d_1+1}^{d_2}y_\ell\leq d_1M + (d_2-d_1)N.
\end{align}
Since $B=iN+j$ and $M<2N$, the above inequality becomes
\begin{align}
d_1+d_2\geq 2i+\frac{2j}{N}+\frac{d_1(2N-M)}{N}>2i+\frac{2j}{N}.
\end{align}
Besides, since $d_1+d_2$ is an integer, we must have $d_1+d_2\geq 2i+\lceil\frac{2j}{N}\rceil$, which provides a valid lower bound on the summed average AoI at any time slot $k$.
\end{proof}

\subsection{Framed Alternating Updating}\label{sec:discuss:scheme}
We propose a framed alternating updating scheme and provide its performance guarantee subsequently.
\begin{Definition}[Framed alternating updating]
Partition the time axis into frames of length $2i$ if $j=0$, $2i+1$ if $0<2j<M$, or $2i+2$ otherwise.
\begin{itemize}
\item[1)] If $j=0$, or $2j>M$, within each frame, the transmitter first utilizes its transmission capability to update the user with higher AoI until its AoI resets. Then, at the beginning of the next time slot, the transmitter starts to transmit a new update to the other user until the end of the frame.
\item[2)] If $0<2j<M$, in each frame, the transmitter first utilizes its transmission capability to update the user with higher AoI until its AoI resets at the $i$-th time slot. Within the same time slot, the transmitter exhausts the remaining transmission capability to start transmitting an new update for the other user until the end of the frame.
\end{itemize}
\end{Definition}
\begin{remark}
For the $(2,M,N,B)$ system with {$N<B$} and $N<M<2N$, in each time slot, it is important to decide whether or not to exploit the remaining $M-N$ DoFs when $N$ DoFs have been used to update one user. Exploiting the remaining DoFs may lead to earlier updating of the other user, while wasting them may shorten the transmission time of an update and reduces its age when delivered. When $M$ gets close to $N$, the benefit of utilizing the $M-N$ remaining DoFs is offset by the elongated age of the update. As a result, we expect that the framed alternating updating scheme performs close to optimal when $M$ approaches $N$.
\end{remark}

In order to characterize the AoI performance, we track the AoI evolution under the framed alternating updating scheme. Note that when $j=0$, each user is updated every $2i$ time slots, and when it is updated, its AoI is reset as $i$ and starts increasing until next update. The summed average AoI thus equals $\Delta=4i-1$. When $2j>M$, similar analysis shows that the summed average AoI $\Delta=4i+3$. When $0<2j<M$, each user is updated every $2i+1$ time slots, and when it is updated, its AoI is reset as $i+1$. Therefore, the summed average AoI is $\Delta=4i+1$.

Note that the lower bound in Theorem~\ref{result3_new} becomes $\Delta_{\mathsf{LB}}=2i$ if $j=0$, $\Delta_{\mathsf{LB}}=2i+1$ if $0<2j<M$, and $\Delta_{\mathsf{LB}}=2i+2$. Combining the summed average AoI of the framed alternating updating scheme and the lower bound, we have
\begin{align}
\frac{\Delta}{\Delta_{\mathsf{LB}}}\leq \max\left\{\frac{4i-1}{2i},\frac{4i+1}{2i+1},\frac{4i+3}{2i+2}\right\}< 2,
\end{align}
i.e., in $(2,M,N,B)$ system with $N<B$ and $N<M<2N$, the summed average AoI under the framed alternating updating scheme is at most twice the minimum summed average AoI and the proposed policy is $2$-optimal.

\section{Conclusions and discussions}\label{sec:conclusion} 
In this paper, we investigated the AoI optimization problem in MIMO broadcast channels with various numbers of users, transmitting and receiving antennas and update sizes. Due to the combinatorial nature of the problem and the complex AoI evolution in a dynamic system, identifying the optimal updating scheme becomes challenging. 
We considered two specific scenarios, where in the first scenario, each receiver has one antenna, and in the second scenario, it only has two users. We developed different updating schemes for those cases and showed their optimality through rigorous analysis. Although the optimal schemes seem intuitive, establishing their optimality is non-trivial. Toward that, we developed some novel approaches. We think those approaches will be useful for the AoI-optimal updating schemes in noise-free MIMO broadcast channels with other parameters. Besides, we expect that those techniques can be extended to handle more practical noisy channels by leveraging the deterministic channel models proposed in \cite{Avestimehr:2011}. Due to the coupled dynamics of AoI evolution, the general AoI-tradeoff among multiple users are intractable. However, we expect that the approaches developed in this paper can be adopted to identify certain Pareto optimal points on the AoI of multiple users. We leave this as one of our future steps.

\appendix

\subsection{Proof of Theorem~\ref{thm:deterministic_policy_1}}\label{app:1}
First, we point it out that for the $(2,M,N,B)$ systems, under any policy in $\Pi_0$, at any time $t$, there exist at most two updates that are partially transmitted in any time slot. This is due to property iv) in Definition~\ref{defn:Pi_0}, i.e., the transmitter will not start transmitting a new update to a user until the previous one has been delivered to the same user. Therefore, at any time $t$, there exists at most one partially transmitted update for each user.

Next, for $(2,M,N,B)$ systems with $B>M$, we have the following observation.
\begin{Lemma}\label{lemma:prioritize_next_delivery}
For the $(2,M,N,B)$ system with $B>M$, consider two consecutive successful deliveries of updates from the transmitter under the optimal policy in $\Pi_0$. Denote their delivery times as $d_1,d_2$, respectively, $d_1\leq d_2$, and the corresponding generation times as $t_1$, $t_2$. With a little abuse of notation, we name those two updates $\wv_1$ and $\wv_2$, respectively. Then, either of the following two scenarios must be true: 1) $t_1< d_1\leq t_2< d_2$. 2) $t_2< t_1< d_1\leq d_2$, and over time $t_1\leq t< d_1$, the transmitter utilizes all of its DoF to transmit $\wv_1$. 
\end{Lemma}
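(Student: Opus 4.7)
The plan is to enumerate the possible orderings of $t_1$, $t_2$, and $d_1$, and to use exchange arguments within $\Pi_0$ to rule out the orderings that are inconsistent with the two listed scenarios. Since property ii) of Definition~\ref{defn:Pi_0} forces each delivered update to be transmitted starting from its generation time, and $B>M$ forces at least two time slots per delivery, we have $t_i<d_i$ for $i=1,2$. Together with $d_1\leq d_2$, the three possible orderings are (a) $t_1\leq t_2<d_1\leq d_2$, (b) $t_2<t_1<d_1\leq d_2$, and (c) $t_1<d_1\leq t_2<d_2$. Case (c) matches scenario~1, so the remaining task is to exclude case~(a) and to show that case~(b) forces the full-DoF condition of scenario~2.

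For case~(a), observe that since $t_2<d_1$ and $\wv_2$ is actually delivered, some positive amount of DoF is spent on $\wv_2$ over $[t_2,d_1)$ (otherwise $\wv_2$'s transmission would not truly start at $t_2$, contradicting property ii)). I would construct an alternative policy that (i) uses all $M$ DoF per slot for $\wv_1$ over $[t_1,d_1')$, delivering it at the earliest possible $d_1'\leq d_1$; (ii) discards $\wv_2$; (iii) begins transmitting a fresher update $\wv_2'$ to the same user starting at its generation time $t_2':=d_1'\geq t_2$, delivered at some $d_2'\leq d_2$ by a further DoF count. The alternative policy still lies in $\Pi_0$, and by tracking the AoI slot by slot yields AoI no larger at every time slot and strictly smaller at some slot (either in $[d_1',d_1)$ for user~$k_1$ or after the second delivery for user~$k_2$, depending on whether $d_1'<d_1$ or $t_2<d_1'$ holds strictly). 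This contradicts the optimality of the original scheme, ruling out case~(a).

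For case~(b), suppose for contradiction that a positive amount $r$ of DoF is allocated to $\wv_2$ over $[t_1,d_1)$. I would construct a modified scheme that leaves the DoF allocation outside $[t_1,d_2)$ untouched, reallocates all DoF over $[t_1,d_1')$ to $\wv_1$, and allocates all DoF over $[d_1',d_2)$ to $\wv_2$. A direct DoF count (each update needs exactly $B$ units) shows $d_1'<d_1$ while $\wv_2$ still accumulates its required $B$ units, with delivery at $d_2'=d_2$ unchanged. Consequently, user~$k_1$'s AoI is strictly smaller over $[d_1',d_1)$ and unchanged elsewhere, while user~$k_2$'s AoI is unchanged throughout, contradicting optimality. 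Hence in case~(b), all DoF over $[t_1,d_1)$ must be allocated to $\wv_1$, which is exactly scenario~2.

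The main obstacle is verifying that each alternative policy genuinely stays in $\Pi_0$---especially properties iii) and iv) regarding maximum DoF utilization and intra-user delivery order---and that the local AoI improvement extends to the long-run average rather than being cancelled out downstream. The key observation I would rely on is that the modification alters neither the generation times of future updates nor the state of either user at time $d_2$: in both the original and modified schemes, each user has decoded exactly the same set of updates by $d_2$ and the transmitter is idle of any partially delivered symbols, so the original continuation after $d_2$ can be appended verbatim with identical AoI evolution thereafter. Integer-rounding subtleties in $\lceil B/M\rceil$ and boundary situations such as $t_1=t_2$ in case~(a) require small bookkeeping adjustments but are handled by the same underlying exchange argument.
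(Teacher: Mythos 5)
Your overall strategy---enumerate the orderings of $t_1,t_2,d_1$ and eliminate the bad ones with exchange arguments---is exactly the paper's proof, and your case (b) matches the paper's case ii) essentially verbatim. The one place where your construction as written can fail is case (a): you commandeer \emph{all} $M$ DoF per slot for $\wv_1$ starting at $t_1$. But $\wv_1$ and $\wv_2$ are only the next two \emph{deliveries}; the update delivered immediately before $d_1$ (say $\wv_0$, delivered at some $d_0<d_1$ to the other user) may still be in transmission during $[t_1,d_0]$, and since the constraint is only one partially transmitted update \emph{per user}, nothing prevents $t_1\le d_0$. Seizing all $M$ DoF in those slots delays $\wv_0$ and strictly worsens that user's AoI before $d_1$, so the claim that the alternative policy has ``AoI no larger at every time slot'' does not hold for the construction as stated. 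The paper's version of the exchange sidesteps this by reallocating only the DoF that the original policy assigned to $\wv_2$ (whose transmission begins at $t_2>d_0$) over to $\wv_1$; with that restriction $\wv_1$ is still delivered no later than $d_1$, the freed budget funds a fresher replacement $\wv_2'$, and your dichotomy for where the strict gain comes from (either $d_1'<d_1$, or $d_1'=d_1>t_2$ so that $\wv_2'$ is strictly fresher) goes through unchanged.

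A second, smaller imprecision: the unconditional claim $t_2':=d_1'\ge t_2$ is false in general (with all available DoF devoted to $\wv_1$ one can have $d_1'<t_2$), but it is not load-bearing---in that regime $d_1'<t_2<d_1$ already yields the strict improvement for user $k_1$, which is precisely the first branch of your either/or. So the proof is repairable with the paper's more conservative reallocation; no new idea is needed.
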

\begin{proof}
Recall that for all policies in $\Pi_0$, the transmitter only uses its DoF to deliver updates that eventually reset the AoI. In the following, we show that any policy $\pi\in\Pi_0$ that violates the structures can be strictly improved to reduce the AoI. We consider the following cases: 

i) $t_1\leq t_2<d_1\leq d_2$. Recall that for any policy in $\Pi_0$, all delivered symbols are transmitted starting from their generation times. Thus, the transmitter must begin to transmit $\wv_2$ at time $t_2$. We then consider an alternative policy under which the transmitter will utilize the DoF that was allocated to transmit $\wv_2$ under the original policy at time slot $t_2$ and afterwards  for $\wv_1$ until $\wv_1$ is delivered. Apparently, $\wv_1$ will be delivered no later than $d_1$, which potentially improves the AoI of the corresponding user. After that, the transmitter will reallocate the DoF that was allocated for $\wv_1$ and $\wv_2$ to transmit a new update $\wv'_2$, where $\wv'_2$ and $\wv_2$ are intended for the same user. Since the total allocated DoF remains the same under both policies, it ensures that $\wv'_2$ will be delivered at $t_2$, which will reset the corresponding AoI with a smaller age. Thus, the overall AoI will be strictly improved under the alternative policy, indicating that this case cannot exist under the optimal policy.

ii) $t_2< t_1< d_1\leq d_2$, and there exists at least one time slot $t$, $t_1\leq t< d_1$, during which the transmitter does not exhaust its DoF to transmit $\wv_1$. Following the similar argument as in case i), we can construct an alternative policy under which the transmitter exhausts its DoF to transmit $\wv_1$ until its delivered, and utilizes the remaining DoF to deliver $\wv_2$. This will improve the AoI of the user that decodes $\wv_1$, without impacting the AoI of the other user. Thus, we can safely exclude this case for the optimal policies in $\Pi_0$ without compromising the optimality.
\end{proof}

Then, in order to show that the lower bound in Theorem~\ref{thm:deterministic_policy_1} applies to all policies in $\Pi_0$ for the {$(2,M,N,B)$} system with $B>M$, the optimal policy in $\Pi_0$ must exhibit the following structural properties.

\begin{Lemma}\label{lemma:update_higerAoI}
For the $(2,M,N,B)$ system with $B>M$, under any optimal policy in $\Pi_0$, a successful updating always updates the user with higher AoI.
\end{Lemma}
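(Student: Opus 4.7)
The plan is to argue by a sample-path interchange. Suppose, toward a contradiction, that an optimal policy $\pi^*\in\Pi_0$ delivers an update $\wv_g^{(k)}$ at some time $d$ to user $k$ while the AoI state at the end of time $d-1$ satisfies $\delta_{d-1}^{(k)}<\delta_{d-1}^{(-k)}$. I will construct a modified policy $\tilde\pi\in\Pi_0$ that strictly reduces the summed AoI on a finite window while matching it elsewhere, so that $\pi^*$ cannot be optimal.

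To build $\tilde\pi$, first let $d'$ be the earliest delivery time after $d$ at which $\pi^*$ resets user $-k$; this $d'$ is finite because $\delta_t^{(-k)}$ grows linearly whenever user $-k$ is not served, while the per-slot DoF budget bounds the number of consecutive deliveries the transmitter can commit to user $k$ — Lemma~\ref{lemma:prioritize_next_delivery} can be used to formalize this. The policy $\tilde\pi$ agrees with $\pi^*$ on $[1,g-1]$. Over $[g,d]$, $\tilde\pi$ uses exactly the DoF allocation that $\pi^*$ devoted to $\wv_g^{(k)}$, but re-precodes those symbols to carry $\wv_g^{(-k)}$ instead; the almost-sure full-rank property of every submatrix of $\Hv_t$ makes this re-precoding feasible, and because $\delta_{d-1}^{(-k)}>\delta_{d-1}^{(k)}\ge d-g+1$ the re-precoded packet strictly resets user $-k$'s AoI, preserving property i) of $\Pi_0$. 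Over $[d+1,d']$, $\tilde\pi$ mirrors $\pi^*$'s DoF allocation with the labels $k$ and $-k$ swapped, so that the delivery at $d'$ now resets user $k$ rather than user $-k$.

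A direct calculation of the two AoI trajectories then yields
\begin{align}
\sum_{t=d}^{d'-1}\bigl[\tilde\delta_t^{(k)}+\tilde\delta_t^{(-k)}\bigr]\;-\;\sum_{t=d}^{d'-1}\bigl[\delta_t^{(k)}+\delta_t^{(-k)}\bigr]\;=\;(d'-d)\bigl(\delta_{d-1}^{(k)}-\delta_{d-1}^{(-k)}\bigr)\;<\;0,
\end{align}
and $\{\tilde\delta_{d'}^{(1)},\tilde\delta_{d'}^{(2)}\}=\{\delta_{d'}^{(1)},\delta_{d'}^{(2)}\}$ as multisets. Since the summed-AoI objective and the channel statistics are invariant under permuting the two user indices, $\tilde\pi$ can continue from time $d'+1$ onward by executing a label-swapped copy of $\pi^*$ and thereby match $\pi^*$'s summed AoI on $[d'+1,\infty)$. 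Combined with the strict inequality above, $\tilde\pi$ attains a smaller summed average AoI than $\pi^*$, contradicting its optimality.

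The main obstacle is ensuring that $\tilde\pi$ genuinely lies in $\Pi_0$: the re-precoding on $[g,d]$ can collide with a partial update for user $-k$ that $\pi^*$ was already carrying — recall from the opening remark of the appendix that each user may hold at most one partial update at a time — and the label-swap on $[d+1,d']$ can create analogous collisions. Resolving this requires extending the relabeling backward to the generation time of any colliding partial update and verifying that properties ii)--iv) of Definition~\ref{defn:Pi_0} survive the construction. The edge case in which $\pi^*$'s very next delivery after $d$ is again to user $k$ (so user $-k$ is starved across multiple successive updates) is similarly delicate, and I plan to handle it by iterating the swap up to the first delivery to $-k$, invoking the two scenarios of Lemma~\ref{lemma:prioritize_next_delivery} at each step.
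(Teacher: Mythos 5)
Your core argument is the same interchange the paper uses: relabel the delivered update (and the subsequent schedule) so that the higher-AoI user is reset first, observe that the reset trajectories cancel while the non-served user's growing trajectory starts from the smaller initial age, and conclude a strict per-slot improvement of $\delta_{d-1}^{(k)}-\delta_{d-1}^{(-k)}<0$ over $[d,d')$ with a matching AoI multiset at $d'$. That computation is correct, and the ``starvation'' edge case you flag as delicate is in fact harmless: even if $\pi^*$ delivers to user $k$ several times before $d'$, the mirrored policy delivers the same-aged updates to user $-k$ at the same times, so the reset trajectories still cancel and the identical per-slot difference applies; no iteration is needed.

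The one genuine gap is the collision you acknowledge but do not resolve. It is not a pathological corner: under the transmission patterns of Lemma~\ref{lemma:prioritize_next_delivery}, scenario 2 ($t_2<t_1<d_1\le d_2$) means that while $\wv_g^{(k)}$ is being transmitted, an older partially transmitted update for user $-k$ is already in flight (this is exactly a Type-2 resource chunk, which the optimal policy may use). Re-precoding $\wv_g^{(k)}$'s DoF into a fresh update for user $-k$ then either leaves user $-k$ holding two partial updates (violating property iv) of Definition~\ref{defn:Pi_0}) or causes the older in-flight update to be delivered after a fresher one and fail to reset the AoI (violating property i)). Your proposed repair --- extend the relabeling backward to the generation time of the colliding update, i.e., swap that update's intended user as well --- is precisely what the paper does: its proof splits on the two scenarios of Lemma~\ref{lemma:prioritize_next_delivery} and, in the overlapping case, performs a full label swap of all updates transmitted after $t_0$, for which the ``same up to $d_1$, switched after $d_2$, improved in between'' comparison goes through. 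So the approach is sound and matches the paper's, but the proof is incomplete until that case is actually carried out and properties i)--iv) are verified for the swapped schedule.
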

\begin{proof}
We consider an updating policy starting at a time slot $t_0$. 
Assume at the beginning of $t_0$, the AoI at users 1 and 2 are $\delta_0^{(1)}$, $\delta_0^{(2)}$, respectively, where $\delta_0^{(1)}>\delta_0^{(2)}$. 

Denote the first two delivered updates after time $t_0$ as $\wv_1$ and $\wv_2$. We assume their transmission pattern complies with Lemma~\ref{lemma:prioritize_next_delivery}. We aim to show that these two updates always update user 1 and then user 2. We consider the two possible transmission structures separately.

i) $t_2< t_1< d_1\leq d_2$. First, we note that those two updates are intended for different users. This is because if both $\wv_1$ and $\wv_2$ are intended for the same user, then the delivery of $\wv_2$ will not reset the AoI at the corresponding user, as $\wv_2$ is more stale than $\wv_1$. This violates the assumption that under the optimal policy in $\Pi_0$, all delivered updates reset the corresponding AoI. 

Next, we assume $\wv_1$ and $\wv_2$ are intended for user 2 and user 1, respectively. We note that under this scheme, user 2 will be updated at $d_1$ while user 1 is updated at $d_1$. We aim to show that this is strictly sub-optimal. For that, we consider an alternative policy where the transmitter replaces each update delivered after $t_0$ with an update generated at the same time but intended for the other user. We note that the AoI evolution at both users remains the same under both policies up to $d_1$, and are switched after $d_2$, as illustrated in Fig.~\ref{fig:adj_diff_user}. Between $d_1$ and $d_2$, since $\delta_0^{(1)}>\delta_0^{(2)}$, resetting user 1 at $d_1$ instead of $d_2$ leads to reduced summed AoI. Therefore, $\wv_1$ and $\wv_2$ should be intended for user 1 and user 2, respectively, under the optimal policy.

\begin{figure}[t]
	\centering
	\includegraphics[width=3.4in]{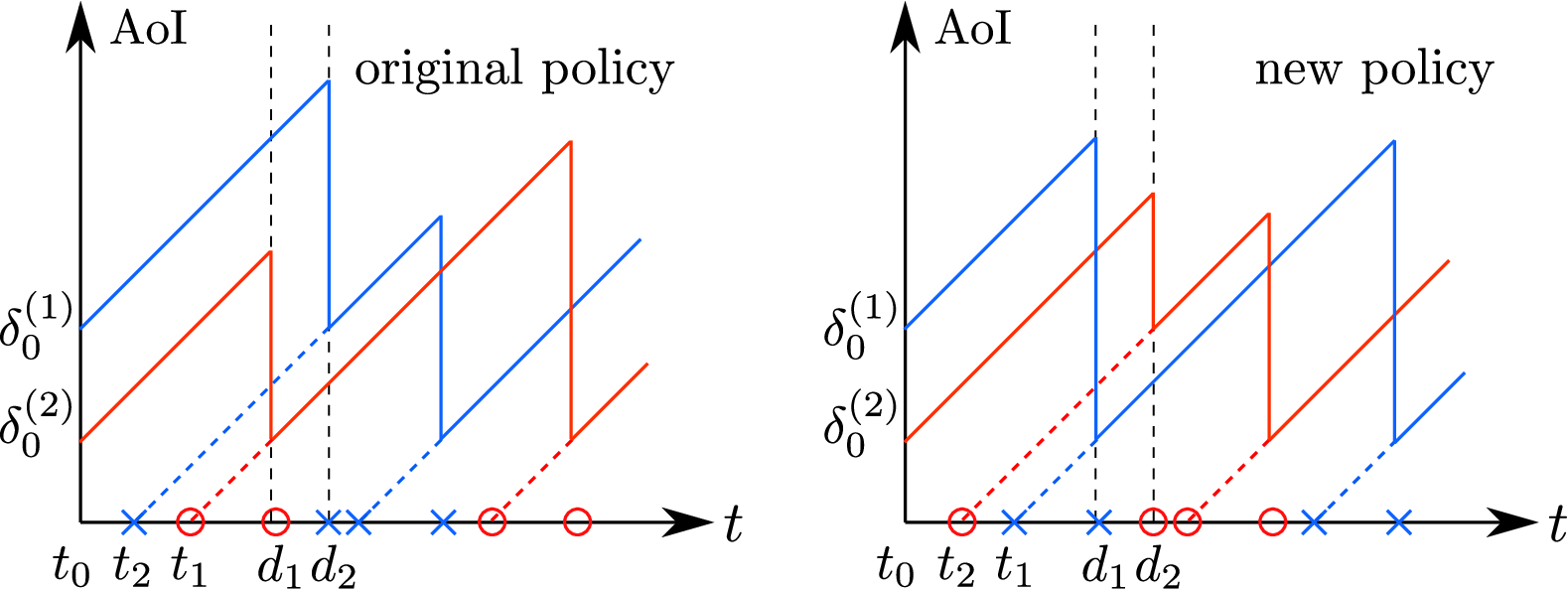}
	\captionof{figure}{Comparison between the new policy and the original policy.} 
	\label{fig:adj_diff_user}
\end{figure}

ii) $t_1< d_1\leq t_2< d_2$. We now consider the following cases: 

ii-a) $\wv_1$ and $\wv_2$ are both intended for user 2. For this case, we construct a new policy by replacing $\wv_1$ with another update $\wv_1'$ generated at the same time but intended for user 1. Then, under the new policy, the AoI of user 1 will be reset at time $d_1$, while the AoI of user 2 will be reset at $d_2$ only. Therefore, after $d_2$, the AoI of user 2 remains the same under both policies, while the AoI of user 1 will be strictly improved. Besides, between $d_1$ and $d_2$, the summed AoI of both users is strictly improved under the new policy, as resetting user with higher age (user 1) leads to lower summed AoI. Therefore,  $\wv_1$ and $\wv_2$ cannot be intended for user 2 under the optimal policy.

ii-b) $\wv_1$ and $\wv_2$ are intended for user 2 and user 1, respectively. For this case, we construct a new policy by replacing each update transmitted after $t_0$ by the update generated at the same time but intended for the other user. Then, under the new policy, after $d_2$, the AoI evolution of user 1 and user 2 will be switched. Between $d_1$ and $d_2$,  the summed AoI of both users is strictly improved under the new policy, as resetting user with higher age (user 1) leads to lower summed AoI.  

Combining cases ii-a) and ii-b), we can see that $\wv_1$ must be intended for user 1 under the optimal policy. 

Therefore, for the two possible updating structures, the next update must be intended for the user with higher AoI. Since $\wv_1$ and $\wv_2$ must intend for user $1$ and user $2$, respectively, for the first structure, we repeat this argument for updates after $d_2$. For the second structure, we only showed that $\wv_1$ must intend for user 1, while $\wv_2$ may intend for either user, depending on the updating structure after $d_1$. We then repeat the argument after $d_1$ for the second structure. Then, we can conclude that each delivered update should update the user with higher AoI. 
\end{proof}

\begin{remark}\label{reamrk:lemma-alternating_delivery-2}
Note that the delivery structure described in Lemma~\ref{lemma:update_higerAoI} does not necessarily imply the alternating transmission structure described in Definition~\ref{defn:alternating_updating}, due to the second possible transmission pattern depicted in Lemma~\ref{lemma:prioritize_next_delivery}.
\end{remark}

For ease of exposition, let $\tilde{\Pi}_0\subset \Pi_0$ be the set of policies satisfying Lemmas~\ref{lemma:prioritize_next_delivery}-\ref{lemma:update_higerAoI}.
According to Theorem~3, no free DoF is available during the transmission of an update, which naturally leads to the definition of generalized blocks as follows.

\begin{Definition}[Generalized Block $\tilde{B}_{u,v}$]
Block $\tilde{B}_{u,v}$ consists of $v$ idling time slots followed by $\lceil uB/M\rceil$ time slots during which the transmitter exhausts its DoF to send $u$ \emph{useful} updates to the two users. When $v=0$, we simply express $\tilde{B}_{u,0}$ as $\tilde{B}_u$.
\end{Definition}

Compared with the block $B_{u,v}$ defined in Section~\ref{sec:converse-thm2}, in generalized blocks $\tilde{B}_{u,v}$, we do not impose the alternating updating structure. Any policy $\pi\in\tilde{\Pi}_0$ can be represented as a sequence of generalized blocks. 

Next, we consider the DoF allocation within each generalized block. We introduce the definition of resource chunk as follows.

\begin{Definition}[Resource chunk] A resource chunk in block $\tilde{B}_{u,v}$ is the smallest subset of the utilized $uB$ DoFs in the block satisfying the following conditions: 1) At least one update is delivered using the DoF in each chuck; 2) During the transmission of the update(s) satisfying 1), there does not exist any other partially transmitted update in the system. 
\end{Definition}

\begin{Lemma}\label{lemma:chunk}
There are two types of resource chunks in each  block $\tilde{B}_{u,v}$: Type-1: A chunk consisting of $B$ DoFs allocated to the transmission of a single update. Type-2: A chunk consisting of $2B$ DoFs allocated to the transmission of two updates (denoted as $\wv_1$ and $\wv_2$) with $t_2< t_1< d_1\leq d_2$. Besides, the transmission time of $\wv_1$ is always equal to $i+1$.
\end{Lemma}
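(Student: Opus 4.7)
The plan is to decompose each block $\tilde{B}_{u,v}$ into chunks by walking through the sequence of successfully delivered updates and applying Lemma~\ref{lemma:prioritize_next_delivery} to each consecutive pair. First I would invoke property iv) of $\Pi_0$ (together with the observation made right before Lemma~\ref{lemma:prioritize_next_delivery}) to conclude that at any time slot there are at most two partially transmitted updates in the system, one per user. Therefore for any two consecutive deliveries $\wv_1,\wv_2$ with generation times $t_1,t_2$ and delivery times $d_1\leq d_2$, exactly one of the two scenarios of Lemma~\ref{lemma:prioritize_next_delivery} must occur.

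Next I would treat the two scenarios separately. In scenario 1 ($t_1<d_1\leq t_2<d_2$), the transmission windows of $\wv_1$ and $\wv_2$ are disjoint, so during $[t_1,d_1]$ the update $\wv_1$ is the only partially transmitted update in the system. Hence the $B$ DoFs allocated to $\wv_1$ satisfy both conditions of the chunk definition and form a minimal such set, giving a Type-1 chunk. In scenario 2 ($t_2<t_1<d_1\leq d_2$, with all DoF on $\wv_1$ during $[t_1,d_1]$), throughout the entire window $[t_2,d_2]$ at least one of $\{\wv_1,\wv_2\}$ is partially transmitted, so no strict subset of the $2B$ DoFs used by these two updates can satisfy the second condition in the chunk definition; the combined $2B$ DoFs thus form a Type-2 chunk. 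Iterating this pairwise argument along the sequence of deliveries exhausts every utilized DoF in $\tilde{B}_{u,v}$ into disjoint chunks of one of these two types.

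For the final assertion about $\wv_1$ in a Type-2 chunk, I would use that by Lemma~\ref{lemma:prioritize_next_delivery} the transmitter dedicates all $M$ DoFs to $\wv_1$ throughout $[t_1,d_1]$, and by property iii) of $\Pi_0$ the update is delivered as soon as $B$ linearly independent equations accumulate, so $d_1-t_1+1=\lceil B/M\rceil$. Under the assumption $\frac{M}{B}\in[\frac{j}{ij+1},\frac{j+1}{(j+1)i+1})$ we have $\frac{B}{M}\in(i+\frac{1}{j+1},i+\frac{1}{j}]\subset(i,i+1]$, so $\lceil B/M\rceil=i+1$, establishing the claim.

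The main obstacle is the bookkeeping at chunk boundaries: one must ensure that after consuming a Type-2 chunk containing the pair $(\wv_1,\wv_2)$, the analysis resumes cleanly starting from the next delivered update $\wv_3$, and that $\wv_3$ cannot ``pull back'' into the previous chunk. This requires re-applying Lemma~\ref{lemma:prioritize_next_delivery} to $(\wv_2,\wv_3)$ and verifying that the ``no third partially transmitted update'' condition holds, which follows because by the time $\wv_2$ is delivered at $d_2$ any new update intended for user $\wv_1$'s recipient can only have been generated at some $t_3\geq d_1$, placing $\wv_3$ in a fresh chunk. Once this boundary bookkeeping is handled, the dichotomy into Type-1 and Type-2 chunks is complete.
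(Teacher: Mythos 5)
Your proposal is correct and follows the same route the paper intends: the paper's own proof is a one-line appeal to the structure in Lemma~\ref{lemma:prioritize_next_delivery}, and you simply fill in the details — the scenario dichotomy yielding the two chunk types, the boundary argument that a third delivery cannot overlap a Type-2 chunk (since all DoF is already committed to $\wv_1$ on $[t_1,d_1)$), and the computation $d_1-t_1+1=\lceil B/M\rceil=i+1$ from $\frac{B}{M}\in(i+\frac{1}{j+1},\,i+\frac{1}{j}]$. No gaps worth noting.
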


Lemma~\ref{lemma:chunk} can be shown based on the structure depicted in Lemma~\ref{lemma:prioritize_next_delivery}.

\begin{Definition}[Type-2 resource chunk re-allocation] A re-allocated Type-2 resource chunk will allocate the DoFs in the original resource chunk to users in the order of their original updating times. Moreover, all future updates delivered after this chunk are replaced by updates with the same generation time but intended for the other user.
\end{Definition}

\begin{remark}\label{reamrk:lemma-alternating_delivery-1-chunk}
Based on Lemma~\ref{lemma:update_higerAoI}, the first delivered update using a Type-2 resource chunk, i.e., the update delivered at time slot $d_1$, is always intended for the user with higher AoI. After re-allocation, such user should receive the first update as well.  

We note that when all Type-2 resource chunks are re-allocated, the corresponding updating schemes becomes an alternating updating policy in $\Pi_1$. Thus, to show that the summed average AoI under any policy in $\tilde{\Pi}_0$ is lower bounded by the same quantity $\Delta_{\min}$ suggested in Theorem~\ref{lemma:thm2-case3:seg1}, it suffices to show that the re-allocation of Type-2 resource chunks always improve the summed average AoI.
\end{remark}

As we have shown in the proof of Lemma~\ref{lemma:thm2-case5:minAoI}, we first note that the minimum possible transmission time for one update is $i+1$ time slots under any updating policy in $\Pi_0$.

\begin{Lemma}\label{lemma:thm2-chunk_i+1}
If after the re-allocation of a Type-2 resource chunk, the transmission time of the first delivered update is $i+1$, then the re-allocation always improves the AoI under the original resource chunk allocation.
\end{Lemma}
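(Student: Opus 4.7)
The plan is to compare the summed AoI under the original Type-2 allocation with that under its re-allocation directly on the chunk, and use the mirror swap of post-chunk deliveries to cancel out everything after the chunk ends. Let user $A$ denote the user who receives the first delivery (the one with higher AoI at the end of time slot $t_2-1$, by Lemma~\ref{lemma:update_higerAoI}), let $\alpha := \delta_A(t_2-1)$, and write $\Delta t := t_1 - t_2 \geq 1$. Under the original allocation, user $A$ is reset to age $i+1$ at $d_1 = t_1+i$ and user $B$ is reset to age $d_2 - t_2 + 1$ at $d_2$; under the re-allocation, by the lemma's hypothesis the first update (generated at $t_2$, intended for user $A$) is delivered at $t_2+i$ with age $i+1$, and the second update (generated at $t_1$, intended for user $B$) is delivered at $d_2$ with age $d_2 - t_1 + 1$.

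First I would show that the contributions after the chunk vanish in the comparison. A direct check yields $\delta_A^{\text{re-alloc}}(d_2) = \delta_B^{\text{orig}}(d_2)$ and $\delta_B^{\text{re-alloc}}(d_2) = \delta_A^{\text{orig}}(d_2)$; since every post-chunk update under the re-alloc scheme is, by construction, the same as the corresponding original delivery but intended for the other user, the two users' AoIs remain swapped for every $t > d_2$, so $\delta_A^{\text{re-alloc}}(t) + \delta_B^{\text{re-alloc}}(t) = \delta_A^{\text{orig}}(t) + \delta_B^{\text{orig}}(t)$ for every $t \geq d_2$. Hence the total summed-AoI difference equals the in-chunk difference on $[t_2, d_2]$.

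Next I would compute this in-chunk difference slot by slot. User $A$ differs from its original counterpart only on $[t_2+i, d_2]$: on each of the $\Delta t$ slots $[t_2+i, d_1-1]$ the original is higher by exactly $\alpha$ (re-allocation has already reset at $t_2+i$), while on each of the $d_2 - d_1 + 1$ slots $[d_1, d_2]$ the original is lower by exactly $\Delta t$ (the original just reset, whereas the re-allocation is still climbing from its earlier reset); user $B$ differs only at $d_2$, where the re-allocation delivery is $\Delta t$ fresher. Summing gives
\begin{align}
S^{\text{orig}} - S^{\text{re-alloc}} \;=\; \Delta t\bigl(\alpha - (d_2 - d_1)\bigr).
\end{align}
To bound the right-hand side, Lemma~\ref{lemma:thm2-case5:minAoI} yields $\alpha \geq i+1$ since the minimum AoI of any user in any slot is $i+1$, and a DoF-counting argument using property iii) of $\Pi_0$ pins the chunk length to exactly $L = \lceil 2B/M \rceil$, which equals $2i+1$ for $j \geq 2$ and $2i+2$ for $j=1$; combined with $\Delta t \geq 1$ this forces $d_2 - d_1 = L - 1 - i - \Delta t \leq i$. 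Therefore $\alpha - (d_2 - d_1) \geq 1$, and the difference is strictly positive.

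The main obstacle is the off-by-one bookkeeping in the slot-by-slot AoI comparison, where one must fix the convention (AoI measured at slot-end, resets occurring in delivery slots) and verify the boundary cases at $t_2+i$, $d_1$, and $d_2$ individually, together with the verification that the re-allocation is a legitimate policy in $\Pi_0$: the second re-allocated update (generated at $t_1$) must not start being transmitted before its generation time, i.e., $\Delta t \leq i$. This last inequality follows because $\wv_2$ receives at most $\Delta t \cdot M$ DoFs during $[t_2, t_1-1]$ without being delivered, forcing $\Delta t M < B$, while $B/M \leq i + 1/j$ in the present regime; hence $\Delta t < B/M \leq i + 1$ and $\Delta t \leq i$.
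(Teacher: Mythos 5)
Your proposal is correct and follows essentially the same route as the paper's proof: both cancel the post-chunk contribution via the mirror swap of subsequent deliveries, reduce the comparison to an in-chunk difference equal to (generation-time gap)$\times$(initial AoI of the first-updated user minus the delivery-time gap), and bound this below by zero using Lemma~\ref{lemma:thm2-case5:minAoI}. The only (harmless) differences are bookkeeping: you bound the delivery gap via the chunk length $\lceil 2B/M\rceil$ while the paper bounds the second update's transmission time by $i+2$ and invokes a stronger lower bound on the initial AoI, and you additionally verify the feasibility condition $\Delta t\leq i$, which the paper leaves implicit.
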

\begin{proof}
Let $\delta_0^{(1)}$ and $\delta_0^{(2)}$ be the initial AoI of user 1 and user 2 at the beginning of this Type-2 chunk. Without loss of generality, we assume $\delta_0^{(2)}>\delta_0^{(1)}$. Then, user 2 will be updated first. Assume the generation times and delivery times of the two updates under the original resource chunk allocation are $t_1<t_2<d_2\leq d_1$. Then, users 1 and 2 will be updated at times $d_1$ and $d_2$, respectively. Besides, based on Lemma~\ref{lemma:chunk}, we have $d_2-t_2=i+1$. After the re-allocation, the DoF will be utilized to update user 2 first, starting at time $t_1$. Denote the corresponding delivery time as $d'_1$. Then, by the assumption that the transmission time of the first delivered update after reallocation is $i+1$, we have $d'_1-t_1=i+1$. Let $t'_2$ be the generation time of the second delivered update after re-allocation. Then, $t_1<t_2\leq d'_1\leq t'_2\leq d_2<d_1$.

\begin{figure}[t]
	\centering
	\includegraphics[width=2.7in]{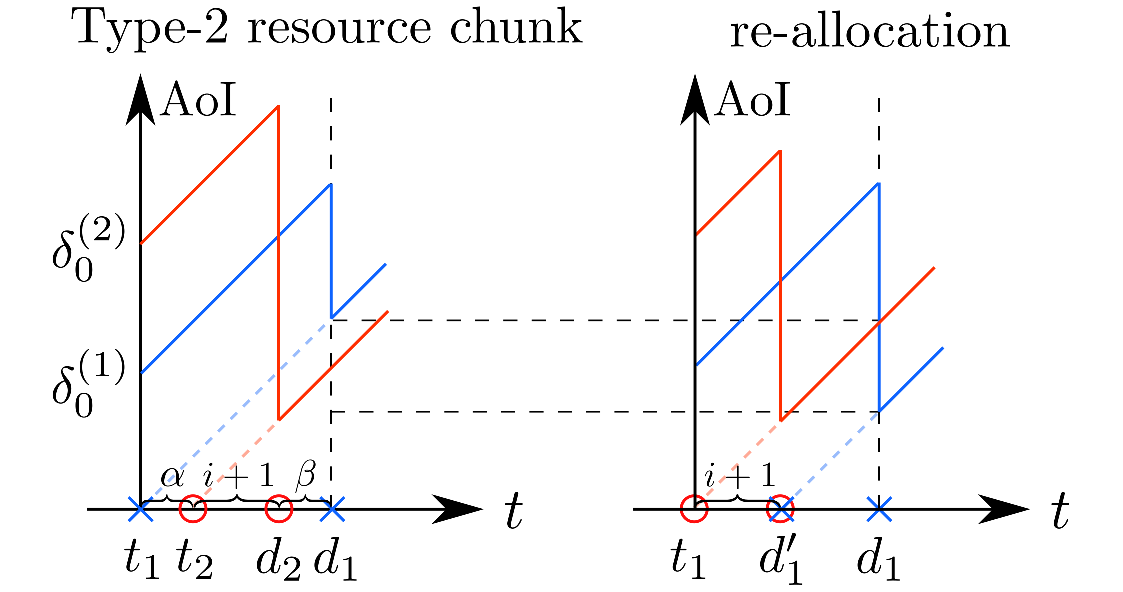}
	\captionof{figure}{Re-allocation of Type-2 resource chunk.} 
	\label{fig:type2-chunk}
\end{figure}

For clarity, we define $\alpha:=t_2-t_1\leq i+1$ and $\beta:=d_1-d_1\leq d'_2-t'_2\leq  i+2$. As illustrated in Fig.~\ref{fig:type2-chunk}, the AoI evolution before $d'_1$ stays the same after re-allocation. Besides, at time $d_1$, under the original allocation, the AoIs at users 1 and 2 are $d_1-t_1$, $d_1-t_2$ respectively. After re-allocation, the AoIs become $d_1-t'_2$, $d_1-t_1$, respectively. Since the DoF allocation after this resource chunk will be switched between the two users, user 2 will be updated next after the re-allocation. Since $d_1-t'_2\leq d_1-t_2$. the AoI evolution after $d_2$ will be improved after the re-allocation. It remains to show that the AoI over time slots $[d'_1:d_1-1]$ after the re-allocation is strictly improved.

Since the AoI evolution of user 1 stays the same before $d_1$ and the difference solely depends on the AoI of user 2, the age difference between that under the Type-2 resource chunk allocation and the re-allocation is
\begin{align*}
&\left(\sum_{\ell=1}^{\alpha}(\delta_0^{(2)}+i+\ell)+\sum_{\ell=1}^{\beta-1}(i+\ell)\right)-\sum_{\ell=1}^{\alpha+\beta-1}(i+\ell) \nonumber \\
&=\sum_{\ell=1}^{\alpha}\delta_0^{(2)}-\sum_{\ell=1}^{\beta-1}\alpha=\alpha(\delta_0^{(2)}-\beta+1)\geq 0,
\end{align*}
where the last inequality is based on the fact that the minimum possible AoI of the user with higher AoI is $i+1+\lceil \frac{2}{j}\rceil$ by Lemma~\ref{lemma:thm2-case5:minAoI}, and the fact that $\beta\leq i+2$.
\end{proof}

For a generalized block $\tilde{B}_{u,v}$, after all Type-2 resource chunks are re-allocated, it becomes a block $B_{u,v}$, which can be partitioned into segments as in Sec.~\ref{sec:converse-thm2}. Recall that each segment consists of either $ij+1$ or $(j+1)i+1$ time slots. Besides, for any policy in $\Pi_1$, the transmission time of any update is either $i+1$ or $i+2$.

Note that the transmission times of all updates in the segments of length $ij+1$ in $B_{u,v}$ are $i+1$. Then, according to Lemma~\ref{lemma:thm2-chunk_i+1}, under the original optimal updating scheme, it must not contain any Type-2 resource chunk before the first updating time slot in the next segment. The only possible segments that contain Type-2 resource chunks under the original optimal policy are segments of length $(j+1)i+1$ where the transmission time of the first update is $i+2$ and that of any other update is exactly $i+1$. Thus, under the original optimal policy, the Type-2 resource chunk can only be used to transmit the first two updates in the segment. In the following, we will show that the lower bound $\Delta_{\min}$ suggested in Lemma~\ref{lemma:thm2-case3:seg1} still holds for such segments.

\begin{table*}
\caption{Minimum AoI pattern of the first two updates for segments of length $(j+1)i+1$, $j\geq 2$. The first delivered updates is generated at the end of time slot $(U_m-1)i+m+\gamma$, and the two updates are delivered at $U_m i+m+\gamma$, $(U_m+1)i+m$, respectively.}
\centering
{\small
\begin{tabular}{|c|c|c|c|c|}\hline
Time slot  & $(U_m-1)i+m$    & $\cdots$ & $(U_m-1)i+m+\gamma-1$  & $(U_m-1)i+m+\gamma$    \\ \hline
Minimum AoI pattern & $(i+2,2i+2)$  & $\cdots$ & $(i+\gamma+1,2i+\gamma+1)$ & $(i+\gamma+2,2i+\gamma+2)$ \\  \hline\hline
Time slot   & $(U_m-1)i+m+\gamma+1$  & $\cdots$ & $U_m i+m+\gamma-1$   & $U_m i+m+\gamma$    \\ \hline
Minimum AoI pattern  & $(i+\gamma+3,2i+\gamma+3)$ & $\cdots$ & $(2i+\gamma+1,3i+\gamma+1)$ & $(i+1,2i+\gamma+2)$ \\  \hline\hline
Time slot    & $U_m i+m+\gamma+1$  & $\cdots$ & $(U_m+1)i+m-1$   & $(U_m+1)i+m$   \\ \hline
Minimum AoI pattern  & $(i+2,2i+\gamma+3)$ & $\cdots$ & $(2i-\gamma,3i+1)$  & $(2i-\gamma+1,2i+2)$  \\  \hline
\end{tabular}
}
\label{tab:new-1}
\end{table*}

\begin{table*}
\caption{Minimum AoI pattern of for segments of length $2i+1$. The first delivered update is generated at the end of time slot $(U_m-1)i+m+\gamma$, and the two updates are delivered at $U_m i+m+\gamma$, $(U_m+1)i+m$, respectively.}
\centering
{\small
\begin{tabular}{|c|c|c|c|c|}\hline
Time slot  & $(U_m-1)i+m$    & $\cdots$ & $(U_m-1)i+m+\gamma-1$  & $(U_m-1)i+m+\gamma$    \\ \hline
Minimum AoI pattern & $(i+2,2i+3)$  & $\cdots$ & $(i+\gamma+1,2i+\gamma+2)$ & $(i+\gamma+2,2i+\gamma+3)$ \\  \hline\hline
Time slot   & $(U_m-1)i+m+\gamma+1$  & $\cdots$ & $U_m i+m+\gamma-1$   & $U_m i+m+\gamma$    \\ \hline
Minimum AoI pattern  & $(i+\gamma+3,2i+\gamma+4)$ & $\cdots$ & $(2i+\gamma+1,3i+\gamma+2)$ & $(i+1,2i+\gamma+2)$ \\  \hline\hline
Time slot    & $U_m i+m+\gamma+1$  & $\cdots$ & $(U_m+1)i+m-1$   & $(U_m+1)i+m$   \\ \hline
Minimum AoI pattern  & $(i+2,2i+\gamma+3)$ & $\cdots$ & $(2i-\gamma,3i+1)$  & $(2i-\gamma+1,2i+2)$  \\  \hline
\end{tabular}
}
\label{tab:new-2}
\end{table*}

\begin{Lemma}\label{lemma:thm2-chunk-lowerbound}
For segments of length $(j+1)i+1$ where the first two updates are transmitted using a Type-2 resource chunk under the original optimal policy, the summed average AoI is lower bounded by $\Delta_{\min}=4i+1+\frac{2i+1}{ij+1}$ if $j\in\Zb_{\geq 2}$, and by $\Delta_{\min}=4i+3$ if $j=1$.
\end{Lemma}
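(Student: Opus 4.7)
The plan is to lower bound the summed AoI directly by slot-by-slot bookkeeping, comparing the Type-2 chunk allocation against the alternating baseline already analyzed in Table~\ref{tab:thm2:2}. Consider a segment $[s, s+(j+1)i]$ of length $(j+1)i+1$ whose first two delivered updates $\wv_1, \wv_2$ are carried by a Type-2 chunk, so by Lemma~\ref{lemma:chunk} we have $t_2 < t_1 < d_1 \leq d_2$ with $d_1 - t_1 = i+1$; by Lemma~\ref{lemma:update_higerAoI}, $\wv_1$ updates the user with higher AoI (user~2, WLOG) and $\wv_2$ updates user~1. Set $\alpha := t_1 - t_2 \geq 1$ and $\beta := d_2 - d_1 \geq 0$; the DoF budget of the chunk, namely $2B$ DoFs spread over $d_2 - t_2 + 1$ slots, ties $\alpha$ and $\beta$ together.

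First, I would pin down the minimum initial AoI pattern at slot $s$. Mirroring the argument leading to Table~\ref{tab:thm2:2} (if the first in-segment update were generated at $(U_m - 1)i + m$, the segment length would collapse to $ij + 1$, contradicting our standing assumption of a $(j+1)i+1$-length segment), the minimum initial pattern is $(i+2, 2i+2)$ for $j \geq 2$ and $(i+1, 2i+2)$ for $j=1$. Second, I would decompose the summed segment AoI into a ``chunk region'' $[s, d_2]$ and a ``tail'' $(d_2, s + (j+1)i]$. In the tail, Lemma~\ref{lemma:thm2-chunk_i+1} precludes any further Type-2 chunk, so the remaining $j-1$ updates follow the standard alternating pattern and their AoI contribution can be read off from Table~\ref{tab:thm2:2}. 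In the chunk region, user~2's AoI resets at $d_1$ to $i+1$, while user~1's AoI resets at $d_2$ to $d_2 - t_2 = i + 1 + \alpha + \beta \geq i+2$.

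Third, I would compare slot by slot with the alternating baseline. The Type-2 chunk ``saves'' exactly one unit of age at user~2 at slot $d_1$ (relative to the baseline reset value $i+2$), but pays at least $\alpha + \beta \geq 1$ extra units at user~1 at slot $d_2$, and this deficit persists over the subsequent $\beta + (i+1)$ slots until user~1 is refreshed again in the tail. A direct summation, using Lemma~\ref{lemma:ineq} to pool the chunk-region and tail averages, shows that the net slotwise age difference is nonnegative, yielding the bound $\Delta_{\min} \cdot ((j+1)i+1)$. The main obstacle will be the bookkeeping in the chunk region when $\alpha$ and $\beta$ take values that shift $d_1$ and $d_2$ nontrivially within the segment, especially verifying that the additional DoFs consumed to deliver $\wv_2$ at $d_2$ correctly account for the $\alpha + \beta$ surplus and cannot be ``borrowed'' to accelerate $\wv_1$ below $i+1$; the $j=1$ case collapses to a short enumeration of the two feasible configurations of two deliveries within $2i+1$ slots and can be checked by hand.
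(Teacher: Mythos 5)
Your overall strategy is the same as the paper's: force the first in-segment update to be generated one slot before the segment starts (the same contradiction argument), split the segment into the Type-2 chunk region plus an alternating tail read off from Table~\ref{tab:thm2:2}, and do a slot-by-slot comparison against that alternating baseline. The paper parameterizes the chunk by a single offset $\gamma$ (the generation time of the later-generated, first-delivered update), which in your notation is $\alpha-1$, with $\alpha+\beta=i+1$ forced by the segment length; it then tabulates the AoI pattern explicitly (Tables~\ref{tab:new-1} and~\ref{tab:new-2}) and computes the excess of the original allocation over the re-allocated one to be $\gamma^2+i\gamma\geq 0$ on the prefix, with the end-of-prefix pattern componentwise dominating the baseline's so that the tail is also no better.

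The genuine problem is your ledger in the third step, which, taken at face value, does not establish the inequality and in places has the wrong magnitudes. First, the chunk does not save ``exactly one unit at $d_1$'': relative to the baseline (which resets the high-AoI user to $i+2$ at the \emph{earlier} slot corresponding to $\alpha=1$), the chunk saves $\alpha$ units per slot over each of the $\beta+1$ slots $[d_1,d_2]$, i.e.\ $\alpha(\beta+1)$ in total, which for $\alpha\geq 2$ strictly exceeds the $\alpha+\beta=i+1$ penalty you charge at $d_2$ (e.g.\ $\alpha=2$, $i=3$: savings $6$ versus penalty $4$). What rescues the bound is a term your ledger omits entirely: during the $\alpha-1$ slots \emph{before} $d_1$, the baseline has already reset the high-AoI user while the Type-2 chunk has not, costing $2i$ per slot, i.e.\ $2i(\alpha-1)$ in total; only with this term does the net difference come out to $(\alpha-1)(\alpha-1+i)\geq 0$. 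So the sign of your conclusion happens to be right, but the mechanism you describe is not the one that makes it true, and a proof built on that ledger would fail for $\alpha\geq 2$. Since you defer to ``a direct summation,'' the plan is salvageable — the summation is exactly what the paper's tables carry out — but the accounting must be redone. Two smaller points: for $j=1$ the minimum pattern at the first slot of the segment is $(i+2,2i+3)$, not $(i+1,2i+2)$ (you dropped the one-slot increment), and the paper additionally verifies that the end-of-prefix pattern $(2i-\gamma+1,2i+2)$ dominates the baseline's $(i+1,2i+2)$, which is what licenses bounding the tail by Table~\ref{tab:thm2:2}; your sketch should state this dominance explicitly rather than only asserting the tail ``follows the standard alternating pattern.''
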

\begin{proof}
\emph{1) $j\geq 2$.} Consider a segment of length $(j+1)i+1$ by $U_m$ that consists of time slots $[(U_m-1)i+m:(U_m+j)i+m]$. Following similar arguments as in the proof of Lemma~\ref{lemma:thm2-case3:seg1}, we can show that update $U_m$ is generated at time $(U_m-1)i+m-1$ instead of $(U_m-1)i+m$. Note that under the original resource allocation, within the Type-2 resource chunk, update $U_m$ is delivered after update $U_m+1$. Assume $U_m+1$ is generated at time $(U_m-1)i+m+\gamma$, $\gamma\in[0:i]$. It will consume all DoF until it is delivered at time $(U_m-1)i+m+\gamma+i$. Then the minimum AoI pattern of the first two updates over the segment can be specified (cf.~Table~\ref{tab:new-1}). At the end of time slot $(U_m+1)i+m$, the minimum AoI pattern is $(2i-\gamma+1,2i+2)$ under the original resource allocation and $(i+1,2i+2)$ under the re-allocation (cf.~Table~\ref{tab:thm2:2}). Since $2i-\gamma+1\geq i+1$ and the remaining updating follows the alternating updating rules, it suffices to show that the summed AoI over $[(U_m-1)i+m:(U_m+1)i+m]$ under the original allocation is greater than that under the re-allocation. In fact, the summed AoI over  $[(U_m-1)i+m:(U_m+1)i+m]$ under the original allocation is $8i^2+9i+3+\gamma^2+3$ (cf.~Table~\ref{tab:new-1}) and that under the re-allocation  (cf.~Table~\ref{tab:thm2:2}) is $8i^2+9i+3$, which completes the proof.

\emph{2) $j=1$.} The Type-2 resource chunk has length $2i+1$. According to Remark~\ref{remark:min}, the minimum AoI pattern is $(i+1,2i+2)$ instead of $(i+2,2i+1)$. Therefore, the minimum AoI pattern at the first time slot of the segment is $(i+2,2i+3)$ instead of $(i+2,2i+2)$ (cf.~\ref{tab:new-2}). Similar to 1), we assume that update $U_m+1$ is generated at time $(U_m-1)i+m+\gamma$, $\gamma\in[0:i]$, and the AoI pattern with the original Type-2 resource allocation can be specified (cf.~Table~\ref{tab:new-2}). The corresponding summed average AoI is $4i+3+\frac{\gamma^2+(i+1)\gamma}{2i+1}$, which is greater than or equal to $4i+3$, the summed average AoI under the re-allocation.
\end{proof}

In summary, the summed average AoI of any segment of the generalized block $\tilde{B}_{u,v}$ is lower bounded by $\Delta_{\min}$. Together with Lemma~\ref{lemma:idling}, we can show that the summed average AoI of the generalized block $\tilde{B}_{u,v}$ is also bounded by $\Delta_{\min}$. Therefore, the summed average AoI under any policy in $\tilde{\Pi}_{0}$ is lower bounded by $\Delta_{\min}$.

\subsection{Converse of Theorem~\ref{result2_new} (iv)}\label{app:result2_iv}
In this subsection, we provide a proof of the converse of Theorem~\ref{result2_new} (iv). I.e., for the case when $N\leq \frac{M}{2}$, $\frac{j}{ij+1}\leq \frac{N}{B}<\frac{j+1}{(j+1)i+1}$ where $i=\lceil \frac{B}{N}\rceil -1$ and $j=\lfloor \frac{1}{B/N-i}\rfloor$, we aim to show that the minimum summed average AoI is lower bounded by $3i+1+\frac{i+1}{ij+1}$ if $j\geq 1$, and by $3i-1$ if $j=0$. 
We adopt a similar approach as in the proof of the converse of Theorem~\ref{result2_new} (iii). For the sake of completeness and brevity, we provide key steps and omit proofs of lemmas if they are essentially the same as their counterparts in Section~\ref{sec:converse-thm2}.

Since the number of transmit antennas $M$ is more than the total number of antennas at the two users, each user is able to decode $N$ symbols simultaneously and the minimum summed average AoI is twice the minimum average AoI of the single-user system $(1,N,N,B)$. Thus, in the following, we will focus on the minimum average AoI of the single user system $(1,N,N,B)$.

Since the optimal scheme is guaranteed to lie within $\Pi_0$ by Theorem~\ref{thm:deterministic_policy}, any policy $\pi\in\Pi_0$ in $(1,N,N,B)$ systems can be represented as a sequence of blocks, where each block $B_{u,v}$ consists of $v$ idling time slots followed by $l_u:=\lceil uB/N \rceil$ time slots, during which the transmitter exhausts all DoF to send $u$ updates. When $v=0$, we express $B_{u,0}$ as $B_u$.

We first study a work-conserving updating scheme $\pi\in\Pi_0$, under which the transmitter exhausts its DoF at each time slot. 
\begin{Lemma}\label{lemma:converse:result3:1}
For the $(1,N,N,B)$ system with $\frac{j}{ij+1}\leq\frac{N}{B}<\frac{j+1}{(j+1)i+1}$, $i,j\in \Nb$, we have
\begin{itemize}
\item[(i)] the minimum AoI pattern in any time slot is $i+1$;
\item[(ii)] under policy $\pi$, the duration between two consecutive delivered updates is either $i$ or $i+1$ time slots.
\end{itemize}
\end{Lemma}
The proof of Lemma~\ref{lemma:converse:result3:1} is similar to that in Lemma~\ref{lemma:thm2-case5:minAoI} and Lemma~\ref{lemma:thm2-duration} and omitted.

Label the delivered updates starting at time 1 in the order of their delivery time. Let $U_m$ be the index of the $m$-th update whose delivery time is $i+1$ time slots after the previous delivered update. By the DoF constraint, we have
\begin{align}
&[i(U_m-1)+m-1]N\geq (U_m-1)B, \\
&[iU_m+m-1]N\geq U_m B.
\end{align}
Since $\frac{j}{ij+1}\leq\frac{N}{B}<\frac{j+1}{(j+1)i+1}$, $i,j\in \Nb$, solving the above inequalities gives
\begin{align}
j\hspace{-0.03in}-\hspace{-0.03in}1\hspace{-0.03in}<\hspace{-0.03in}\frac{N/B}{1\hspace{-0.03in}-\hspace{-0.03in}iN/B}\hspace{-0.03in}-\hspace{-0.03in}1\hspace{-0.03in}<\hspace{-0.03in}U_{m+1}\hspace{-0.03in}-\hspace{-0.03in}U_m\hspace{-0.03in}<\hspace{-0.03in}\frac{N/B}{1\hspace{-0.03in}-\hspace{-0.03in}iN/B}\hspace{-0.03in}+\hspace{-0.03in}1\hspace{-0.03in}<\hspace{-0.03in}j\hspace{-0.03in}+\hspace{-0.03in}2.
\end{align}
Since $U_m$ and $U_{m+1}$ are integers, we have
\begin{align}
U_{m+1}-U_m=j\quad \mbox{or} \quad j+1.   \label{eqn:converse:result3:1}
\end{align}

Partition the time axis into segments by the delivery time of updates $\{U_m-1\}_{m=2}^{\infty}$, i.e., $[1:ij+1]$, $\ldots$, $[(U_m-1)i+m:(U_{m+1}-1)i+m]$, $\ldots$. By Eqn.~(\ref{eqn:converse:result3:1}), the segment length is either $ij+1$ or $(j+1)i+1$. An example of the definition of $U_m$ and the segments for the $(1,7,7,12)$ system is shown in Fig.~\ref{fig:segment_ex2}. We point it out that all updates in Fig.~\ref{fig:segment_ex2} are intended for one user only while in Fig.~\ref{fig:segment_ex}, updates in the first row are intended for the first user and updates in the second row are intended for the second user.

\begin{figure}[t]
	\centering
	\includegraphics[width=3.4in]{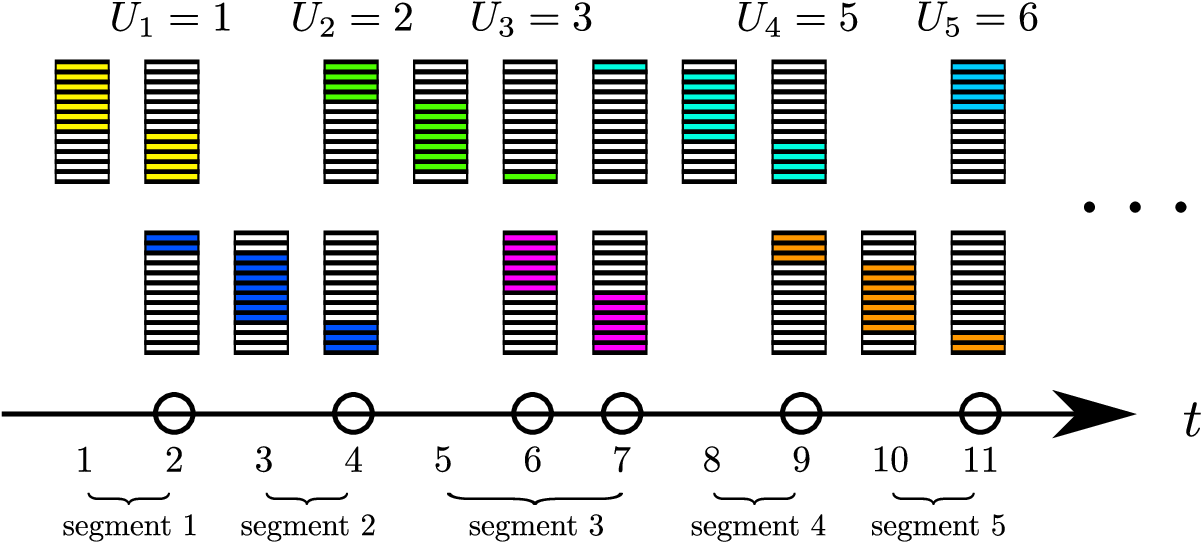}
	\captionof{figure}{Updating pattern in the $(1,7,7,12)$ system under $\pi_1$. We have $i=1$, $j=1$. Circles represent delivery times of updates. Since $i=1$, the segments can be obtained by tracking the updates whose delivery time is 2 time slots after the previous delivery time. We note that the length of each segment is either 2 (i.e., $ij+1$) or 3 (i.e., $(j+1)i+1$). }
	\label{fig:segment_ex2}
\end{figure}

\begin{Lemma}\label{lemma:converse:result3:2}

For the $(1,N,N,B)$ system with $\frac{j}{ij+1}\leq\frac{N}{B}<\frac{j+1}{(j+1)i+1}$, $i,j\in \Nb$, under policy $\pi$, the summed average AoI over segment $[(U_m-1)i+m:(U_{m+1}-1)i+m]$ is lower bounded by $\Delta_{\min}=\frac{3i+1}{2}+\frac{i+1}{2(ij+1)}$.

\end{Lemma}
\begin{table*}
\caption{Minimum AoI pattern for segments of length $ij+1$, $j\in\Nb$, $\ell\in[1:j-1]$. An update is delivered at the end of the time slots in the last column.}
\centering
{\small
\begin{tabular}{|c|c|c|c|c|}\hline
Time slot  & $(U_m-1)i+m$   & $\cdots$ & $U_m i+m-1$ & $U_m i+m$     \\ \hline
Minimum AoI pattern & $i+2$ & $\cdots$ & $2i+1$ & $i+1$ \\  \hline\hline
Time slot    & $(U_m+\ell-1)i+m+1$  & $\cdots$ & $(U_m+\ell)i+m-1$   &  $(U_{m}+\ell)i+m$   \\ \hline
Minimum AoI pattern & $i+2$ & $\cdots$ & $2i$ & $i+1$  \\ \hline
\end{tabular}
}
\label{tab:converse:result3:1}
\end{table*}
\begin{proof}
{\it 1)} The segment length is $ij+1$, i.e., $U_{m+1}-U_m=j$. The minimum AoI pattern over the segment can be specified in Table~\ref{tab:converse:result3:1} and the corresponding minimum average AoI is $\frac{3i+1}{2}+\frac{i+1}{2(ij+1)}$.

{\it 2)} The segment length is $(j+1)i+1$, i.e., $U_{m+1}-U_m=j+1$. By a similar argument as in Lemma~\ref{lemma:thm2-case3:seg1}, we can show that the minimum AoI pattern when $U_m$ is delivered, i.e., at the end of time slot $U_m i+m$, is $i+2$ instead of $i+1$. As a result, the minimum AoI pattern over segment of length $(j+1)i+1$ can be specified in Table~\ref{tab:converse:result3:2} and the minimum average AoI over the segment is $\frac{3i+1}{2}+\frac{3i+1}{2[(j+1)i+1]}$, which is greater than $\frac{3i+1}{2}+\frac{i+1}{2(ij+1)}$.

Combining the two cases, the average AoI over any segment is lower bounded by $\Delta_{\min}$.
\end{proof}
\begin{table*}
\caption{Minimum AoI pattern for segments of length $(j+1)i+1$, $j\in\Nb$, $\ell\in[2:j]$. An update is delivered at the end of the time slots in the last column.}
\centering
{\small
\begin{tabular}{|c|c|c|c|c|}\hline
Time slot  & $(U_m-1)i+m$    & $\cdots$ & $U_m i+m-1$  & $U_m i+m$    \\ \hline
Minimum AoI pattern & $i+2$  & $\cdots$ & $2i+1$ & $i+2$ \\  \hline\hline
Time slot   & $U_m i+m+1$  & $\cdots$ & $(U_m+1)i+m-1$   & $(U_m+1)i+m$    \\ \hline
Minimum AoI pattern  & $i+3$ & $\cdots$ & $2i+1$ & $i+1$ \\  \hline\hline
Time slot    & $(U_m+\ell-1)i+m+1$  & $\cdots$ & $(U_m+\ell)i+m-1$ & $(U_{m}+\ell)i+m$       \\ \hline
Minimum AoI pattern  & $i+2$ & $\cdots$ & $2i$ & $i+1$  \\  \hline
\end{tabular}
}
\label{tab:converse:result3:2}
\end{table*}

\begin{remark}\label{remark:converse:result3:1}
Note that for all $i,j\in \Nb$, the minimum average AoI over the first $\ell i+1$ time slots in each segment $[(U_m-1)i+m:(U_{m+1}-1)i+m]$ is monotonically decreasing in $\ell$ for $\ell\in[1:U_{m+1}-U_m]$.
\end{remark}
Next, we relate the AoI pattern under $\pi$ with block $B_u$. The updating policy $\pi$ over $[1,l_u]$ is identical to a block $B_u$ except that the DoF at time slot $l_u$ may not be exhausted. Partition $B_u$ into segments $[1:ij+1]$, $\ldots$, $[(U_{m_u-1}-1)i+m_u-1:(U_{m_u}-1)i+m_u-1]$ and a residue $[(U_{m_u-1}-1)i+m_u:l_u]$, where $m_u=\max\{m:U_m<l_u\}$. According to Remark~\ref{remark:converse:result3:1}, the average AoI of the residue is lower bounded by $\Delta_{\min}$.

For block $B_{u,v}$, the lower bound $\Delta_{\min}$ still holds as summarized in the following lemma.
\begin{Lemma}\label{lemma:converse:result3:3}
For the $(1,N,N,B)$ system with $\frac{j}{ij+1}\leq\frac{N}{B}<\frac{j+1}{(j+1)i+1}$, $i,j\in \Nb$, the summed average AoI over $B_{u,v}$ is lower bounded by $\Delta_{\min}=\frac{3i+1}{2}+\frac{i+1}{2(ij+1)}$. 
\end{Lemma}
\begin{proof}
Assume $B_{u,v}$ starts at time slot 1. Let $\delta_0$ be the AoI at time zero, $v+t$ be the first update delivery time and $v+l_u$ be the end of block $B_{u,v}$. Then, the existence of idling time slots affects the AoI evolution until $v+t$. By Lemma~\ref{lemma:converse:result3:1}, $\delta_0\geq i+1$ and $t\geq i+1$.

Let $A_1$ be the AoI increment induced by the idling time slots, as shown by the shaded area in Fig.~\ref{fig:idling2}. Denote $A_2$ be the summed AoI over $B_u$ when no idling time slot is present, corresponding to unshaded aread in Fig.~\ref{fig:idling2}. We have
\begin{align}
\frac{A_1}{v} &= \frac{1}{v}\left(\sum_{\ell=1}^v (\delta_0+\ell)+v\cdot t\right)=2i+2+\frac{v+1}{2} \nonumber \\
&\geq \frac{3i+1}{2}+\frac{i+1}{2(ij+1)}.
\end{align}
Let $\Delta_{B_{u,v}}$ be the summed average AoI over $B_{u,v}$. Then
\begin{align}
\Delta_{B_{u,v}}=\frac{A_1+A_2}{v+l_u}\geq \Delta_{\min},
\end{align}
where the last inequality follows from Lemma~\ref{lemma:ineq}.
\end{proof}

\begin{figure}[t]
	\centering
	\includegraphics[width=2.4in]{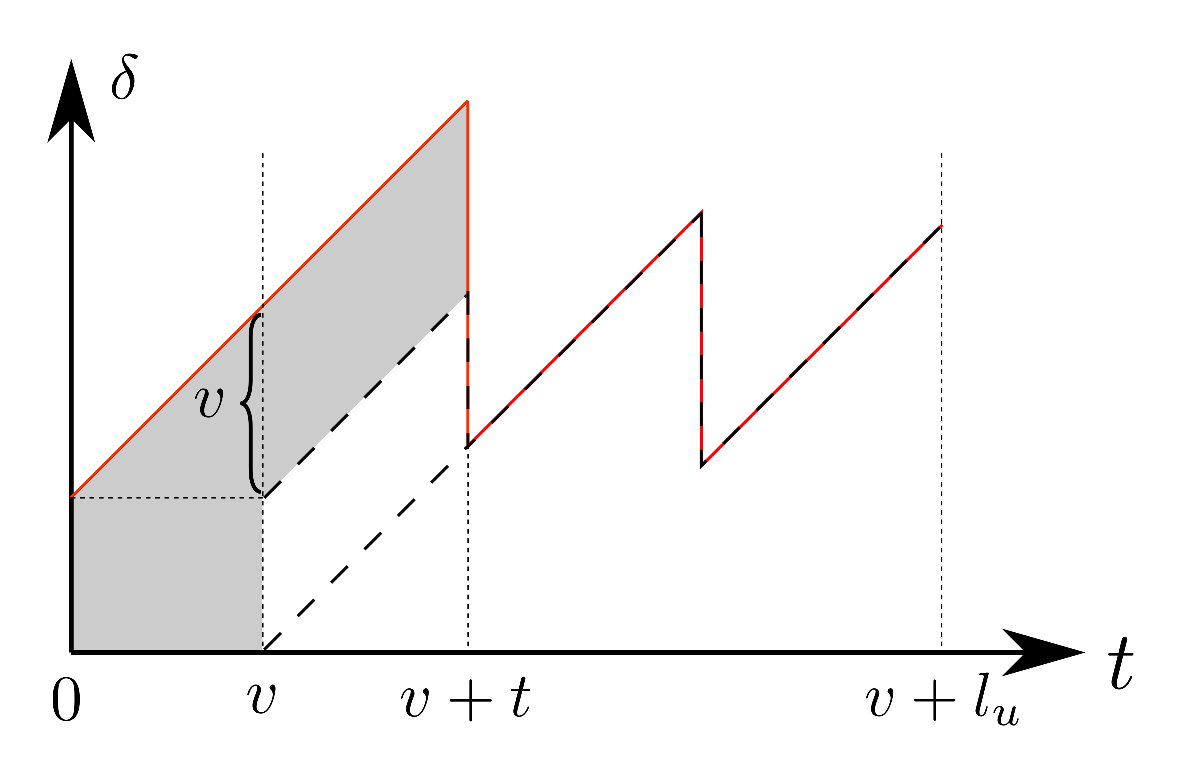}
	\captionof{figure}{AoI evolution over an extended block. The dashed line is the AoI evolution if idling period does not exist.} 
	\label{fig:idling2}
\end{figure}

We summarize the converse result in the following theorem.
\begin{Theorem}
For the $(2,M,N,B)$ system with $N\geq \frac{M}{2}$ and $M/B\in [\frac{j}{ij+1},\frac{j+1}{(j+1)i+1})$, $i,j\in\Nb$, the summed average AoI under any policy $\pi\in\Pi_0$ is lower bounded by $\Delta_{\min}=3i+1+\frac{i+1}{ij+1}$.
\end{Theorem}

\bibliographystyle{IEEEtran}
\bibliography{IEEEabrv,AgeInfo,ener_harv}

\begin{IEEEbiographynophoto}{Songtao Feng} received a B.S. degree in modern physics from the University of Science and Technology of China, Hefei, China in 2017. He is currently pursuing his PhD degree in electrical engineering at the Pennsylvania State University, University Park. His research interests include information theory, statistical learning, optimization and decision-making in wireless communications and networks.
\end{IEEEbiographynophoto}

\begin{IEEEbiographynophoto}{Jing Yang}(S'08-M'10) is an Associate Professor of Electrical Engineering at the Pennsylvania State University. She received her B.S. degree from the University of Science and Technology of China (USTC), and the M.S. and Ph.D. degrees from the University of Maryland, College Park, all in Electrical Engineering. She received the National Science Foundation CAREER award in 2015 and the WICE Early Achievement Award in 2020, and was selected as one of the 2020 N2Women: Stars in Computer Networking and Communications. She served as a Symposium/Workshop Co-chair for ICC 2021, INFOCOM 2021-AoI Workshop, WCSP 2019, CTW 2015, PIMRC 2014, a TPC Member of several conferences, and an Editor for IEEE TRANS. ON GREEN COMMUNICATIONS AND NETWORKING from 2017 to 2020. She is now an Editor for IEEE TRANS. ON WIRELESS COMMUNICATIONS. Her research interests lie in wireless communications and networking, information theory and statistical learning theory.

\end{IEEEbiographynophoto}

\end{document}